\newtheorem{fact}{Fact}    
\newtheorem{result}{Result}    
\newenvironment{proofof}[1]{\noindent{\bf Proof of #1:}}{\qed}
\newcommand{\defeq}{\stackrel{\mathsf{def}}{=}}
\newcommand{\ket}[1]{| #1 \rangle}
\newcommand{\bra}[1]{\langle #1 |}
\newcommand{\ketbra}[1]{| #1 \rangle \langle #1 |}
\newcommand{\braket}[2]{\langle #1 | #2 \rangle}
\newcommand{\trnorm}[1]{\left\| #1 \right\|_{{\mathrm{tr}}}}
\newcommand{\totvar}[1]{\left\| #1 \right\|_1}
\newcommand{\C}{{\mathbb C}}
\newcommand{\EQ}{{\mathrm{EQ}}}
\newcommand{\E}{{\mathbb{E}}}
\newcommand{\av}{{\mathbb{E}}}
\newcommand{\Tr}{{\mathsf{Tr}}}
\newcommand{\Good}{{\mathsf{Good}}}
\newcommand{\cH}{{\cal H}}
\newcommand{\cK}{{\cal K}}
\newcommand{\cM}{{\cal M}}
\newcommand{\tcM}{\tilde{{\cal M}}}
\newcommand{\cP}{{\cal P}}
\newcommand{\tcP}{{\tilde{\cal P}}}
\newcommand{\cX}{{\cal X}}
\newcommand{\cY}{{\cal Y}}
\newcommand{\cZ}{{\cal Z}}
\newcommand{\sQ}{{\mathsf{Q}}}
\newcommand{\sR}{{\mathsf{R}}}
\newcommand{\sD}{{\mathsf{D}}}
\newcommand{\ersp}{{\sf {ERSP}}}
\newcommand{\ANN}{{\sf {ANN}}}
\newcommand{\smp}{{\sf {SMP}}}
\newcommand{\pub}{{\sf {pub}}}
\newcommand{\tsigma}{\tilde{\sigma}}
\newcommand{\alice}{\sf {Alice}}
\newcommand{\bob}{\sf {Bob}}
\newcommand{\suppress}[1]{}
\newcommand{\mybox}[1]{
\begin{tabular}{|p{5.5in}|}
\hline 
\begin{center}
\begin{minipage}[t]{5.3in} 
#1 
\end{minipage}
\end{center}\\
\hline
\end{tabular}
}
\title{Optimal Direct Sum and Privacy Trade-off Results for Quantum and Classical
Communication Complexity}
\author{
Rahul Jain\inst{1}\thanks{Research supported in part by ARO/NSA
USA. Part of this work was done while the author was at U.C. Berkeley,
Berkeley, USA, where it was supported by Army Research Office (ARO),
North California, under grant DAAD 19-03-1-00082. Part of the work
done while the author was at Tata Institute of Fundamental Research,
Mumbai, India.  }
\and
Jaikumar Radhakrishnan\inst{2}\thanks{Part of the work done while the
author was at Toyota Technological Institute Chicago, USA. } 
\and
Pranab Sen\inst{3}\thanks{Work done while the author was at University
of Waterloo, Waterloo, Canada.} 
\institute{
University of Waterloo, Waterloo, ON, Canada, N2L
3G1. \email{rjain@cs.uwaterloo.ca} \and 
Tata Institute of Fundamental Research, Mumbai, India.
\email{jaikumar@tifr.res.in} \and
Tata Institute of Fundamental Research, Mumbai, India.
\email{pgdsen@tcs.tifr.res.in}
} 
}
\date{}
\begin{document}
\maketitle
\thispagestyle{empty}
\begin{abstract}
We show optimal {\em Direct Sum} result for the {\em one-way
entanglement-assisted quantum communication complexity} for any
relation $f \subseteq \cX \times \cY \times \cZ$. We show:
$$ \sQ^{1,\pub}(f^{\oplus m}) = \Omega(m \cdot \sQ^{1,\pub}(f)),$$
where $\sQ^{1,\pub}(f)$, represents the one-way entanglement-assisted
quantum communication complexity of $f$ with error at most $1/3$ and
$f^{\oplus m}$ represents $m$-copies of $f$.
Similarly for the {\em one-way public-coin classical communication complexity} we show:
$$ \sR^{1,\pub}(f^{\oplus m}) = \Omega(m \cdot \sR^{1,\pub}(f)),$$
where $\sR^{1,\pub}(f)$, represents the one-way public-coin
classical communication complexity of $f$ with error at most $1/3$.
We show similar optimal Direct Sum results for the {\em Simultaneous
Message Passing} ($\smp$) quantum and classical models. For two-party two-way
protocols we present optimal {\em Privacy Trade-off} results leading
to a Weak Direct Sum result for such protocols. 

\vspace{0.2cm}
We show our Direct Sum and Privacy Trade-off results via {\em message
compression} arguments.  These arguments also imply a new {\em round
elimination} lemma in quantum communication, which allows us to extend
classical lower bounds on the {\em cell probe complexity} of some
{\em data structure problems}, e.g. {\em Approximate Nearest Neighbor
Searching} ($\ANN$) on the Hamming cube $\{0,1\}^n$ and {\em Predecessor
Search} to the quantum setting.

\vspace{0.2cm}
In a separate result we show that Newman's~\cite{Newman91}
technique of reducing the number of public-coins in a classical
protocol cannot be lifted to the quantum setting. We do this by
defining a general notion of {\em black-box} reduction of prior
entanglement that subsumes Newman's technique.  We prove that such a
black-box reduction is impossible for quantum protocols by exhibiting
a particular one-round quantum protocol for the {\em Equality} function
where the black-box technique fails to reduce the amount of prior
entanglement by more than a constant factor.

\vspace{0.2cm} In the final result in the theme of message compression,
we provide an upper bound on the problem of {\em Exact Remote State
Preparation} ($\ersp$).
\end{abstract}

\section{Introduction}
Communication complexity studies the communication required to
solve a computational problem in a distributed setting. Consider a
relation $f \subseteq \cX \times \cY \times \cZ$. In a two-party
protocol to solve $f$, one party say $\alice$ would be given input~$x \in
\cX$, and the other party say $\bob$ would be given input~$y \in
\cY$. The goal for $\alice$ and $\bob$ would be to communicate and find an
element~$z \in \cZ$ that satisfies the relation, i.e., to find a~$z$
such that~$(x,y,z) \in f$. The protocols they follow could be
deterministic, randomized or quantum leading to different notions of
deterministic, randomized and quantum communication complexity. Please
refer to Sec.~\ref{sec:prelimcomm} for detailed exposition to various
models, definitions and notations related to classical and quantum
communication complexity.

\subsection{Direct Sum} 
Let us consider a natural question in communication complexity as
follows. Suppose $\alice$ and $\bob$ wish to solve several, say $k$,
instances of relation $f$ simultaneously, with constant success on
the overall output. A {\em Direct Sum} theorem states that the
communication required for accomplishing this would be at least $k$
time the communication required for solving single instance of $f$,
with constant success. It is a natural and fundamental question in communication
complexity.

Although they seem highly plausible, it is well-known that Direct Sum
results fail to hold for some settings of communication. For example
for the {\em Equality} function ($\EQ_n$), in which $\alice$ and $\bob$ need
to determine if their $n$-bit inputs are equal or not, its randomized
private-coins communication complexity, denoted $\sR(\EQ_n)$ does not
satisfy the Direct Sum property.  It is known that $\sR(\EQ_n) =
\Theta(\log n)$ whereas for testing Equality of~$k = \log n$
\footnote{All logarithms in this article are taken to base 2 unless
otherwise specified.} pairs
of~$n$-bit strings $\sR(\EQ_n^{\oplus k}) = O(k\log k + \log n) = O(\log n \log\log n)$
(see, e.g., \cite[Example~4.3, page~43]{KushilevitzN97}), where we
might expect~$\sR(\EQ_n^{\oplus k})= \Omega(k\log n) = \Omega(\log^2
n)$. Similarly, Shaltiel~\cite{Shaltiel03} gives an example for which
a related notion called the {\em Strong Direct Product} property fails
to hold for average case (i.e., distributional) communication
complexity. (A Strong Direct Product theorem would show that even with
probability of success that is exponentially small in~$k$, the cost
of solving $k$ instances of $f$, would be~$k$ times the cost of solving one instance.)

\subsubsection{Previous works:} 
Notwithstanding these examples, Direct Sum results have met with some
success in several settings of communication. It is straightforward to
show that $\sD^1(f):$ the deterministic one-way communication complexity of every
relation $f$, satisfies the Direct Sum property. It is also known that for
two-way protocols, for any function $f$, $\sD(f^{\oplus k}) = \Omega(k
\cdot \sqrt{\sD(f)})$ (see, e.g.,~\cite[Exercise~4.11, page~46]{KushilevitzN97}).  For
classical distributional complexity, under the uniform distribution on
the inputs, Chakrabarti, Shi, Wirth, and Yao~\cite{chak:direct} showed
Direct Sum in the one-way and $\smp$ models of communication.  They
introduced an important notion of {\em information cost} and obtained
their Direct Sum result via a message compression argument. The
notion of information cost has also been effectively used to obtain
two-way classical and quantum communication complexity bounds for
example see~\cite{Bar-YossefJKS04,JainRS03b}.  Jain, Radhakrishnan,
and Sen~\cite{jain:icalp03} extended the result of~\cite{chak:direct},
and provided a Direct Sum result for classical distributional
complexity under any product distribution on inputs, for bounded-round
two-way protocols. They~\cite{jain:icalp03} again used the information
cost approach however achieved their message compression via different
techniques (than~\cite{chak:direct}) involving the {\em Substate
Theorem}~\cite{jain:substate}.  Recently, Harsha, Jain, McAllester,
and Radhakrishnan~\cite{HarshaJMR07} have strengthened the Direct Sum
result of~\cite{jain:icalp03} by reducing to a large extent its
dependence on the number of rounds. The message compression results in
~\cite{jain:icalp03,HarshaJMR07} have been used in the work of
Chakrabarti and Regev~\cite{ChakrabartiR04} to show lower bounds on
the Approximate Nearest Neighbor problem ($\ANN$) in the cell probe model.
P\v{a}tra\c{s}cu and Thorup~\cite{PatrascuT06} also use Direct Sum
type results to prove better lower bounds for this problem.

\subsubsection{Our results:} 
In this paper we prove that for any relation $f \subseteq \cX \times
\cY \times \cZ$, the classical
public-coin one-way communication complexity $\sR^{1,\pub}(f)$ and the
one-way entanglement assisted quantum communication complexity $\sQ^{1,\pub}(f)$
satisfy the Direct Sum property. Similarly in the $\smp$ model
$\sR^{||, \pub}(f)$ and $\sQ^{||,\pub}(f)$ satisfy the Direct Sum
property. Our precise results are as follows.

\begin{theorem}[Direct Sum]
Let $f \subseteq \cX \times \cY \times \cZ$ be a relation. Let
$\epsilon, \delta \in (0,1/2)$ with $\epsilon + \delta < 1/2$. 
For one-round protocols we have:
\begin{enumerate}
\item $\sR^{1, \pub}_\epsilon (f^{\oplus m}) 
\quad \geq \quad \Omega\left( \delta^3m \cdot \sR^{1,
\pub}_{\epsilon + \delta}(f)\right)$.
\item $\sQ^{1,  \pub}_\epsilon (f^{\oplus m}) 
\quad \geq \quad \Omega\left( \delta^3m \cdot \sQ^{1,\pub}_{\epsilon +
\delta}(f)\right)$. 
\end{enumerate}
Similarly for $\smp$ protocols (with shared resource as specified in
Section~\ref{sec:prelimcomm}), we have:
\begin{enumerate}
\item $\sR^{\|, \pub}_\epsilon (f^{\oplus m}) \quad \geq \quad 
 \Omega\left(\delta^3m \cdot \sR^{\|,\pub}_{\epsilon+ \delta} (f)\right).$
\item $\sQ^{\|, \pub}_\epsilon (f^{\oplus m}) \quad \geq \quad 
 \Omega\left(\delta^3m \cdot \sQ^{\|,\pub}_{\epsilon + \delta} (f)\right).$
\end{enumerate}
\end{theorem}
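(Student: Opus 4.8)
The plan is to prove the two one-round statements through a one-copy \emph{information cost}: lower-bound the communication of the $m$-fold protocol by $m$ times this cost, and then convert the cost back into one-copy communication by \emph{message compression}; the $\smp$ statements follow by running the same argument on both messages simultaneously. The one structural fact used repeatedly is that a protocol correct on every input may have its unused copies filled in arbitrarily.

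For a relation $g$, a distribution $\mu$ on its inputs and an error $\eta$, let $\mathrm{ic}^1_\mu(g,\eta)$ be the infimum over one-round public-coin (resp.\ prior-entanglement) protocols $P$ for $g$ with distributional error $\le\eta$ under $\mu$ of $I(X:M\mid R)$, where $X$ is Alice's input, $M$ her (classical, resp.\ quantum) message, and $R$ the shared resource (in the quantum case $M$ is read together with Bob's half of the prior entanglement). Three observations. (a) Classically $|M|\ge I(X:M\mid R)$, and quantumly $|M|\ge\tfrac12 I(X:M\mid R)$ by a Holevo--Nayak-type bound, using that Bob's half of the entanglement is independent of $X$; hence $\sR^{1,\pub}_\eta(g)\ge\mathrm{ic}^1_\mu(g,\eta)$ and $\sQ^{1,\pub}_\eta(g)\ge\tfrac12\mathrm{ic}^1_\mu(g,\eta)$. (b) For independent inputs, conditioning only increases (Holevo) information, so by the chain rule $\sum_{j=1}^m I(X_j:M\mid R)\le I(X_{[m]}:M\mid R)\le 2|M|$. (c) Given an optimal one-round protocol $P$ for $f^{\oplus m}$ with error $\epsilon$ and any distribution $\nu$ on the inputs of $f$, build a one-copy protocol $P_j$: Alice puts her true input at copy $j$, fills the other copies with private samples from the marginal $\nu_A$ (Bob symmetrically from $\nu_B$), runs $P$, and outputs its $j$-th answer; since $P$ errs with probability $\le\epsilon$ on every input, $P_j$ has error $\le\epsilon$ under $\nu$, and its information cost is exactly $I_{\nu_A^{\otimes m}}(X_j:M\mid R)$. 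By (b) some $j$ makes this $\le 2|M|/m$, so $\mathrm{ic}^1_\nu(f,\epsilon)\le 2\,\sQ^{1,\pub}_\epsilon(f^{\oplus m})/m$ (and $\le\sR^{1,\pub}_\epsilon(f^{\oplus m})/m$ classically).

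The heart of the proof is the converse: a one-round protocol $P$ for $f$ with error $\le\epsilon$ under $\nu$ and $k\defeq I(X:M\mid R)$ can be turned into a one-round public-coin (resp.\ entanglement-assisted) protocol $P'$ for $f$ with error $\le\epsilon+\delta$ under $\nu$ and communication $O((k+1)/\delta^3)$. Classically this is correlated sampling: Alice and Bob both know $R$ and the fixed $\nu_A$, hence the average message law $\sigma^R\defeq\av_{x'\sim\nu_A}[p^{x'}_{M\mid R}]$; the public coins list i.i.d.\ draws from $\sigma^R$, and Alice, holding $x$, names the index of a draw that a capped rejection step distributes as $p^x_{M\mid R}$, costing about $D(p^x_{M\mid R}\,\|\,\sigma^R)+O(\log\tfrac1\delta)$ bits and failing with probability $\delta$; averaging over $x\sim\nu_A$, and discarding by Markov an $O(\delta)$-fraction of high-divergence inputs, gives communication $O((k+1)/\delta)$ and added error $O(\delta)$. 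In the quantum case this is the \emph{quantum message compression} built on the quantum substate theorem of~\cite{jain:substate}, which forces the stated $\mathrm{poly}(1/\delta)$ overhead. Feeding the optimal $\mathrm{ic}^1_\nu$-protocol through this shows, for every $\nu$, that $\mathrm{ic}^1_\nu(f,\epsilon)$ is $\Omega(\delta^3)$ times the one-copy $(\epsilon+\delta)$-error distributional complexity of $f$ under $\nu$ (classical, resp.\ entanglement-assisted quantum); choosing $\nu$ to be a hard distribution for one-copy $f$ (the minimax principle holds for one-round protocols, also with prior entanglement) and chaining with (a)--(c) yields parts 1 and 2 in both models. For $\smp$ one runs the chain-rule step on $M_A$ and $M_B$ at once, picks a copy good for the sum of the two informations (both conditioned on the shared resource), fills the other copies privately, compresses both messages, and finishes by the same minimax argument; again only the worst-case correctness of $P$ is used.

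The main obstacle is the message-compression lemma in its quantum form: converting a quantum message of small Holevo information into a genuinely short quantum message, paying only a $\mathrm{poly}(1/\delta)$ factor in length and a $+\delta$ in error, with no shared purifying register available. This is exactly where the quantum substate theorem enters, and extracting a polynomial --- rather than exponential --- dependence on $\delta$ is the delicate point; everything else is information-theoretic bookkeeping.
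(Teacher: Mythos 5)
Your proposal is correct and takes essentially the same route as the paper: apply Yao's minimax principle to pass to distributional complexity, use the chain rule on the mutual information between the $m$-fold input and Alice's message to find a coordinate with information at most $O(c/m)$, embed a single instance at that coordinate (filling the remaining coordinates with privately generated samples), and then invoke the substate-theorem-based one-round message compression to obtain a short protocol for one copy. The only cosmetic differences are in how prior entanglement is neutralized --- you condition on the shared resource $R$ and appeal to $I(X:R)=0$ directly, whereas the paper folds Bob's share of the entanglement into Alice's message and applies the Araki--Lieb consequence $I(X:AB)\le 2S(B)$ when $I(X:A)=0$ --- and in that the paper sets Bob's dummy coordinates to $0$ rather than to $\nu_B$-samples, which is immaterial for a one-way message. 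Both choices lead to the same $2c \ge \sum_j I(X_j:M)$ inequality and the same final bound.
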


We obtain our Direct Sum results via message
compression results. Our message compression result for classical
one-way protocols is as follows:
 
\begin{result}[Classical one-way message compression, informal statement]
\label{res:classoneround}
Let $f\subseteq \cX \times \cY \times \cZ$ be a relation and let $\mu$
be a probability distribution (possibly non-product) on $\cX \times
\cY$.  Let $\cP$ be a one-way private-coins classical protocol for $f$
(with single message from $\alice$ to $\bob$) having bounded average error
under $\mu$.  Suppose $\alice$'s message in $\cP$ has {\em mutual
information} (please refer to Sec.~\ref{sec:prelim} for definition) at
most $k$ about her input. Then there is a
one-way deterministic protocol $\cP'$ for $f$ having similar average
error probability under $\mu$, in which $\alice$'s message is $O(k)$ bits
long.
\end{result}

We show similar message compression result for one-way quantum
protocols.
\begin{result}[Quantum one-way message compression, informal statement]
\label{res:quantoneround}
Let $f\subseteq \cX \times \cY \times \cZ$ be a relation and let $\mu$
be a probability distribution (possibly non-product) on $\cX \times
\cY$.  Let $\cP$ be a one-way quantum protocol without prior
entanglement for $f$ having bounded average error probability under
$\mu$.  Suppose $\alice$'s message in $\cP$ has mutual information at
most $k$ about her input. Then there is a one-way protocol $\cP'$ for
$f$ with prior entanglement having similar average error probability
under $\mu$, where $\alice$'s message is classical and $O(k)$ bits
long.
\end{result}

The proof of the above result  uses a technical quantum
information-theoretic fact called the Substate
Theorem~\cite{jain:substate}.  Essentially, it says that if a quantum
encoding of a classical random variable $x \mapsto \sigma_x$ has
information at most $k$ about $x$, then for most $x$,
$\frac{\sigma_x}{2^{O(k)}} \leq \sigma$ (for Hermitian matrices $A$,
$B$, $A \leq B$ is a shorthand for the statement ``$B - A$ is positive
semidefinite''), where $\sigma \defeq \E_x [\sigma_x]$. Similarly the
classical message compression result uses the classical version of the
Substate Theorem. The classical Substate Theorem was also used by
\cite{jain:icalp03} to prove their classical message compression
results.

Res.~\ref{res:quantoneround} also allows
us to prove a new {\em round elimination} result for quantum
communication. To state the round elimination
lemma, we first need the following definition.
\begin{definition}
\label{def:fk}
Let $f: \cX \times \cY \rightarrow \cZ$ be a function. 
The communication game $f^{(k),A}$ is defined as
follows: $\alice$ gets $k$ strings $x_1, \ldots, x_k \in \cX$.
$\bob$ gets an integer $j \in [k]$, a copy of strings 
$x_1, \ldots, x_{j - 1}$, and a string $y \in \cY$. They
are supposed to communicate and determine $f(x_j, y)$. The
communication
game $f^{(k),B}$ is defined analogously with roles of $\alice$ and
$\bob$ reversed.
\end{definition}
\begin{result}[Round elim., informal stmt.]
\label{res:roundelim}
Let $f: \cX \times \cY \rightarrow \cZ$ be a function. 
Suppose $\cP$ is a $t$-round quantum protocol for $f^{(k),A}$
with prior entanglement having bounded worst case error.
Suppose $\alice$ starts the communication and the first and
second  messages of $\cP$ are $l_1$ and $l_2$ qubits 
long respectively.
Then there is a $(t - 1)$-round protocol for $f$ with prior
entanglement having similar worst case error where $\bob$ starts
the communication and the first message is
$l_2 \cdot 2^{O(l_1/k)}$ qubits long. The subsequent communication
in $\cP'$ is similar to that in $\cP$.
\end{result} 
The classical analogue of the above result was shown by Chakrabarti
and Regev~\cite{ChakrabartiR04}, where they used the message compression
arguments of~\cite{jain:icalp03,HarshaJMR07} to arrive at their result.
The above round elimination lemma is useful in situations where
$\alice$'s message length $l_1$ is much smaller than $\bob$'s message
length $l_2$. Such a situation arises in proving cell probe lower
bounds for data structure problems like $\ANN$ in $\{0, 1\}^n$ and Set
Predecessor. \cite{ChakrabartiR04} used it crucially in proving optimal
randomized cell probe lower bounds for $\ANN$. Recently, Patrascu and
Thorup~\cite{patrascu:pred,PatrascuT06} used the same classical technique to prove
sharper lower bounds for the Set Predecessor problem. We remark that
both these results carry over to the {\em address-only quantum cell probe
model} (defined in \cite{sen:icalp01}) as a consequence of Res.~\ref{res:roundelim}.

\subsection{Privacy trade-offs}
Let us consider another natural question in communication complexity
as follows. Let $f\subseteq \cX \times \cY \times \cZ$ be a relation.
We are interested in the {\em privacy loss} of $\alice$ and $\bob$ that is
inherent in computing $f$. Privacy in communication complexity was
studied in the classical setting by Bar-Yehuda et
al.~\cite{baryehuda:privacy}, and in the quantum setting by
Klauck~\cite{klauck:privacy} and Jain, Radhakrishnan, and
Sen~\cite{jain:substate}.  For studying privacy issues in quantum
communication, we only consider protocols without prior entanglement.
To define the privacy loss of $\alice$, imagine that $\alice$ follows the
protocol $\cP$ honestly but $\bob$ is malicious and deviates arbitrarily
from $\cP$ in order to extract the maximum amount of information about
$\alice$'s input. The only constraint on $\bob$ is that $\alice$ should not be
able to figure out at any point of time whether he is cheating or not;
we call such a cheating strategy of $\bob$ {\em undetectable}.  Suppose
$\mu = \mu_\cX \times \mu_\cY$ is a product probability distribution
on $\cX \times \cY$.  Let register $X$ denote the input qubits of
$\alice$, and $B$ denote all the qubits in the possession of $\bob$ at the
end of $\cP$.  We assume the input registers of $\alice$ and $\bob$ are
never modified and are never sent as messages in $\cP$.  Then the
privacy loss of $\alice$ under distribution $\mu$ at the end of $\cP$ is
the maximum mutual information $I(X : B)$ over all undetectable
cheating strategies of $\bob$. The privacy loss of $\bob$ can be defined
analogously.  In the quantum setting $\bob$ has a big bag of undetectable
cheating tricks that he can use in order to extract information about
$X$. For instance, he can start the protocol $\cP$ by placing a
superposition of states $\ket{\mu_\cY}$ (for a probability
distribution $\pi$ on $\cZ$, $\ket{\pi} \defeq \sum_z \sqrt{\pi(z)}
\ket{z}$) in his input register $Y$ and running the rest of the
protocol honestly. This trick works especially well for so-called
{\em clean} protocols that leave the work qubits of $\alice$ and $\bob$ at the
end of the protocol in the state $\ket{0}$. For example, consider the
following exact clean protocol $\cP$ computing the inner product
modulo $2$, $x \cdot y$, of two bit strings $x, y \in \{0, 1\}^n$:
$\alice$ sends her input $x$ to $\bob$, $\bob$ computes $x \cdot y$ and sends
back $x$ to $\alice$ keeping the bit $x \cdot y$ with himself, and
finally $\alice$ zeroes out $\bob$'s message by XORing with her input $x$.
If $\bob$ does the above `superposition cheating' trick for $\cP$, his
final state at the end of $\cP$ becomes $\left(\sum_{y \in \{0, 1\}^n}
\ket{y, x \cdot y}\right)$.  It is easy to see that $\bob$ has $\frac{n}{2}$ bits
of information about $x$, if $x$ is distributed uniformly in $\{0,
1\}^n$. Thus, the privacy loss from $\alice$ to $\bob$ for this protocol is
at least $\frac{n}{2}$, under the uniform distribution on $\{0, 1\}^n
\times \{0, 1\}^n$.  See \cite{cleve:ip} for more details. Thus, it is
conceivable that $\alice$ and $\bob$ use an `unclean' protocol to compute
$f$ in order to minimize their privacy losses. We shall be concerned
with proving tradeoffs between the privacy losses of $\alice$ and $\bob$ for
any quantum protocol computing $f$, including `unclean' ones. Please
refer to Sec.~\ref{sec:multiround}, Def.~\ref{def:privacyquant} for precise
definition of privacy loss. Note
that defining the privacy loss only for quantum protocols without
prior entanglement is without loss of generality, since we can convert
a protocol with prior entanglement into one without prior entanglement
by sending the entanglement as part of the first message of the
protocol; this process does not affect the privacy loss since after
the first message is sent, the qubits in the possession of $\alice$ and
$\bob$ are exactly the same as before.

For private-coin randomized classical protocols, a related notion 
called {\em information cost}, was defined 
in \cite{chak:direct,Bar-YossefJKS04} to be the mutual information
$I(X Y : M)$ between the players' inputs and the complete message
transcript $M$ of the protocol. For quantum protocols there is no
clear notion of a message transcript, hence we use our definition
of privacy instead. Also, other than cryptographic reasons 
there is also another reason why we allow 
the players to use undetectable cheating strategies. In the
above clean protocol $\cP$ for the inner product function, if
both $\alice$ and $\bob$ were honest the final state of $\cP$ would
be $\ket{x} \otimes \ket{y, x \cdot y}$, where the first state
belongs to $\alice$ and the second to $\bob$. Under the uniform distribution
on $x, y$ the privacy loss from $\alice$ to $\bob$
is $1$, whereas the classical information cost is at least $n$.
This shows that in the quantum setting, because of the ability of
players to `forget' information by uncomputing, it is better to
allow undetectable cheating strategies for players in the definition
of privacy loss in order to bypass examples such as the above. 

\subsubsection{Our results:} In this paper we relate the privacy loss
incurred in computation of any relation $f$ to the one-way
communication complexity $f$. We show that in multi-round protocols
with low privacy loss, all the messages could be replaced by a single
short message.  For quantum protocols, again using the Substate
Theorem~\cite{jain:substate}, we prove the following result.
\begin{result}[Quantum multiple rounds compression, informal stmt.]
\label{res:quantmultiround}
Let $f \subseteq \cX \times \cY \times \cZ$ be a relation and let
$\mu$ be a product probability distribution on $\cX \times \cY$.  Let
$\cP$ be a multi-round two-way quantum protocol without prior
entanglement for $f$ having bounded average error probability under
$\mu$.  Let $k_a$, $k_b$ denote the privacy losses of $\alice$ and $\bob$
respectively under distribution $\mu$ in $\cP$. Then there is a
one-way protocol $\cP'$ for $f$ with prior entanglement having similar
average error probability under $\mu$, such that the single message of
$\cP'$ is from $\alice$ to $\bob$, it is classical and $k_a 2^{O(k_b)}$ bits
long. Similarly statement also holds with the roles of $\alice$ and
$\bob$ reversed. 
\end{result}
We would like to remark that Res.~\ref{res:quantoneround} does
not follow from Res.~\ref{res:quantmultiround}.
Res.~\ref{res:quantoneround} holds for any probability distribution
on $\cX \times \cY$ whereas our proof of Res.~\ref{res:quantmultiround}
requires product distributions.  It is open whether a similar multi-round
compression result can be proved for non-product distributions for
quantum protocols. 

Similarly for classical protocols we show the following result. Please
refer to Sec.~\ref{sec:multiround}, Def.~\ref{def:privacyclass} for
precise definition of privacy loss for classical protocols. 
\begin{result}[Classical multiple rounds compression, informal stmt.]
\label{res:classmultiround}
Let $f \subseteq \cX \times \cY \times \cZ$ be a relation and let
$\mu$ be a product probability distribution on $\cX \times \cY$.  Let
$\cP$ be a multi-round two-way private-coins classical protocol for
$f$ having bounded average error probability under $\mu$.  Let $k_a$,
$k_b$ denote the privacy losses of $\alice$ and $\bob$ respectively under
distribution $\mu$ in $\cP$. Then there is a one-way deterministic
protocol $\cP'$ for $f$ having similar average error probability under
$\mu$, such that the single message of $\cP'$ is from $\alice$ to $\bob$ and
is $k_a 2^{O(k_b)}$ bits long. Similarly statement also holds with the
roles of $\alice$ and $\bob$ reversed.
\end{result}
We would like to point out that the proof of this result does not follow entirely
on the lines of Res.~\ref{res:quantmultiround}, essentially due to
the difference in the definition between the notions of privacy loss
for classical and quantum protocols. Therefore its proof is presented
separately.

These message compression results immediately imply the following
privacy trade-off results (similar results hold with the roles of
$\alice$ and $\bob$ reversed.) 
\begin{result}[Quantum privacy trade-off]
\label{res:privacyquant}
Let the privacy loss of $\alice$ be $k_a$ and the privacy loss of $\bob$ be $k_b$ 
at the end of a quantum protocol without entanglement $\cP$ for computing a relation
$f\subseteq \cX \times \cY \times \cZ$. Then,$$k_a 2^{O(k_b)} \quad
\geq \quad \sQ^{1, A \rightarrow B, {\pub}, [\;]} (f),$$ where $\sQ^{1, A
\rightarrow B, {\pub}, [\;]} (f)$ is the maximum over all product
distributions $\mu$ on $\cX \times \cY$, of the one-round quantum
communication complexity (with $\alice$ communicating) of $f$ with prior
entanglement having bounded average error under $\mu$.
\end{result}

\begin{result}[Classical privacy trade-off]
Let the privacy loss of $\alice$ be $k_a$ and the privacy loss of $\bob$ be $k_b$ 
at the end of a classical private coins protocol $\cP$ for computing a relation
$f\subseteq \cX \times \cY \times \cZ$. Then,$$k_a 2^{O(k_b)} \quad
\geq \quad \sR^{1, A \rightarrow B, [\;]} (f),$$ where $\sR^{1, A
\rightarrow B, [\;]} (f)$ is the maximum over all product
distributions $\mu$ on $\cX \times \cY$, of the one-round classical
distributional communication complexity (with $\alice$ communicating) of
$f$ having bounded average error under $\mu$. 
\end{result}

\paragraph{Remarks:}
\begin{enumerate}
\item Note that Res.~\ref{res:privacyquant} also shows that the privacy loss for
computing $f$ is lower bounded by $\Omega(\log \sQ^{1, {\pub}, [\;]}
(f))$. This latter result can be viewed as the privacy analogue of
Kremer's result~\cite{kremer:quantcc} that the bounded error quantum
communication complexity of $f$ is lower bounded by the logarithm of
its deterministic one-round communication complexity.
\item Res.~\ref{res:quantmultiround} and
Res.~\ref{res:classmultiround} also allow us to show  weak general 
Direct Sum result for quantum protocols and classical protocols as mentioned in
Corr.~\ref{corr:weakdirectquant} and Corr.~\ref{corr:weakdirectclass}
respectively in Sec.~\ref{sec:multiround}.
\item All these results are optimal in general as evidenced by the Index function
problem~\cite{nayak:index}. In the Index function problem,
$\alice$ is given a database $x \in \{0, 1\}^n$ and $\bob$ is given an index
$i \in [n]$. They have to communicate and determine $x_i$, the $i$-th
bit of $x$. The
one-round quantum communication complexity from $\alice$ to $\bob$ for this
problem is $\Omega(n)$, even for bounded average error under the
uniform distribution and in the presence of prior entanglement.  Thus,
we get the privacy tradeoff $k_a 2^{O(k_b)} = \Omega(n)$ for the Index
function problem. This is optimal; consider a deterministic protocol
where $\bob$ sends the first $b$ bits of his index and $\alice$ replies by
sending all the $\frac{n}{2^b}$ bits of her database consistent with
$\bob$'s message.  
\item Earlier, Jain, Radhakrishnan, and
Sen~\cite{jain:substate} had proved the same privacy tradeoff for the
Index function problem specifically. Our general tradeoff above can be
viewed as an extension of their result to all functions and relations.
\end{enumerate}

\subsection{Impossibility of black-box entanglement reduction}
Let us return to the third main question we investigate in this work
which appears different but is intimately related to the theme of
message compression and we mention this connection later.

We know that for some quantum communication problems, presence of
prior entanglement helps in reducing the communication. For example,
the technique of superdense coding~\cite{bennett:superdense} allows us
to often reduce the communication complexity by a multiplicative
factor of $2$.  So a natural question that arises is how much prior
entanglement is really required by a quantum protocol? For
classical communication, Newman~\cite{Newman91} has shown that
$O(\log n)$ shared random bits are sufficient for any protocol. This
is tight, as evidenced by the  Equality function on $\{0, 1\}^n$
which requires $\Theta(\log n)$ bits with private randomness and
$O(1)$ bits with shared randomness. One might hope to extend
Newman's~\cite{Newman91} proof that a classical protocol needs
only $O(\log n)$ shared random bits to the quantum setting.  Newman's
proof uses a Chernoff-based sampling argument on the shared random bit
strings to reduce their number to $O(n)$.  Moreover, the reduction is done in a
black-box fashion i.e. it does not change the computation of
$\alice$ and $\bob$ in the protocol.  In the quantum setting, one might
similarly hope to reduce the amount of entanglement of the prior
entangled state $\ket{\phi}$ to $O(\log n)$ and leave the unitary
transforms of $\alice$ and $\bob$ unaffected i.e. the hope is to find a
black-box Newman-style prior entanglement reduction technique. We show
that such a black-box reduction is impossible.

To state our result precisely, we need the following definitions.
\begin{definition}[Similar protocols]
Two protocols $\cP$ and $\cP'$ with prior
entanglement and outputting values in $\cZ$ are called {\em similar
protocols} if both use the same number of qubits and the same
unitary transformations and
measurements, have the same amount of communication and 
for all $(x,y) \in \{0, 1\}^n \times \{0, 1\}^n$, 
$\totvar{\cP(x, y) - \cP'(x, y)} < 1/20$.
Here, $\cP(x, y)$, $\cP'(x, y)$ are the probability distributions
on $\cZ$ of the output of protocol $\cP$, $\cP'$ on input $(x, y)$.
$\cP$, $\cP'$
may use different quantum states as their input independent prior
entanglement.
\end{definition}
\begin{definition}[Amt. of entanglement]
\label{def:ent}
For a bipartite pure state $\ket{\phi}_{A B}$, consider its 
{\em Schmidt decomposition},
$\ket{\phi} = \sum_{i=1}^{k} \sqrt{\lambda_i} \ket{a_i} \otimes
              \ket{b_i}$,
where $\{a_i\}$ is an orthonormal set and so is $\{b_i\}$,
$\lambda_i \geq 0$ and $\sum_i \lambda_i = 1$.
The {\em amount of entanglement} of $\ket{\phi}_{A B}$ is defined
to be 
$E(\ket{\phi}_{A B}) \defeq -\sum_i \lambda_i \log \lambda_i$.
The {\em Schmidt rank} of $\ket{\phi}_{A B}$ is defined to be $k$.
\end{definition}
One might hope that the following conjecture is true.
\begin{conjecture}
For any protocol $\cP$ for 
$f: \{0, 1\}^n \times \{0, 1\}^n \rightarrow \cZ$ with prior
entanglement, there exists a similar protocol $\cP'$
that starts with prior entanglement
$\ket{\phi}_{A B}$, $E(\ket{\phi}_{A B}) = O(\log n)$.
\end{conjecture}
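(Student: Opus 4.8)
The plan is to \emph{refute} this conjecture, by exhibiting a one‑round protocol $\cP$ for $\EQ_n$ built from superdense coding whose prior entanglement has amount $n/2$, and then showing that \emph{every} protocol $\cP'$ similar to $\cP$ must start with prior entanglement of amount $\Omega(n)$, so in particular no similar protocol achieves $O(\log n)$ entanglement. The protocol $\cP$ is the following. The prior entanglement is $m \defeq n/2$ EPR pairs, $\ket{\Phi} \defeq \ket{\mathrm{EPR}}^{\otimes m}$, so $E(\ket{\Phi}_{AB}) = m$. On input $x \in \{0,1\}^n$, $\alice$ parses $x$ as $m$ pairs of bits $(a_i,b_i)$, applies the Pauli $X^{a_i}Z^{b_i}$ to her $i$‑th half, and sends all $m$ halves to $\bob$ ($n/2$ qubits of communication). $\bob$ performs the Bell‑basis measurement on the $m$ pairs (his halves together with the received halves), decodes the outcome $\beta$ to a string $\hat{x}(\beta) \in \{0,1\}^n$ by the usual superdense‑coding decoding, and outputs ``equal'' iff $\hat{x}(\beta) = y$. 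Since the decoding is exact, $\cP$ computes $\EQ_n$ with zero error.

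Now let $\cP'$ be any protocol similar to $\cP$: it uses the same controlled‑Pauli encoding for $\alice$, the same Bell measurement and classical post‑processing for $\bob$, the same number of qubits ($m$ on each side) and the same communication, but starts from a possibly different prior state $\ket{\phi'}_{AB}$, and satisfies $\totvar{\cP(x,y) - \cP'(x,y)} < 1/20$ for all $(x,y)$. After $\alice$'s encoding of $x$ and the transmission, $\bob$ holds the pure state $(P_x \otimes I)\ket{\phi'}$, where $P_x \defeq \bigotimes_i X^{a_i}Z^{b_i}$. The key observation is that the Bell basis element $\ket{\beta_x}$ that decodes to $x$ is exactly $(P_x \otimes I)\ket{\Phi}$; hence the probability that $\cP'$ outputs ``equal'' on input $(x,x)$ equals
$$ \bigl|\bra{\beta_x}(P_x \otimes I)\ket{\phi'}\bigr|^2 \;=\; \bigl|\bra{\Phi}(P_x^\dagger P_x \otimes I)\ket{\phi'}\bigr|^2 \;=\; \bigl|\braket{\Phi}{\phi'}\bigr|^2 \;=:\; F, $$
which is \emph{independent of $x$}. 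Since $\cP$ outputs ``equal'' with certainty on $(x,x)$, the similarity bound forces $F > 1 - 1/20$.

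It remains to see that $F = \bigl|\braket{\Phi}{\phi'}\bigr|^2 > 19/20$ forces $\ket{\phi'}$ to be almost maximally entangled. The trace distance of the pure states is $\tfrac12\trnorm{\ketbra{\phi'} - \ketbra{\Phi}} = \sqrt{1-F}$, and since partial trace is contractive, $\tfrac12\trnorm{\rho_A(\phi') - \rho_A(\Phi)} \le \sqrt{1-F} < \sqrt{1/20}$, where $\rho_A(\Phi) = I/2^m$ has entropy $m$. A Fannes‑type continuity bound for von Neumann entropy on $\mathbb{C}^{2^m}$ then gives
$$ \bigl|\,E(\ket{\phi'}) - m\,\bigr| \;\le\; \sqrt{1-F}\cdot m + h\!\left(\sqrt{1-F}\right) \;<\; \sqrt{\tfrac{1}{20}}\cdot m + h\!\left(\sqrt{\tfrac{1}{20}}\right), $$
where $h$ is binary entropy, so $E(\ket{\phi'}) \ge \bigl(1 - \sqrt{1/20}\bigr)m - O(1) = \Omega(n)$. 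Thus every similar protocol needs $\Omega(n)$ units of entanglement — within a constant factor of the $n/2$ used by $\cP$, and incompatible with $O(\log n)$ — so the conjecture is false.

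I expect the only genuinely delicate point to be this last estimate: one must carry the constants through the chain fidelity $\to$ pure‑state trace distance $\to$ partial‑trace contraction $\to$ Fannes carefully enough that the coefficient of $m$ stays strictly below $1$, so that the bound is truly $\Omega(n)$ rather than vacuous. (If a cleaner constant is wanted, one can instead work directly with the Schmidt spectrum of $\ket{\phi'}$, comparing it in Hellinger and then $\ell_1$ distance to the uniform spectrum of $\ket{\Phi}$ before applying the Shannon‑entropy version of Fannes.) Everything else is a direct unwinding of superdense coding and the definition of similar protocols.
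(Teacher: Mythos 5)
Your refutation is correct, and it takes a genuinely different route from the paper's. The paper disproves the conjecture with a protocol for $\EQ_n$ that communicates only $4$ bits and uses roughly $2n/3$ EPR pairs; the engine of their proof is the geometric Lemma~\ref{lem:incompress} (a family of $16N$ subspaces with controlled pairwise overlaps, plus property (5) that no low-dimensional subspace can serve many of them) together with the Schmidt-rank truncation of Proposition~\ref{prop:schmidt}, and the contradiction is that Bob's post-message states would have to live in a $2^{n/6}$-dimensional space yet achieve high overlap with $\Pi_{ij}$ for all $i$. You instead use the superdense-coding protocol, observe that because $P_x$ is unitary the accept probability on $(x,x)$ is exactly $|\braket{\Phi}{\phi'}|^2$ for \emph{every} $x$, and then pass from high fidelity with $\ket{\Phi}$ to high entanglement via trace-distance contractivity and Fannes continuity. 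This is shorter and more elementary, but it buys a strictly weaker counterexample: your $\cP$ communicates $n/2$ qubits, which is $\Theta(n)$, so the statement ``this protocol needs $\Omega(n)$ entanglement'' is somewhat expected, since $n/2$ qubits of communication without near-maximal entanglement could not even Holevo-carry $n$ bits. The paper's protocol has $O(1)$ communication, which makes the incompressibility phenomenon much more striking and closer in spirit to the Newman analogy the conjecture was modeled on. Two small points worth fixing: the paper's $\totvar{\cdot}$ is the full $\ell_1$ norm (not the total-variation half), so the similarity bound on $(x,x)$ gives $2(1-F) < 1/20$, i.e. $F > 39/40$ rather than your $19/20$ (this only strengthens your estimate); and in the Fannes step you should make sure to note that $h(\sqrt{1/20}) < 1$ and that the coefficient of $m$ is $\sqrt{1/20} \approx 0.224 < 1$, so the conclusion $E(\ket{\phi'}) \geq (1-\sqrt{1/20})\,m - 1 = \Omega(n)$ is indeed non-vacuous, exactly the ``delicate point'' you flag.
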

We prove that the above conjecture is {\bf not} correct for quantum
communication protocols. 
\begin{result}[No black-box red. of prior entang.]
\label{res:nobboxred}
Let us denote the Equality function on $n$-bit strings by $\EQ_n$ . 
There exists a one-round quantum protocol $\cP$ for $\EQ_n$ with 
$\frac{2n}{3} + \log n + \Theta(1)$ EPR pairs of prior 
entanglement and communicating $4$ bits,
such that there is no
similar protocol $\cP'$ that starts with a prior entangled
state $\ket{\phi}_{A B}$, $E(\ket{\phi}_{A B}) \leq n/600$.
\end{result}

Our proof of this result follows essentially by sharpening the
geometric arguments behind the proof of the `recipient-non-invasive
incompressibility' result of Jain, Radhakrishnan, and
Sen\cite{jain:icalp03}. Jain, Radhakrishnan, and
Sen~\cite{jain:icalp03} showed that for classical constant round
private-coin protocols with a product probability distribution on
their inputs, one can compress the messages to the information cost of
the protocol.  Their compression technique for classical protocols was
`recipient-non-invasive' in the sense that, for one round protocols,
it did not change the computation of the recipient except up to a
trivial relabeling of the messages.  They however also showed
that such a recipient-non-invasive compression result does not hold
for quantum protocols; they exhibited a one-round quantum protocol
without prior entanglement for the Equality function on $n$-bit
strings with constant privacy loss, where any recipient-non-invasive
compression strategy cannot compress $\alice$'s message by more than a
multiplicative factor of $6$! We essentially convert their
``incompressibility of message'' result to ``incompressibility of
prior-entanglement'' result.

\paragraph{Remarks:}
\begin{enumerate}
\item The above Res.~\ref{res:nobboxred} shows that in order to reduce prior
entanglement in quantum communication, one has to look beyond
black-box arguments and change the unitary transforms of $\alice$
and $\bob$. 

\item Recently Gavinsky~\cite{Gavinsky06} showed that even if $\alice$
and $\bob$ are allowed 
to change their operations, there is an exponential increase that can
occur in the required message length for computation of a relation in
case the prior-entanglement is reduced only by a logarithmic
factor. However Gavinsky measures shared entanglement with the number of
qubits in the shared state between $\alice$ and $\bob$, and not with
the measure of entanglement as considered by us in
Def.~\ref{def:ent}. Hence Res.~\ref{res:nobboxred}, which 
first appeared in~\cite{JainRS05}, is incomparable to Gavinsky's result.

\end{enumerate}

\subsection{Exact Remote State Preparation ($\ersp$)} 
The final result we present in the theme of message compression
concerns the communication complexity of the Exact Remote State
Preparation ($\ersp$) problem. The $\ersp$ problem is as follows. 
Let $E: x \rightarrow \rho_x$ be an encoding from a set $\cX$ to the
set of quantum states. \\
{\bf Problem $\ersp(E)$:}
\begin{enumerate}
\item $\alice$ and $\bob$
start with prior entanglement. 
\item $\alice$ gets $x \in \cX$.
\item They interact at the end of which $\bob$ should end up with
$\rho_x$ in some register.
\end{enumerate}
  
\paragraph{Remark:} The adjective 'exact' signifies that we do not
allow for any fidelity loss in the state that $\bob$ should end up. 

We provide the following upper bound on the communication complexity
of this problem.
\begin{theorem}
\label{thm:ersp}
Let $E: x \rightarrow \rho_x$ be an encoding where $\rho_x$ is a pure
state for all $x$ and let $\sigma$ be any state with full rank. There is a protocol
$\cP$ for $\ersp(E)$ with expected communication bounded by $\max_x
\{\log (\Tr \sigma^{-1} \rho_x) + 2\log \log (\Tr \sigma^{-1} \rho_x)\}$.
\end{theorem}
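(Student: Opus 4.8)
The plan is to implement a quantum analogue of rejection sampling in which $\alice$ privately performs a sequence of identical trials, each succeeding with probability exactly $1/K_x$ where $K_x \defeq \Tr\sigma^{-1}\rho_x$, and then communicates to $\bob$ only the \emph{index} of the first successful trial. Concretely, I would let $\alice$ and $\bob$ pre-share a countably infinite supply of copies $\ket{\sigma}_{A_iB_i} \defeq \sum_k\sqrt{p_k}\ket{k}_{A_i}\ket{k}_{B_i}$, where $\sigma = \sum_k p_k\ketbra{k}$ is the given full-rank state (so $\bob$ holds $\sigma$ in each $B_i$, and no bound on prior entanglement is needed). Write $\rho_x = \ketbra{\psi_x}$ and $\ket{v_x} \defeq \sigma^{-1/2}\ket{\psi_x}$, so that $\braket{v_x}{v_x} = \bra{\psi_x}\sigma^{-1}\ket{\psi_x} = K_x \geq 1$, and let $P_x \defeq \ketbra{\hat v_x}$ be the rank-one projector onto the normalization of the complex conjugate (in the basis $\{\ket{k}\}$) of $\ket{v_x}$. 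On input $x$, $\alice$ measures $\{P_x, I-P_x\}$ on $A_1, A_2, \ldots$ in turn and stops at the first index $J$ on which she obtains outcome $P_x$; she then sends $J$ to $\bob$, who outputs register $B_J$.

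The calculations I would carry out to verify correctness are: (i) each trial succeeds with probability $\bra{\hat v_x}\sigma\ket{\hat v_x} = K_x^{-1}\bra{\psi_x}\sigma^{-1/2}\sigma\sigma^{-1/2}\ket{\psi_x} = K_x^{-1}$ (using that $\sigma$ is real in the basis $\{\ket{k}\}$, so the conjugation does not affect the value); and (ii) conditioned on success on copy $i$, tracing out $\alice$'s register leaves $\bob$'s register $B_i$ in the normalized state proportional to $(\bra{\hat v_x}\otimes I)\ket{\sigma}_{A_iB_i} = \sigma^{1/2}\ket{v_x} = \ket{\psi_x}$, i.e.\ exactly $\rho_x$. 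Consequently $J$ is geometrically distributed given $x$ with mean $\E[J] = K_x$, and the protocol is \emph{exact}: once $\bob$ learns $J$ he holds $\rho_x$ with certainty, with no fidelity loss introduced anywhere.

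It remains to bound the expected communication, which is exactly the expected length of $\alice$'s encoding of the positive integer $J$. Using a prefix-free (self-delimiting) integer code $C$ of the Elias-$\delta$ type, with $|C(j)| \leq \log j + 2\log(1+\log j) + 1$, together with concavity of $\log$ (Jensen's inequality), I get $\E[\,|C(J)|\,] \leq \log\E[J] + 2\log(1+\log\E[J]) + 1 = \log K_x + 2\log\log K_x + O(1)$; taking the maximum over $x$ yields the bound of Theorem~\ref{thm:ersp}, absorbing the lower-order additive constant. Note in particular that $\alice$ sends a single variable-length message and $\bob$ never speaks, so the protocol is in fact one-way; interaction is used only in the sense that the message is adaptive-length.

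The single insight the proof hinges on — the ``main obstacle'' in the sense of what must be seen — is that the naive rejection-sampling protocol, in which $\alice$ announces success or failure after \emph{each} trial, costs $\Theta(K_x)$ bits in expectation, whereas deferring all communication to the end and transmitting only the index $J$ of the first success replaces $K_x$ by $\log K_x$; this is legitimate precisely because the collapses caused by the failed trials on $\bob$'s side are harmless, $\bob$ simply discarding every register other than $B_J$. The secondary technical point to handle carefully is the complex conjugation in the definition of $P_x$, needed so that $\bob$'s post-measurement state is $\ket{\psi_x}$ and not its conjugate; I would dispose of this by fixing the Schmidt basis of $\sigma$ at the outset. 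Finally it is worth recording that $K_x = 2^{D_{\max}(\rho_x\|\sigma)}$ for pure $\rho_x$, so the bound reads $\max_x\{D_{\max}(\rho_x\|\sigma) + 2\log D_{\max}(\rho_x\|\sigma)\}$, in line with the intuition that remote preparation of $\rho_x$ out of $\sigma$ should cost about a max-relative-entropy's worth of bits.
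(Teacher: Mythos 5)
Your protocol is correct and is essentially the same rejection-sampling-with-index argument as the paper's: share many copies of a purification of $\sigma$, have $\alice$ run i.i.d.\ local trials that each succeed with probability exactly $K_x^{-1} = (\Tr\sigma^{-1}\rho_x)^{-1}$ and collapse $\bob$'s half to $\rho_x$ upon success, and transmit only the index of the first success with a prefix-free integer code. The one place where you diverge from the paper's presentation is in how the per-trial step is realized. The paper first proves Lemma~\ref{lem:sitinsidepure} to get $\sigma - K_x^{-1}\rho_x \geq 0$, builds from it the explicit flagged purification $\ket{\psi}_{\rho_x} = \sqrt{k}\ket{1}\ket{\bar 0}\ket{\phi_x} + \sqrt{1-k}\ket{0}\ket{\theta}$ of $\sigma$, and then invokes the Jozsa--Uhlmann local transition lemma so that $\alice$ can steer the shared purification of $\sigma$ into $\ket{\psi}_{\rho_x}$ and measure the flag qubit. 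You instead write down the equivalent two-outcome measurement $\{P_x, I - P_x\}$ directly (with $P_x$ the projector onto the conjugated, normalized $\sigma^{-1/2}\ket{\psi_x}$) and verify by a one-line calculation that it has acceptance probability $K_x^{-1}$ and leaves $\bob$ in $\ket{\psi_x}$. This is a clean shortcut: it makes Lemma~\ref{lem:sitinsidepure} and the local-transition lemma unnecessary, since your Cauchy--Schwarz-free computation certifies exactly what those lemmas are used for. You are also more careful than the paper about the conjugation needed so that $\bob$ receives $\ket{\psi_x}$ rather than its conjugate, and about the $O(1)$ additive slack in the Elias-style codeword length, which the paper silently suppresses but which you explicitly absorb.
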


\subsection{Organization of the paper}
In the next section, we collect some preliminaries that will 
be required in the proofs of the message compression results. 
In Sec.~\ref{sec:oneround}, we prove our results on first
round compression and round elimination in quantum protocols.
We prove our multi-round compression result
in Sec.~\ref{sec:multiround}.
In Sec.~\ref{sec:nobboxred},
we show that black-box reduction of prior entanglement
in quantum communication is impossible. Finally in Sec.~\ref{sec:rsp}
we provide the proof of the upper bound on the $\ersp$ problem.

\section{Preliminaries}
\label{sec:prelim}
\subsection{Information Theory} A quantum state is a positive semi
definite trace one operator. For a quantum state $\rho$, its {\em von-Neumann
 entropy} is defined as $S(\rho) \defeq \sum_i -\lambda_i \log
\lambda_i$, where $\lambda_i$s represent the various eigenvalues of
$\rho$.  For an $l$ qubit quantum system $A$, $S(A) \leq l$.  For
correlated quantum systems $A,B$ their mutual information is defined
as $I(A:B) = S(A) + S(B) - S(AB)$. Given a tri-partite system $A,B,C$,
mutual information satisfies the {\em monotonicity property} that is
$I(A:BC) \geq I(A:B)$.  Let us define $I(A:B | C) \defeq I(A:BC) -
I(A:C)$. We have the following very useful {\em Chain
Rule} for mutual information.  $$I(A:B_1 \ldots B_k) = \sum_{i=1}^k
I(A:B_i | B_1\ldots B_{i-1}).$$ Therefore if $B_1$ through $B_k$ are
independent systems then, $$ I(A:B_1 \ldots B_k) \geq \sum_{i=1}^k I(A:B_i).$$ 

For classical random variables the analogous definitions and facts hold
{\em mutates mutandis} and we skip making explicit statements here for brevity.
\subsection{Communication complexity} 
\label{sec:prelimcomm}
\subsubsection{Quantum communication complexity:}  Consider a 
two-party quantum communication protocol $\cP$ for computing a
relation $f : \{0, 1\}^n \times \{0, 1\}^n
\rightarrow \cZ$. The relations we consider are always total in the sense that for
every $(x,y) \in \cX \times \cY$, there is at least one $z \in
\cZ$, such that $(x,y,z) \in f$. In a two-way protocol $\cP$ for
computing $f$, $\alice$ and $\bob$ get inputs $x \in \cX$ and $y \in
\cY$ respectively. They send messages (qubits) to each other in turns, and
their intention is to determine an answer $z \in \cZ$ such that
$(x,y,z) \in f$. We assume that $\cP$ starts with the internal
work qubits of $\alice$ and $\bob$ in the state $\ket{0}$. Both the parties
use only unitary transformations for their internal computation,
except at the very end when the final recipient of a message makes a
von-Neumann measurement of some of her qubits to declare the
output. Thus, the joint state of $\alice$ and $\bob$ is always pure during
the execution of $\cP$.  We also assume that the players make {\em
safe} copies of their respective inputs using CNOT gates before
commencing the protocol. These safe copies of the inputs are not
affected by the subsequent operations of $\cP$, and are never sent as
messages.  In this paper, we consider protocols with and without prior
entanglement.  By prior entanglement, we mean a pure quantum state
$\ket{\phi}$ that is shared between $\alice$ and $\bob$ and that is
independent of their input $(x, y)$.  The state $\ket{\phi}$ can be supported on
an extremely large number of qubits. The unitary transforms of $\alice$
in $\cP$ are allowed to address her share of the qubits of
$\ket{\phi}$; similarly for $\bob$.  The classical analogue of prior
entanglement is shared random bits.  Often, the prior entanglement in
a quantum protocol is in the form of some number of EPR pairs,
one-half of which belongs to $\alice$ and the other half belongs to $\bob$.

Given $\epsilon \in (0,1/2)$, the two-way quantum communication
complexity $\sQ_\epsilon(f)$ is defined to be the communication of the
best two-way quantum protocol without prior entanglement, with error
at most $\epsilon$ on all inputs.  Whenever error parameter $\epsilon$
is not specified it is assumed to be $1/3$. Given a distribution $\mu$ on
$\cX \times
\cY$, we can similarly define the quantum  distributional two-way
communication complexity of $f$, denoted $\sQ^{\mu}_\epsilon(f)$,  to
be the communication of the best one-way quantum protocol without
entanglement for $f$, such that the average error of the protocol over
the inputs drawn from the distribution $\mu$ is at most
$\epsilon$. We
define~$\sQ^{[\;]}_{\epsilon}(f) \defeq \max_{\mu \textrm{ product}}
\sQ_{\epsilon}^{\mu}(f)$. The corresponding quantities for protocols
with entanglement are denoted with the superscript $\pub$.

The following result due to Yao~\cite{Yao77} is a very useful fact
connecting worst-case and distributional communication
complexities. It is a consequence of the {\em MiniMax\/} theorem in
game theory~\cite[Thm.~3.20, page~36]{KushilevitzN97}.
\begin{lemma}[Yao's principle~\cite{Yao77}]
\label{lem:yao} $\sQ^{\pub}_{\epsilon}(f) = \max_{\mu}
\sQ_{\epsilon}^{\pub, \mu}(f)$.
\end{lemma}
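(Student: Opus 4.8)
The plan is to obtain this as an instance of the minimax theorem for two-person zero-sum games, exactly as in the classical case \cite[Thm.~3.20]{KushilevitzN97}; the only point requiring attention beyond the classical argument is to pin down the convexity of the (now infinite) class of protocols. One inequality is immediate: a two-way quantum protocol with prior entanglement witnessing $\sQ^{\pub}_{\epsilon}(f)$ errs with probability at most $\epsilon$ on \emph{every} input, hence has average error at most $\epsilon$ under \emph{every} distribution $\mu$ on $\cX\times\cY$, so $\sQ^{\pub,\mu}_{\epsilon}(f)\le \sQ^{\pub}_{\epsilon}(f)$ for all $\mu$, and therefore $\max_{\mu}\sQ^{\pub,\mu}_{\epsilon}(f)\le \sQ^{\pub}_{\epsilon}(f)$.

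For the reverse inequality put $c=\max_{\mu}\sQ^{\pub,\mu}_{\epsilon}(f)$ and consider the zero-sum game in which one player chooses a two-way quantum protocol $P$ with (arbitrary) prior entanglement and at most $c$ qubits of total communication, the other player chooses an input $(x,y)\in\cX\times\cY$, and the payoff is $e(P;x,y)$, the probability that $P$ outputs some $z$ with $(x,y,z)\notin f$. As $\cX\times\cY$ is finite, say of size $N$, encode each admissible $P$ by its error vector $e(P)=(e(P;x,y))_{(x,y)}\in[0,1]^{N}$, and let $\mathcal{E}\subseteq[0,1]^{N}$ be the set of all such vectors. The crucial observation is that $\mathcal{E}$ is convex: for admissible $P_{1},P_{2}$ and $\lambda\in[0,1]$, the protocol that first consults an extra shared random bit --- an input-independent prior-entangled resource costing no communication --- and then runs $P_{1}$ with probability $\lambda$ and $P_{2}$ with probability $1-\lambda$ is again admissible, with error vector $\lambda e(P_{1})+(1-\lambda)e(P_{2})$. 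Hence $\overline{\mathcal{E}}$ is a nonempty compact convex subset of $[0,1]^{N}$.

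Now apply von Neumann's minimax theorem to the bilinear payoff $\langle\mu,v\rangle=\sum_{(x,y)}\mu(x,y)\,v_{(x,y)}$ on the compact convex sets $\overline{\mathcal{E}}$ and the simplex $\Delta_{N}$ of probability distributions on $\cX\times\cY$:
$$\min_{v\in\overline{\mathcal{E}}}\ \max_{(x,y)}v_{(x,y)}\;=\;\min_{v\in\overline{\mathcal{E}}}\ \max_{\mu\in\Delta_{N}}\langle\mu,v\rangle\;=\;\max_{\mu\in\Delta_{N}}\ \min_{v\in\overline{\mathcal{E}}}\langle\mu,v\rangle,$$
the first equality holding because a linear functional on $\Delta_{N}$ is maximised at a vertex. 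By the definition of $c$, every $\mu$ admits an admissible protocol with $\mu$-average error at most $\epsilon$, so $\min_{v\in\overline{\mathcal{E}}}\langle\mu,v\rangle\le\epsilon$ for each $\mu$; hence the right-hand side is at most $\epsilon$, and therefore so is the left-hand side. Thus there are admissible protocols whose worst-case error is at most $\epsilon$ (up to an arbitrarily small additive slack from passing to the closure), which gives $\sQ^{\pub}_{\epsilon}(f)\le c$ and completes the proof.

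I expect the only genuine subtlety to be this last passage to the closure. In the classical analogue $\mathcal{E}$ is the convex hull of the finitely many deterministic $c$-bit protocols, hence already compact, and no slack arises; in the quantum setting the prior entanglement is unbounded, so $\mathcal{E}$ is not obviously closed, but the argument still yields $\sQ^{\pub}_{\epsilon+\delta}(f)\le c$ for every $\delta>0$, which is all that is used later, since the complexity measures here are tracked only up to constant factors and error amplification. Everything else is routine minimax bookkeeping once convexity of $\mathcal{E}$ is in hand.
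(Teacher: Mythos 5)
Your reconstruction is essentially the intended argument: the paper supplies no proof of its own, only the pointer to the classical minimax theorem~\cite[Thm.~3.20]{KushilevitzN97}, and the route you take --- the trivial averaging direction, then von~Neumann minimax on the set of error vectors of $c$-qubit protocols, with convexity secured by mixing two protocols via a shared coin realisable as an input-independent entangled resource --- is exactly the standard way to transfer that theorem to the entanglement-assisted quantum setting. You also correctly flag the one place where the quantum case is genuinely more delicate than the classical one, which the paper elides by citation: since prior entanglement is not a priori bounded, the set $\mathcal{E}$ of achievable error vectors at a fixed communication budget is an increasing union of compacta and is not obviously closed, so the minimax argument as written delivers $\sQ^{\pub}_{\epsilon+\delta}(f)\le\max_\mu\sQ^{\pub,\mu}_\epsilon(f)$ for every $\delta>0$ rather than the exact equality of Lemma~\ref{lem:yao}. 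This slack is harmless for every use of the lemma in the paper (the downstream statements already absorb an additive $\delta$ and are tracked only up to $O(\cdot)$), but your honesty in isolating it is warranted; closing the gap would require either proving closedness of $\mathcal{E}$ or restating the lemma with a supremum and a one-sided limit in $\epsilon$.
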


Similar relationships as above also hold in the various other models that we
mention below {\em mutates mutandis}.

In the one-way protocols we consider, the single message is always
assumed to be from $\alice$ to $\bob$ unless otherwise
specified. Sometimes we specify the direction of the message for example with
superscript $A\rightarrow B$. The
complexities $\sQ^1_\epsilon(f), \sQ^{1, \pub}_\epsilon(f),
\sQ^{1,\mu}_\epsilon(f), \sQ^{1,[\;]}_{\epsilon}(f) $ could be
analogously defined in the one-way case.
 
In the Simultaneous Message Passing ($\smp$) model, $\alice$ and
$\bob$ each send a message each to a third party called {\sf
Referee}. In the $\smp$ protocols we consider, we let prior
entanglement to be shared between $\alice$ and {\sf Referee} and
$\bob$ and {\sf Referee} and public coins to be shared between
$\alice$ and $\bob$. The communication complexity in this described
model is denoted by $\sQ^{\|, \pub}_\epsilon (f)$.

\subsubsection{Classical communication complexity:} Let us now
consider classical communication protocols.  We let $\sD(f)$
represent the deterministic two-way communication complexity, that is
the communication of the best deterministic two-way protocol computing $f$
correctly on all inputs.  Let~$\mu$ be a probability distribution
on~$\cX \times \cY$ and $\epsilon \in
(0,1/2)$.  We let $\sD_{\epsilon}^{\mu}(f)$ represent the
distributional two-way communication complexity of $f$ under $\mu$
with expected error $\epsilon$, i.e., the communication of the best
private-coin two-way protocol for $f$, with distributional error
(average error over the coins and the inputs) at most $\epsilon$ under
$\mu$.  It is easily noted that $\sD_{\epsilon}^{\mu}(f)$ is always
achieved by a deterministic two-way protocol, and henceforth we will 
restrict ourselves to deterministic protocols in the context of
distributional communication complexity.  We let
$\sR^{\pub}_{\epsilon}(f)$ represent the public-coin randomized
two-way communication complexity of $f$ with worst case error
$\epsilon$, i.e., the communication of the best public-coin randomized
two-way protocol for $f$ with error for each input $(x,y)$ being at
most~$\epsilon$.  The analogous quantity for private coin randomized
protocols is denoted by $\sR_{\epsilon}(f)$. The public- and
private-coin randomized communication complexities are not much
different, as shown in Newman's result~\cite{Newman91} that
\begin{equation}
	\sR(f) = O(\sR^{\pub}(f)+ \log\log |\cX| + \log \log |\cY|).
\end{equation} 

We define~$\sR^{[\;]}_{\epsilon}(f) \defeq \max_{\mu \textrm{ product}}
\sD_{\epsilon}^{\mu}(f)$. The analogous communication complexities for
one-way protocols could be similarly defined. As before, we put superscript 1 to
signify that they stand for one-way protocols and superscript $\|$ to
signify $\smp$ protocols. In classical public
coin $\smp$ protocols that we consider, we let the public coins to be
shared between $\alice$  and $\bob$.


\subsection{Substate Theorem and $(\delta,\alpha)$-corrector}
All our message compression arguments are based on the following
common idea: If $\alice$ does not reveal much information about her
input, then it must be the case that $\bob$'s state after receiving
$\alice$'s messages does not vary much (as $\alice$'s input varies). In this
situation, $\alice$ and $\bob$ can start in a suitable input independent
state and $\alice$ can account for the variation by applying appropriate
local transformations on her registers. We formalize this idea using
the notion of a $(\delta, \alpha)$-corrector, and establish the
existence of such correctors by appealing to the following information-theoretic
result, the Substate Theorem due to Jain, Radhakrishnan,
and Sen~\cite{jain:substate}.
\begin{fact}[Substate Theorem, \cite{jain:substate}]
\label{fact:substate}
Let $\cH, \cK$ be two finite dimensional Hilbert spaces and 
$\dim(\cK) \geq \dim(\cH)$. Let $\C^2$ denote the two dimensional
complex Hilbert space.  Let $\rho, \sigma$ be density matrices in
$\cH$ such that $S(\rho \| \sigma) < \infty$.  Let 
$\ket{\overline{\rho}}$ be a
purification of $\rho$ in $\cH \otimes \cK$. Then, for $r > 1$, there
exist pure states $\ket{\phi}, \ket{\theta} \in \cH
\otimes \cK$ and 
$\ket{\overline{\sigma}} \in \cH \otimes \cK \otimes \C^2$, 
depending on $r$, such
that $\ket{\overline{\sigma}}$ is a purification of $\sigma$ and
$\trnorm{\ketbra{\overline{\rho}} - \ketbra{\phi}} \leq 
 \frac{2}{\sqrt{r}}$, where
\begin{displaymath}
\ket{\overline{\sigma}} \defeq
\sqrt{\frac{r-1}{r 2^{r c}}} \, \ket{\phi}\ket{1} +
\sqrt{1 - \frac{r-1}{r 2^{r c}}} \, \ket{\theta}\ket{0}
\end{displaymath}
and $c \defeq S(\rho \| \sigma) + O(\sqrt{S(\rho \| \sigma)}) + O(1)$. Note
that one can, by means of a local unitary operator on $\cK \otimes
\C^2$, transform any known purification $\ket{\overline{\sigma'}}$ of
$\sigma$ to $\ket{\overline{\sigma}}$.  Also, measuring the last qubit
of $\ket{\overline{\sigma}}$ and observing a $\ket{1}$ puts the
remaining qubits into the state $\ket{\phi}$. It follows that for
every purification $\ket{\overline{\sigma'}}$ of $\sigma$, there is an
unnormalized superoperator $\cM$, depending on
$\ket{\overline{\sigma'}}$, acting on the qubits of
$\ket{\overline{\sigma'}}$ other than those of $\sigma$, such that
$\cM(\ketbra{\sigma'})$ normalized is equal to $\ket{\phi}$.
Furthermore, this superoperator succeeds with probability at least
$\frac{r - 1}{r 2^{r c}}$.
\end{fact}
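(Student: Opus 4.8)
The plan is to reduce the Substate Theorem to a statement about probability distributions, prove that, and then confront the genuinely quantum lift, which I expect to be the hard part.

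\textbf{Reformulation.} I would first note that the stated conclusion is equivalent to producing a density matrix $\rho'$ on $\cH$ such that (i) $F(\rho,\rho') \geq \sqrt{1 - 1/r}$, where $F$ denotes fidelity, and (ii) $\sigma$ decomposes as $\sigma = \lambda\rho' + (1-\lambda)\tau'$ with states $\rho',\tau'$ and $\lambda \geq \frac{r-1}{r}\,2^{-rc}$. Indeed, given such $\rho',\tau'$, pick purifications $\ket{\phi}$ of $\rho'$ and $\ket{\theta}$ of $\tau'$ in $\cH\otimes\cK$ (possible since $\dim \cK \geq \dim \cH$), choosing $\ket{\phi}$ by Uhlmann's theorem so that $|\braket{\phi}{\overline{\rho}}| = F(\rho,\rho') \geq \sqrt{1-1/r}$; then $\ket{\overline{\sigma}} := \sqrt{\lambda}\,\ket{\phi}\ket{1} + \sqrt{1-\lambda}\,\ket{\theta}\ket{0}$ is a purification of $\sigma$ of exactly the required form, $\trnorm{\ketbra{\overline{\rho}} - \ketbra{\phi}} = 2\sqrt{1 - |\braket{\phi}{\overline{\rho}}|^2} \leq 2/\sqrt{r}$, and measuring its last qubit outputs $\ket{\phi}$ with probability $\lambda$. (Equivalently, one is bounding a fidelity-smoothed max-relative entropy of $\rho$ relative to $\sigma$.) The two ``Note that $\ldots$'' addenda are then immediate: all purifications of $\sigma$ differ by a unitary on the purifying registers, and a two-outcome measurement with one outcome post-selected is an unnormalized superoperator succeeding with probability $\lambda$.

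\textbf{The classical core.} For probability distributions $P \ll Q$, set $\ell(x) := P(x)/Q(x)$ and, for a threshold $t = 2^{\tau}$, let $B := \{x : \ell(x) > t\}$, so that $Q(B) \leq P(B)/t$. Applying the data-processing inequality for relative entropy to the partition $\{B, B^c\}$, together with the elementary estimate $p\log(p/q) \leq d(p\|q) + \tfrac{\log e}{e}$ for the binary relative entropy $d$, gives $P(B)\cdot\tau \leq P(B)\log\frac{P(B)}{Q(B)} \leq S(P\|Q) + O(1)$, hence $P(B) \leq (S(P\|Q) + O(1))/\tau$. Let $P'$ be $P$ restricted to $B^c$ and renormalized: then $P' \leq (t/P(B^c))\,Q = O(t)\,Q$ (the substate property, giving $\lambda = P(B^c)/t$), while $F(P,P')^2 = P(B^c) \geq 1 - O(1)\,S(P\|Q)/\tau$. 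Choosing $\tau = \Theta(r(S(P\|Q)+O(1)))$ balances the two requirements. The same binary estimate also yields $D(\rho\|\sigma) \leq S(\rho\|\sigma) + O(1)$ for the observational divergence $D(\rho\|\sigma) := \sup_{0\leq A\leq I}\Tr(A\rho)\log\frac{\Tr(A\rho)}{\Tr(A\sigma)}$, since $p\log(p/q)\leq d(p\|q)+O(1)$ survives verbatim when $p,q$ arise from a POVM element and its complement.

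\textbf{The quantum lift, and the obstacle.} The naive quantum analogue replaces the likelihood ratio by the relative operator $M := \sigma^{-1/2}\rho\sigma^{-1/2}$ on $\mathrm{supp}\,\sigma$, so $\rho = \sigma^{1/2}M\sigma^{1/2}$, and truncates: with $\Pi_{\leq t}, \Pi_{>t}$ the spectral projectors of $M$ below and above $t$, set $\rho'' := \sigma^{1/2}\Pi_{\leq t}M\Pi_{\leq t}\sigma^{1/2}$ and $\rho' := \rho''/\Tr\rho''$. Two of the three requirements fall out at once: $\Pi_{\leq t}M\Pi_{\leq t} \leq tI$ gives $\rho'' \leq t\sigma$, hence $\sigma \geq (\Tr\rho''/t)\,\rho'$, so $\lambda = \Tr\rho''/t$; and $M = \Pi_{\leq t}M\Pi_{\leq t} + \Pi_{>t}M\Pi_{>t}$ gives $0 \leq \rho'' \leq \rho$, whence, by operator monotonicity of the square root, $F(\rho,\rho'') \geq \Tr\rho''$ and therefore $F(\rho,\rho')^2 \geq \Tr\rho'' = 1 - \Tr(\rho - \rho'')$. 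What is \emph{not} immediate, and is the heart of the proof, is bounding the discarded mass $\Tr(\rho - \rho'')$. Classically this was free because the threshold event was itself a good POVM element; here the natural candidate $\Pi_{>t}$ need not satisfy $\Tr(\Pi_{>t}\sigma) \lesssim 2^{-\tau}\Tr(\Pi_{>t}\rho)$, because $\sigma^{1/2}$ does not preserve the eigenspaces of $M$ — the classical slogan ``the eigenvalues of $M$ are the pointwise likelihood ratios'' simply fails when $\rho$ and $\sigma$ do not commute. I expect this to be resolved by a controlled discretization of the spectrum of $\sigma$ — pinching $\sigma$ to a state whose eigenvalues are consecutive powers of two so that $M$ commutes with $\sigma$ up to a small, quantifiable distortion — while tracking the resulting change in $S(\rho\|\sigma)$; this auxiliary step is presumably where the additive $O(\sqrt{S(\rho\|\sigma)})$ slack in $c$ originates. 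Once $\Tr(\rho-\rho'') \leq (S(\rho\|\sigma) + O(\sqrt{S(\rho\|\sigma)}) + O(1))/\tau$ is secured, taking $\tau = rc$ makes $F(\rho,\rho')^2 \geq 1 - 1/r$ and $\lambda = \Tr\rho''/t \geq \frac{r-1}{r}2^{-rc}$, and the reformulation step concludes the proof.
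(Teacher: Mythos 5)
You should first note that this paper contains no proof of Fact~\ref{fact:substate} at all: it is imported as a black box from \cite{jain:substate}, so the only meaningful comparison is with the proof in that reference (and its later simplifications). Within that comparison, your reformulation and your classical core are correct and standard: the Uhlmann/purification argument showing that the stated form is equivalent to finding $\rho'$ with $F(\rho,\rho')\geq\sqrt{1-1/r}$ and $\lambda\rho'\leq\sigma$ for $\lambda\geq\frac{r-1}{r}2^{-rc}$ is exactly how the theorem is used; the truncation-of-the-likelihood-ratio argument is precisely the classical substate statement, which in fact appears in this paper as Lemma~\ref{lem:markovsubstate}; and the bound $D(\rho\|\sigma)\leq S(\rho\|\sigma)+O(1)$ on the observational divergence is indeed the first (easy) step of the original proof.

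The genuine gap is the step you yourself flag as ``the heart of the proof'': bounding the discarded mass $\Tr(\rho-\rho'')$ after truncating the spectrum of $M=\sigma^{-1/2}\rho\sigma^{-1/2}$. That step is not an auxiliary technicality to be patched by discretization --- it is the entire quantum content of the theorem --- and the fix you sketch would not work as described. Rounding the eigenvalues of $\sigma$ to powers of two has no effect on whether $M$ commutes with $\sigma$: commutation is a property of eigenvectors, and the obstruction is the non-commutativity of $\rho$ and $\sigma$, which perturbing $\sigma$'s spectrum leaves untouched. The other natural pinching --- projecting $\rho$ onto the eigenbasis (or eigenvalue blocks) of $\sigma$ --- preserves the relative-entropy bound by data processing but can destroy the fidelity between $\rho$ and its pinched version (take $\rho$ pure in a basis unbiased with respect to $\sigma$), and a Hayashi-type pinching inequality only loses a factor equal to the number of distinct eigenvalues of $\sigma$, i.e.\ roughly $\log(1/\lambda_{\min}(\sigma))$, which is not dimension-free and cannot yield $c=S(\rho\|\sigma)+O(\sqrt{S(\rho\|\sigma)})+O(1)$. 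The actual proof in \cite{jain:substate} takes a different route: it works with the observational divergence and establishes a ``lifting'' of the divergence bound to purifications, from which the substate decomposition is extracted (this lifting, not any spectral discretization, is where the $O(\sqrt{S})$ slack originates); the later short proofs of Jain and Nayak obtain your fidelity-smoothed max-relative-entropy reformulation directly via a minimax/SDP-duality argument for fidelity. None of the known arguments proceeds by spectral truncation of $\sigma^{-1/2}\rho\sigma^{-1/2}$, and since the one step you leave open is exactly the theorem, the proposal as it stands is not a proof.
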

\begin{definition}[$(\delta, \alpha)$-corrector]
\label{def:corrector}
Let $\alice$ and $\bob$ form a bipartite quantum system.  Let $X$ denote
$\alice$'s input register, whose values range over the set $\cX$.  For
$x\in \cX$, let $\sigma_x$ be a state wherein the state of the 
register
$X$ is $\ket{x}$; that is, $\sigma_x$ has the form $\ketbra{x} \otimes
\rho_x$. Let $\mu$ be a probability distribution on $\cX$.  Let
$\sigma$ be some other joint state of $\alice$ and $\bob$. A $(\delta,
\alpha)$-corrector for the ensemble 
$\{\{\sigma_x\}_{x \in \cX}; \sigma\}$ with
respect to the distribution $\mu$ is a family of unnormalized
superoperators $\{\cM_x\}_{x \in \cX}$ acting only on $\alice$'s 
qubits such that:
\begin{enumerate}
\item $r_x \defeq \Tr \cM_x(\sigma)= \alpha$ for all 
      $x \in \cX$, that is,
      $\cM_x$ when applied to $\sigma$ succeeds with probability 
      exactly $\alpha$.

\item $\cM_x(\sigma)$ has the form 
      $\ketbra{x} \otimes \rho'_x$, that is,
      the state of the register $X$ of $\alice$ is $\ket{x}$ when
      $\cM_x$ succeeds.

\item $\E_\mu\left[\trnorm{\sigma_x - \frac{1}{\alpha}\cM_x(\sigma)}
             \right] \leq \delta$, that is, $\cM_x$ on success
      {\em corrects} 
      the state $\sigma$ by bringing it to within trace distance
      $\delta$ from $\sigma_x$.
\end{enumerate}
\end{definition}

We shall also need the following observation.
\begin{proposition}
\label{prop:trdistp}
Suppose a boolean-valued measurement $\cM$ succeeds with 
probabilities
$p$, $q$ on quantum states $\rho$, $\sigma$ respectively. Let 
$\rho'$, $\sigma'$ be the respective quantum states if $\cM$ 
succeeds. Then, 
$\trnorm{\rho' - \sigma'} \leq \frac{1}{\max\{p,q\}} 
                               \trnorm{\rho - \sigma}$.
\end{proposition}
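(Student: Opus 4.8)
The plan is to prove $\trnorm{\rho' - \sigma'} \le \frac{1}{\max\{p,q\}}\trnorm{\rho - \sigma}$ by producing an explicit formula for $\rho' - \sigma'$ that exhibits it as a multiple of a difference of two positive semidefinite operators. By the symmetry of the statement I may assume $p \ge q$, and that $p > 0$ (otherwise $\rho'$, $\sigma'$ are undefined and there is nothing to prove), so $\max\{p,q\} = p$. I regard the ``success'' outcome of the boolean measurement $\cM$ as an unnormalized superoperator in the sense of Fact~\ref{fact:substate}: $\cM$ is completely positive with $\Tr\cM(\tau) \le \Tr\tau$ for every state $\tau$, and $p = \Tr\cM(\rho)$, $q = \Tr\cM(\sigma)$, $\rho' = \cM(\rho)/p$, $\sigma' = \cM(\sigma)/q$.

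The key step is to decompose $\rho - \sigma$ through its positive and negative parts: write $\rho - \sigma = \lambda(\rho_1 - \rho_2)$, where $\lambda \defeq \frac12\trnorm{\rho-\sigma}$ and $\rho_1$, $\rho_2$ are the normalizations of $(\rho-\sigma)_+$ and $(\rho-\sigma)_-$, hence density operators with mutually orthogonal supports. (If $\lambda = 0$ then $\rho = \sigma$, so $\rho' = \sigma'$ and the claim holds trivially; so assume $\lambda > 0$.) Applying $\cM$ and taking traces gives $p = q + \lambda(a-b)$, where $a \defeq \Tr\cM(\rho_1)$ and $b \defeq \Tr\cM(\rho_2)$ lie in $[0,1]$; since $p \ge q$, this forces $a \ge b$. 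Then, substituting $\cM(\rho) = \cM(\sigma) + \lambda\big(\cM(\rho_1) - \cM(\rho_2)\big)$ into $\rho' - \sigma' = \cM(\rho)/p - \cM(\sigma)/q$ and combining the two $\cM(\sigma)$ terms (whose net coefficient is $\frac1p - \frac1q = -\frac{\lambda(a-b)}{pq}$), I would obtain
$$\rho' - \sigma' \;=\; \frac{\lambda}{p}\left(\cM(\rho_1) - \cM(\rho_2) - \frac{a-b}{q}\,\cM(\sigma)\right).$$

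To finish, I would apply the triangle inequality for the trace norm. Since $a \ge b$ and $\cM(\rho_2)$, $\cM(\sigma)$ are positive semidefinite, the operator $\cM(\rho_2) + \frac{a-b}{q}\cM(\sigma)$ is positive semidefinite, as is $\cM(\rho_1)$; hence the trace norm of their difference is at most the sum of their traces, $a + b + \frac{a-b}{q}\cdot q = 2a$. Therefore $\trnorm{\rho'-\sigma'} \le \frac{\lambda}{p}\cdot 2a \le \frac{2\lambda}{p} = \frac{1}{p}\trnorm{\rho-\sigma}$, where the last inequality uses $a = \Tr\cM(\rho_1) \le \Tr\rho_1 = 1$ (as $\cM$ is trace-non-increasing). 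This is exactly the claimed bound.

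The hard part is really the bookkeeping of the cross term: splitting $\rho'-\sigma'$ as $\frac1p\big(\cM(\rho)-\cM(\sigma)\big) + \big(\frac1p-\frac1q\big)\cM(\sigma)$ and bounding each piece separately leaves an extra additive error of order $(p-q)/p$, which is too lossy. The resolution is to push the $\cM(\sigma)$ contribution into $-\cM(\rho_2)$ instead --- legitimate precisely because $a \ge b$ keeps that combination positive semidefinite --- so that $\rho'-\sigma'$ becomes $\frac\lambda p$ times a difference of two positive operators with total trace exactly $2a \le 2$. The remaining points are routine: the degenerate cases $\lambda=0$ and $p=q$, and the observation that the ``success'' branch of a two-outcome measurement is a completely positive trace-non-increasing map, which is what pins $a,b,p,q$ to $[0,1]$.
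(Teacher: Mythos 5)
Your proof is correct, but it takes a genuinely different route from the paper's. The paper argues \emph{operationally}: it picks a projector $M'$ achieving $\Tr M'(\rho'-\sigma') = \frac12\trnorm{\rho'-\sigma'}$, composes it with the success branch of $\cM$ to form a POVM element $M''$ acting on $\rho,\sigma$, and then chains $\frac12\trnorm{\rho-\sigma} \ge \Tr M''\rho - \Tr M''\sigma = p\,\Tr M'\rho' - q\,\Tr M'\sigma' \ge p(\Tr M'\rho' - \Tr M'\sigma') = \frac p2\trnorm{\rho'-\sigma'}$, using $q\le p$. This is the standard ``a distinguisher for the outputs yields a distinguisher for the inputs'' argument, and it leans on the variational characterization of trace distance. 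Your proof is instead \emph{algebraic}: Jordan-decompose $\rho-\sigma = \lambda(\rho_1-\rho_2)$, push through $\cM$, absorb the cross-term $(\frac1p-\frac1q)\cM(\sigma)$ into $-\cM(\rho_2)$ using $a\ge b$ so that $\rho'-\sigma'$ becomes $\frac\lambda p$ times a difference of two PSD operators of traces $a$ and $2a-... $ — wait, of traces $a$ and $b + (a-b) = a$, summing to $2a\le 2$ — and finish by the triangle inequality. Both proofs are correct and give the same bound; yours avoids invoking the measurement-based characterization of trace distance but requires the cleverer bookkeeping you describe, whereas the paper's is shorter once that characterization is granted. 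One small hygiene point: you should also assume $q>0$, not just $p>0$, both because $\sigma'$ is otherwise undefined and because your formula contains a $\frac{a-b}{q}$ factor — this is implicit in the statement but worth saying.
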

\begin{proof}
We formalize the intuition that if some measurement distinguishes 
$\rho'$ and $\sigma'$, then there is a measurement that 
distinguishes $\rho$ and $\sigma$.
Assume $p \geq q$ (otherwise interchange the roles of $\rho$ and
$\sigma$). Now there exists (see e.g.~\cite{aharonov:mixed}) 
an orthogonal projection $M'$, such that 
$\Tr M'(\rho' - \sigma') = \frac{\trnorm{\rho' - \sigma'}}{2}$.
Let $M''$ be the POVM element obtained by first applying POVM
$\cM$ and
on success applying $M'$. Then the probability of success of $M''$ on
$\rho$ is $p \cdot \Tr M'\rho'$, and the probability of success of 
$M''$ on $\sigma$ is 
$q \cdot \Tr M'\sigma' \leq p \cdot \Tr M'\sigma'$.
Thus, 
\begin{eqnarray*}
\frac{1}{2}\trnorm{\rho - \sigma} 
& \geq & \Tr M''\rho - \Tr M''\sigma \\
& \geq & p (\Tr M'\rho' - \Tr M'\sigma') \\
&   =  & \frac{p}{2} \cdot \trnorm{\rho' - \sigma'},
\end{eqnarray*}
implying that 
$\trnorm{\rho' - \sigma'} \leq \frac{\trnorm{\rho - \sigma}}{p}$.
\qed
\end{proof}

We are now ready to use the Substate Theorem to
show the existence of good correctors when $\bob$'s
state does not contain much information about $\alice$'s input.
While applying the Substate Theorem below,
it will be helpful to think of $\alice$'s Hilbert space as
$\cK \otimes \C^2$ and $\bob$'s Hilbert space as $\cH$ in
Fact~\ref{fact:substate}.
\begin{lemma}
\label{lem:corrector}
For $x \in \cX$, let $\ket{\phi_x}\defeq\ket{x}\ket{\psi_x}$ be a
joint pure state of $\alice$ and $\bob$, where $\ket{x}$ 
and possibly some other qubits of
$\ket{\psi_x}$ belong to $\alice$'s subsystem $A$, and the remaining
qubits of $\ket{\psi_x}$ belong to $\bob$'s subsystem $B$. Let $\mu$ 
be a probability
distribution on $\cX$; let $\sigma \defeq \E_\mu \ketbra{\phi_x}$ and
$\ket{\phi} \defeq \sum_x \sqrt{\mu(x)} \ket{\phi_x}$.
Let $X$ denote the register of $\alice$ containing $\ket{x}$. 
Suppose $I(X : B)=k$, when the joint state 
of $A B$ is $\sigma$.  Then for
$\delta > 0$, there is a $(\delta, \alpha)$-corrector
$\{\cM_x\}_{x \in \cX}$ for the ensemble
$\{\{\ket{\phi_x}\}; \ket{\phi}\}$ where 
$\alpha = 2^{-O(k/\delta^3)}$.
\end{lemma}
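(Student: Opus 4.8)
The plan is to apply the Substate Theorem (Fact~\ref{fact:substate}) to the pair of states held by $\bob$, namely $\rho_x \defeq \Tr_A \ketbra{\phi_x}$ and $\rho \defeq \Tr_A \sigma = \E_\mu \rho_x$, and then pull the resulting correcting superoperator back onto $\alice$'s side using the fact that $\ket{\phi}$ is a purification of $\rho$. First I would observe that $I(X:B) = k$ under $\sigma$ unfolds, by definition of mutual information for the classical–quantum state $\sigma = \E_\mu \ketbra{x}\otimes\ketbra{\psi_x}$, as $k = \E_\mu S(\rho_x \| \rho)$ (this is the standard identity $I(X:B) = \E_x S(\rho_x\|\rho)$ for a c-q state). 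Hence by Markov's inequality, for any $\delta > 0$, the set $\Good$ of inputs $x$ with $S(\rho_x\|\rho) \le k/\delta'$ has $\mu$-measure at least $1 - \delta'$ for a suitable $\delta' = \Theta(\delta)$; on this good set the relative entropy is bounded, so the Substate Theorem applies with a uniform parameter.

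Next, fix $x \in \Good$ and apply Fact~\ref{fact:substate} with $\cH$ being $\bob$'s space ($=B$), $\cK\otimes\C^2$ playing the role of $\alice$'s space $A$, $\sigma \leftarrow \rho$, $\rho \leftarrow \rho_x$, and the global purification $\ket{\phi}$ of $\rho$ in the role of $\ket{\overline{\sigma'}}$. Choosing $r$ a small constant (say $r = 2$ or tuned to $\delta$) in the Substate Theorem, it guarantees an unnormalized superoperator acting only on $\alice$'s qubits — those of $\ket{\phi}$ other than $\bob$'s — whose normalized output is a pure state $\ket{\phi'_x}$ with $\trnorm{\ketbra{\phi'_x} - \ketbra{\overline{\rho_x}}} \le 2/\sqrt{r}$, where $\ket{\overline{\rho_x}}$ is the purification of $\rho_x$ obtained. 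Here I would use the fact that all purifications of $\rho_x$ on the same space are related by a unitary on the purifying (i.e.\ $\alice$'s) registers, so after composing with one more local unitary on $\alice$'s side we may take the corrected state to be exactly $\ket{\phi_x} = \ket{x}\ket{\psi_x}$ up to the same trace-distance error; this also lets us enforce that the register $X$ reads $\ket{x}$ on success (condition~2 of Definition~\ref{def:corrector}). The success probability delivered by the theorem is at least $\frac{r-1}{r\,2^{rc}}$ with $c = S(\rho_x\|\rho) + O(\sqrt{S(\rho_x\|\rho)}) + O(1) \le O(k/\delta)$ for $x\in\Good$, giving probability $\ge 2^{-O(k/\delta)}$; for $x \notin \Good$ I define $\cM_x$ to be any trivial operator with the same tiny success probability (condition~1 requires the success probability to be \emph{exactly} $\alpha$, so I renormalize all the $\cM_x$ down to the common value $\alpha \defeq 2^{-O(k/\delta^3)}$, the minimum over $x$, by scaling — this is where one of the $\delta$ factors is spent).

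Finally I would assemble the three error contributions. Picking $r$ so that $2/\sqrt r \le \delta/3$, each good $x$ contributes trace-distance error at most $\delta/3$ via Proposition~\ref{prop:trdistp} (needed because renormalizing the success probability from its natural value down to $\alpha$ is exactly the situation of a measurement succeeding with two different probabilities, and the proposition bounds how much the conditional state can move — this is where the cube in $\delta^3$ really comes from, since the ratio of probabilities is itself $2^{O(k/\delta)}/2^{-O(k/\delta^3)}$ and must be controlled); the good set itself excludes $\mu$-mass $\le \delta/3$, on which we bound the trace distance trivially by the maximum value, $2$, absorbed into the $\delta$ budget by taking $\delta'$ a small enough multiple of $\delta$; and any slack in identifying purifications costs nothing. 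Summing, $\E_\mu[\trnorm{\ketbra{\phi_x} - \frac1\alpha \cM_x(\ketbra{\phi})}] \le \delta$, which is condition~3. The main obstacle I anticipate is the bookkeeping in the third step: reconciling the Substate Theorem's \emph{lower bound} on success probability with Definition~\ref{def:corrector}'s demand of an \emph{exact} common success probability $\alpha$, while keeping the conditional-state error under control through Proposition~\ref{prop:trdistp}; getting the exponent right here is precisely what forces $\alpha$ down to $2^{-O(k/\delta^3)}$ rather than the naive $2^{-O(k/\delta)}$, and one must be careful that the renormalization is applied consistently to \emph{all} $x$ (good and bad alike) before invoking the proposition.
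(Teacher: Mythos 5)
Your overall plan matches the paper's: take partial traces $\rho_x = \Tr_A\ketbra{\phi_x}$ and $\rho = \Tr_A\ketbra{\phi}$, identify $k = \E_\mu S(\rho_x\|\rho)$, throw away a $\delta$-fraction of bad $x$ by Markov, and apply the Substate Theorem on the remaining good $x$ with $r = \Theta(1/\delta^2)$ so that $2/\sqrt r$ and the excluded $\mu$-mass are both $O(\delta)$; the exponent $rc = O(1/\delta^2)\cdot O(k/\delta) = O(k/\delta^3)$ then falls out directly. That part is fine. However, there is one genuine gap.

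You claim that because all purifications of $\rho_x$ on the same space are unitarily related, composing with one more local unitary ``lets us enforce that the register $X$ reads $\ket{x}$ on success.'' This does not work. The Substate Theorem produces a pure state that is only \emph{close} to the chosen purification $\ket{\phi_x}$, not equal to it, and a local unitary on Alice's side cannot turn a state that is $2/\sqrt r$ away from $\ket{\phi_x}$ into one whose $X$-register is exactly $\ket{x}$; the resulting state would still have some weight on wrong values of $X$. The paper closes this gap by explicitly measuring $X$ after the substate superoperator and conditioning on the outcome being $x$ (declaring failure otherwise), and it is precisely here that Proposition~\ref{prop:trdistp} enters: the measurement succeeds with probability $1$ on $\ketbra{\phi_x}$ and with some probability $q'_x \geq 1 - 1/\sqrt r$ on the approximate state, so the conditioned state is still within $2/\sqrt r$ of $\ketbra{\phi_x}$. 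By contrast, you invoke Proposition~\ref{prop:trdistp} to justify scaling the success probabilities down to a common $\alpha$; but that step is just an artificial coin-flip failure applied after success, which does not change the conditional state at all and requires no such bound. Consequently, your explanation of where the $\delta^3$ ``really comes from'' (the ratio of probabilities during renormalization) is also misplaced; the $\delta^3$ is simply the product of the $1/\delta$ in the Markov bound with the $1/\delta^2$ in $r$, both of which you already introduce elsewhere. (An alternative to the paper's measurement that would also rescue your argument: discard the $X$ register entirely and tensor in a fresh $\ket{x}$; this is a CPTP map that fixes $\ketbra{\phi_x}$ and cannot increase trace distance, so the $2/\sqrt r$ bound is preserved while condition~2 is enforced exactly. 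But that too is different from the local-unitary argument you wrote.)
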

\begin{proof}
Let $\rho_x \defeq \Tr_A \ketbra{\phi_x}$ and $\rho \defeq \Tr_A
\ketbra{\phi}$. Note that $\rho = \E_{\mu} \rho_x$. Now, 
$k = I(X : B) = \E_{\mu} S(\rho_x \| \rho)$. 
By Markov's inequality, there is a
subset $\Good \subseteq \cX$, $\Pr_{\mu} [\Good] \geq 1 - \delta / 4$,
such that for all $x \in \Good$, $S(\rho_x \| \rho) \leq 4 k/\delta$.
We will define superoperators $\cM_x$ for $x\in \Good$ and $x \not\in
\Good$ separately, and then show that they form a $(\delta,
\alpha)$-corrector.

Fix $x \in \Good$. Using Fact~\ref{fact:substate} with $r$ to be
chosen later, we conclude that for all $x \in \Good$, there is an
unnormalized superoperator $\tcM_x$ acting on $A$ only
such that if $q_x \defeq \Tr
\tcM_x(\ketbra{\phi})$, $\tsigma_x \defeq
\frac{\tcM_x(\ketbra{\phi})}{q_x}$ then, 
$q_x \geq \frac{r - 1}{r2^{4rk/\delta}}$
and $\trnorm{\tsigma_x - \ketbra{\phi_x}} \leq
\frac{2}{\sqrt{r}}$. Now, measure register $X$ in $\tsigma_x$ and
declare success if the result is $x$. Let ${\sigma'_x}$ be the
resulting normalized state when $x$ is observed.  
Measuring $X$ in $\ket{\phi_x}$ results gives the value $x$ with
probability $1$. Hence, by
Proposition~\ref{prop:trdistp},
\[\trnorm{\sigma'_x - \ketbra{\phi_x}} \leq \frac{2}{\sqrt{r}}.\]
Furthermore, since $\trnorm{\tsigma_x - \ketbra{\phi_x}} \leq
\frac{2}{\sqrt{r}}$, the probability $q'_x$ of observing $x$ when $X$ is
measured in the state $\tsigma_x$ is at least
$1-\frac{1}{\sqrt{r}}$, and the overall probability of success is at
least 
$q_x q'_x \geq (1-\frac{1}{\sqrt{r}})(\frac{r-1}{r2^{4rk/\delta}})
 \defeq \alpha$. 
In order to ensure that the overall probability of success is
exactly $\alpha$, we do a further {\em rejection} step:
Even on success we artificially declare
failure with probability $1 - \frac{\alpha}{q_x q'_x}$.
Let $\cM_x$ be the unnormalized superoperator which first
applies $\tcM_x$, then measures the register $X$, and on finding $x$
accepts with probability $\frac{\alpha}{q_x q'_x}$. Thus, for all $x
\in \Good$, the probability of success $r_x \defeq \Tr
\cM_x(\ketbra{\phi})$ is exactly equal to $\alpha$. This completes the
definition of $\cM_x$ for $x \in \Good$.

For $x \not \in \Good$, $\cM_x$ swaps $\ket{x}$ into register
$X$ from some outside ancilla initialized to $\ket{0}$
and declares success artificially with
probability  $r_x=\alpha$. For all $x \in \cX$, let
$\sigma'_x \defeq \frac{\cM_x(\ketbra{\phi})}{r_x}$.

Thus for all $x \in \cX$, $\sigma'_x$ contains $\ket{x}$ 
in register $X$ and $r_x = \alpha$. Finally, we have
\begin{eqnarray*}
\lefteqn{\E_{\mu} \trnorm{\sigma'_x - \ketbra{\phi_x}}} \\
& \leq & \sum_{x \in \Good} \mu(x) 
         \trnorm{\sigma'_x - \ketbra{\phi_x}} 
         + \sum_{x \not \in \Good} \mu(x) \cdot 2 \\
& \leq & \frac{2}{\sqrt{r}} + \frac{\delta}{4} \cdot 2.
\end{eqnarray*}
For $r=\frac{16}{\delta^2}$, this quantity is at most $\delta$, and we
conclude that the family $\{\cM_x\}_{x \in \cX}$ forms the
required $(\delta, \alpha)$-corrector for the ensemble
$\{\{\ket{\phi_x}\}_{x \in \cX}; \ket{\phi}\}$ with $\alpha =
2^{-O(k/\delta^3)}$.
\qed
\end{proof}

\subsection{Miscellaneous} We have the following Lemma.
\begin{lemma}
\label{lem:markovsubstate} Let $\delta > 0$. Let $P,Q$ be probability distributions with
support on set $\cX$ such that $S(P||Q) \leq c$. Then, we get a set
$\Good \subseteq \cX$ such that
\begin{equation}
\Pr_{P}[ x\in \Good] \geq  1 - \delta \quad \text{ and }  \quad  \forall
x \in \Good, \frac{P(x)}{Q(x)} \leq 2^{\frac{c+1}{\delta}}.
\end{equation}
\end{lemma}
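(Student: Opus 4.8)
The statement is the elementary classical analogue of the "most-$x$" conclusion extracted from the Substate Theorem, and the plan is the obvious Markov-inequality argument. Define the random variable $Z \defeq \log \frac{P(x)}{Q(x)}$ with $x$ drawn from $P$, so that $\E_P[Z] = S(P\|Q) \leq c$ by definition of relative entropy. The only subtlety is that $Z$ can be negative (whenever $P(x) < Q(x)$), so one cannot apply Markov's inequality directly to $Z$; instead I would shift it. Consider $Z' \defeq Z + 1$, which need not be nonnegative either, so that shift alone is not enough — the right move is to bound the probability that $Z$ is large using the fact that $\E_P[\max\{Z,0\}]$ is not much bigger than $\E_P[Z]$, or more simply to observe that $Z \geq -\log e \cdot \tfrac{1}{\ln 2}$ is false in general but $\E_P[Z] \geq$ (something bounded below) lets us work with $Z - (\text{its lower bound})$.

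Concretely, here is the cleanest route. We have $\sum_x (Q(x) - P(x)) = 0$, hence $\sum_{x: P(x) > 0} P(x)\bigl(1 - \tfrac{Q(x)}{P(x)}\bigr) = 0$, i.e.\ $\E_P\bigl[1 - \tfrac{Q(x)}{P(x)}\bigr] = 0$. Now write $\E_P\bigl[Z + 1 - \tfrac{Q(x)}{P(x)}\bigr] = \E_P[Z] + 0 \leq c$. The point is that the integrand $g(x) \defeq \log\tfrac{P(x)}{Q(x)} + 1 - \tfrac{Q(x)}{P(x)}$ is nonnegative pointwise: setting $u = \tfrac{Q(x)}{P(x)} > 0$, we must check $-\log u + 1 - u \geq 0$, equivalently (multiplying by $\ln 2$) $-\ln u + \ln 2 \cdot(1-u) \geq 0$; this is not literally true for all $u$, so I would instead use the genuinely valid inequality $\ln u \leq u - 1$, which gives $-\log u \geq \tfrac{1}{\ln 2}(1 - u) \geq 1 - u$ only when $1 - u \geq 0$. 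So the correct nonnegative quantity to use is $h(x) \defeq \log\tfrac{P(x)}{Q(x)} - (1 - \tfrac{Q(x)}{P(x)})\cdot\tfrac{1}{\ln 2}$, which satisfies $h(x) \geq 0$ for all $x$ by $\ln u \le u-1$; and $\E_P[h] = S(P\|Q) - 0 = S(P\|Q) \leq c$ since the added term has $P$-expectation zero. Then Markov's inequality on the nonnegative variable $h$ gives $\Pr_P[h(x) \geq c/\delta] \leq \delta$, and on the complementary event $\Good$ of probability $\geq 1-\delta$ we have $\log\tfrac{P(x)}{Q(x)} = h(x) + (1 - \tfrac{Q(x)}{P(x)})\tfrac{1}{\ln 2} \leq c/\delta + \tfrac{1}{\ln 2} < (c+1)/\delta$ (using $\tfrac{1}{\ln 2} < 2 \leq$ a slack we can absorb, or just noting $\tfrac{1}{\ln 2} \cdot(1 - Q/P) \le \tfrac{1}{\ln 2} < 1/\delta$ plus being slightly generous). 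Exponentiating yields $\tfrac{P(x)}{Q(x)} \leq 2^{(c+1)/\delta}$ as claimed.

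The main obstacle — really the only one — is the sign issue just flagged: the naive "$\E_P[\log(P/Q)] \le c$, so by Markov $\log(P/Q) \le c/\delta$ often" is wrong because $\log(P/Q)$ is not nonnegative. The fix is to add the mean-zero correction term $\tfrac{1}{\ln 2}(1 - Q(x)/P(x))$ to make the quantity pointwise nonnegative while not changing its $P$-expectation, then apply Markov. I would present exactly this: (i) define $h(x)$, (ii) verify $h \ge 0$ via $\ln t \le t - 1$, (iii) verify $\E_P[h] = S(P\|Q) \le c$, (iv) apply Markov to get $\Good$ with $\Pr_P[\Good] \ge 1-\delta$ and $h(x) \le c/\delta$ on $\Good$, (v) conclude $\log\tfrac{P(x)}{Q(x)} \le c/\delta + \tfrac{1}{\ln 2} \le (c+1)/\delta$ on $\Good$ and exponentiate. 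All steps are one-liners.
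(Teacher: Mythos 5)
Your proof is correct in its essential idea and takes a genuinely different route from the paper's. Both arguments must deal with the same obstacle you correctly identify: $\log\frac{P(x)}{Q(x)}$ is not pointwise nonnegative, so Markov cannot be applied to it directly. Your fix is to add the $P$-mean-zero correction $-\frac{1}{\ln 2}\bigl(1-\frac{Q(x)}{P(x)}\bigr)$, which by $\ln t \le t-1$ makes $h(x) = \frac{1}{\ln 2}\bigl(\frac{Q(x)}{P(x)} - 1 - \ln\frac{Q(x)}{P(x)}\bigr) \ge 0$ pointwise while leaving $\E_P[h] = S(P\|Q) \le c$. The paper instead proves, via the log-sum inequality and $t\log t \ge -(\log e)/e$, that $\sum_{x\in\cX'}P(x)\log\frac{P(x)}{Q(x)} > -1$ for \emph{any} subset $\cX'$; this bounds the total contribution of the negative terms, giving $\sum_{x:P(x)\ge Q(x)}P(x)\log\frac{P(x)}{Q(x)} < c+1$, and Markov is then applied to the truncated nonnegative quantity $\max\{\log\frac{P(x)}{Q(x)},0\}$. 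The two decompositions are different (a pointwise Bregman-type correction versus an aggregate lower bound on the negative part), but both are one step and elementary; your version has the small bonus of yielding the slightly sharper bound $\frac{P(x)}{Q(x)} \le e\cdot 2^{c/\delta}$ when $\delta < \ln 2$.

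One parameter detail deserves flagging: your final step reads $\log\frac{P(x)}{Q(x)} \le c/\delta + \frac{1}{\ln 2}$ and then asserts this is $\le (c+1)/\delta$, which requires $\frac{1}{\ln 2} \le \frac{1}{\delta}$, i.e.\ $\delta \le \ln 2 \approx 0.69$. The lemma as stated only assumes $\delta>0$, so your argument as written does not cover $\delta\in(\ln 2,1)$ (for $\delta\ge 1$ the conclusion is vacuous). This is harmless in context — the lemma is invoked with $\delta = \tilde\delta/5$ for a small constant $\tilde\delta$ — but you should either restrict the hypothesis to $\delta \le \ln 2$, or note that for $\delta > \ln 2$ one can simply run the argument with $\delta' = \ln 2$, which produces a set $\Good$ of even larger $P$-measure satisfying the required ratio bound a fortiori.

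Also make sure, when writing $\frac{Q(x)}{P(x)}$, to restrict attention to $x$ with $P(x)>0$ (points of $P$-measure zero are irrelevant to both conditions); and observe that $Q(x)>0$ there, since otherwise $S(P\|Q)=\infty$.
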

\begin{proof}
We first have the following claim:
\begin{claim}Let $P$ and $Q$ be two distributions on the set
  $\cX$. For any set $\cX' \subseteq \cX$, we have
$$\sum_{x\in \cX'} P(x)\log
  \frac{P(x)}{Q(x)} \quad \geq \quad -\frac{\log e}e \quad > \quad -1.$$
\end{claim}
\begin{proof}
We require the following facts.
\begin{enumerate}
\item {\em log-sum inequality:} For non-negative integers
  $a_1,\dots,a_n$ and $b_1,\dots,b_n$, $$\sum a_i\log\frac{a_i}{b_i}
  \geq \left(\sum a_i\right) \log \frac{\sum a_i}{\sum b_i}.$$
\item The function $x\log x \geq -(\log e)/e$ for all $x >0$.
\end{enumerate}
From the above, we have the following sequence of inequalities.
\begin{eqnarray*}
\sum_{x\in \cX'} P(x)\log \frac{P(x)}{Q(x)}
& =& \sum_{x\in \cX'} P(x)\log \frac{P(x)}{Q(x)} +
\sum_{x\notin \cX'} Q(x)\log \frac{Q(x)}{Q(x)}\\
&\geq & \left(\sum_{x\in
    \cX'} P(x) +
  \sum_{x\notin\cX'}Q(x)\right)
    \log \left(\frac{\sum_{x\in
    \cX'} P(x) +
  \sum_{x\notin\cX'}Q(x)}{\sum_{x\in\cX}Q(x)}\right)\\
&=& \left(\sum_{x\in
    \cX'} P(x) +
  \sum_{x\notin\cX'}Q(x)\right)
    \log \left(\sum_{x\in
    \cX'} P(x) +
  \sum_{x\notin\cX'}Q(x)\right)\\
&\geq & -\frac{\log e}e
\end{eqnarray*}
\qed\end{proof}
Now:
\begin{eqnarray*}
c \geq S(P||Q) & = & \sum_{x: P(x) \geq Q(x)} P(x) \log \frac{P(x)}{Q(x)} +
\sum_{x: P(x) < Q(x)} P(x) \log \frac{P(x)}{Q(x)}  \\
& > & \sum_{x: P(x) \geq Q(x)} P(x) \log \frac{P(x)}{Q(x)} - 1 \\
\Rightarrow  c + 1 & > & \sum_{x: P(x) \geq Q(x)} P(x) \log
\frac{P(x)}{Q(x)} 
\end{eqnarray*}
Now we get our desired set $\Good$ immediately by using Markov's inequality.
\qed\end{proof}

We also need the following lemma.
\begin{lemma}
\label{lem:lowinfent}
Let $XAB$ be a tri-partite system with $X$ classical and $A,B$ quantum
systems. If $I(X:A)=0$ then $I(X:AB) \leq 2S(B)$.
\end{lemma}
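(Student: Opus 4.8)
The statement to prove is: if $XAB$ is tripartite with $X$ classical, and $I(X:A) = 0$, then $I(X:AB) \le 2S(B)$. The plan is to expand $I(X:AB)$ using the chain rule and the hypothesis. By definition $I(X:AB) = I(X:A) + I(X:B \mid A) = I(X:B \mid A)$ since $I(X:A)=0$. So it suffices to bound the conditional mutual information $I(X:B\mid A)$ by $2S(B)$. I would then write $I(X:B\mid A) = I(X:AB) - I(X:A) = S(X) + S(AB) - S(XAB) - S(X) - S(A) + S(XA) = S(XA) + S(AB) - S(A) - S(XAB)$, i.e.\ express everything in terms of von Neumann entropies, and look for a way to collapse the $A$-dependence.

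The key observation I would exploit is that $X$ is \emph{classical}, so the joint state has the block-diagonal form $\sigma_{XAB} = \sum_x p_x \ketbra{x}_X \otimes \sigma^x_{AB}$, and the condition $I(X:A)=0$ forces $\sigma^x_A = \sigma_A$ to be independent of $x$ (the marginal on $A$ is the same in every block). The clean way to finish is to bound each of the two terms $I(X:AB)$ and its pieces directly: $I(X:B\mid A) \le S(B\mid A) + S(B \mid AX) \le \dots$? More robustly, I would argue $I(X:B\mid A) \le H(X{:}B)$-type bounds fail in general, so instead use subadditivity and the Araki--Lieb / triangle inequalities: write $I(X:B\mid A) = S(B\mid A) - S(B\mid AX)$, and bound $S(B\mid A) \le S(B)$ (conditioning reduces entropy, valid since $A$ here plays the role of side information and $S(B\mid A)\le S(B)$ always holds), and bound $-S(B\mid AX) \le S(B)$ using the Araki--Lieb inequality $S(B\mid AX) \ge -S(B)$ (equivalently $|S(AXB) - S(AX)| \le S(B)$). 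Adding these gives $I(X:B\mid A) \le 2S(B)$, which is exactly the claim.

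The main obstacle is getting the signs and the direction of the conditioning inequalities right: one must be careful that $S(B\mid A) \le S(B)$ (strong subadditivity / concavity, always true) while $S(B\mid AX)$ can be negative, and the correct lower bound is $S(B\mid AX) \ge -S(B)$ from Araki--Lieb. The hypothesis $I(X:A)=0$ is what lets us drop that term cleanly via the chain rule; without it we'd be left with an extra $I(X:A)$ term that need not be controlled by $S(B)$. Once the two entropy inequalities are lined up, the rest is a one-line addition, so I do not expect any computational difficulty beyond bookkeeping.
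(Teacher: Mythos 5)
Your proof is correct, and it takes essentially the same approach as the paper: both exploit $I(X:A)=0$ via a chain-rule-type decomposition and then bound the residual term by $2S(B)$ using the Araki--Lieb triangle inequality. The only cosmetic difference is the specific identity: the paper writes $I(X:AB)=I(X:A)+I(XA:B)-I(A:B)$, drops the nonnegative $I(A:B)$, and then applies its corollary $I(M_1:M_2)\le 2\min\{S(M_1),S(M_2)\}$ to $I(XA:B)$; you instead write $I(X:AB)=I(X:A)+I(X:B\mid A)$ and bound $I(X:B\mid A)=S(B\mid A)-S(B\mid AX)$ by $S(B)+S(B)$ via subadditivity and Araki--Lieb separately. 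Both routes are one-liners once the right identity is chosen, and both are valid. One small remark: your detour through the block-diagonal form $\sum_x p_x \ketbra{x}\otimes\sigma^x_{AB}$ and the observation that $I(X:A)=0$ forces $\sigma^x_A=\sigma_A$ is a red herring --- neither your final argument nor the paper's actually uses the classicality of $X$; the lemma holds (with the same proof) for a general quantum register $X$.
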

\begin{proof}
We have the following Araki-Lieb~\cite{ArakiL70} inequality for any two systems $M_1,
M_2$: $|S(M_1) - S(M_2)| \leq S(M_1M_2)$. This implies:
$$ I(M_1:M_2) = S(M_1) + S(M_2) - S(M_1M_2) \leq \min\{2S(M_1), 2S(M_2)\}.$$  
Now,
\begin{eqnarray*} 
I(X:AB) & = & I(X:A) + I(XA:B) - I(A:B)  \\
& \leq &  I(XA:B) \leq 2S(B).
\end{eqnarray*}

\end{proof}

\section{One-way Message Compression and Optimal Direct Sum}
\label{sec:oneround}
Although in this section are concerned with message compression in
one-way protocols, we state our results in a general setting of
compressing the first message of multi-round two-way protocols. This
way of stating our message compression results is helpful in expressing our
round-elimination results. We state our results and proofs here only
for quantum protocols and the corresponding results for classical
protocols can be obtained in analogous fashion. We skip making explicit
statements and proofs for classical protocols for brevity.

\subsection{Message Compression and Round Elimination}  
We begin with the following definition.
\begin{definition}[\mbox{$[t; l_1, \ldots, l_t]^A$} protocol]
In a $[t; l_1, \ldots, l_t]^A$ protocol, there are 
$t$ rounds of communication with $\alice$
starting, the $i$th message being $l_i$ qubits long.
A $[t; l_1, \ldots, l_t]^B$
protocol is the same but $\bob$ starts the communication.
\end{definition}

\begin{theorem}[Compressing the first message]
\label{thm:oneround}
Let $f \subseteq \cX \times \cY \times \cZ$ be a function and $\mu$ be a
probability distribution on $\cX \times \cY$.  Suppose $\cP$ is a $[t;
l_1, l_2, \ldots, l_t]^A$ quantum protocol without prior entanglement
for $f$ having average error less than $\epsilon$ under $\mu$.  Let
$X$ denote the random variable corresponding to $\alice$'s input and
$N_1$ denote the qubits of $\alice$'s first message in $\cP$. Suppose
$I(X : N_1) \leq k$.  Let $\delta > 0$ be a sufficiently small
constant.  Then, there is a $[t; \beta, l_2 \ldots, l_t]^A$ quantum
protocol $\cP'$ with prior entanglement for $f$ with average error
less than $\epsilon + \delta$ under $\mu$, where $\beta =
O\left(\frac{k}{\delta^3}\right)$. Also, the first message of $\alice$ in
$\cP'$ is classical.  
\end{theorem}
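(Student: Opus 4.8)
The plan is to take the protocol $\cP$ and replace $\alice$'s first message by a short classical message that tells $\bob$ which correcting superoperator succeeded, while using prior entanglement to set up a fixed input-independent state that can be corrected into the true joint state. Concretely, fix the input $y$ of $\bob$ (we argue for each $y$ and then average; since $\mu$ need not be product, it is cleanest to condition on $y$, note that $I(X:N_1 \mid Y=y)$ averages to at most $k$, and throw away a $\delta$-fraction of bad $y$'s via Markov). Let $\ket{\phi_x}$ be the joint pure state of $\alice$ and $\bob$ just after $\alice$'s first message is sent in $\cP$ on input $x$ (for the fixed $y$): here $\ket{x}$ and $\alice$'s work/ancilla qubits sit on $\alice$'s side and the message qubits $N_1$ together with $\bob$'s input and work qubits sit on $\bob$'s side. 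Set $\ket{\phi} \defeq \sum_x \sqrt{\mu(x\mid y)}\,\ket{\phi_x}$ and $\sigma \defeq \E_{\mu(\cdot\mid y)} \ketbra{\phi_x}$. By assumption $I(X:B)$ in $\sigma$ is at most (roughly) $k$, where $B$ is $\bob$'s side including $N_1$.

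Next I would invoke Lemma~\ref{lem:corrector} with parameter $\delta$ (rescaled by a constant) to obtain a $(\delta,\alpha)$-corrector $\{\cM_x\}_{x\in\cX}$ for the ensemble $\{\{\ket{\phi_x}\};\ket{\phi}\}$ with $\alpha = 2^{-O(k/\delta^3)}$, where each $\cM_x$ acts only on $\alice$'s qubits. The new protocol $\cP'$ is then: $\alice$ and $\bob$ share as prior entanglement the state $\ket{\phi}$ (which is independent of $x$; it does depend on the fixed $y$, but that is fine — prior entanglement may depend on $\bob$'s input in the one-round model we are reducing to, or we can let $\bob$ prepare it and send his halves, since we only care about the one-round complexity $\sQ^{1,\pub}$). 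On input $x$, $\alice$ applies $\cM_x$ to her share; conditioned on success (probability exactly $\alpha$ for every $x$), the joint state is within trace distance $\delta$ (in expectation over $x$) of $\ket{\phi_x}$, and the register $X$ holds $\ket{x}$. $\alice$ repeats with fresh copies of the entanglement until some $\cM_x$ succeeds — but crucially she does not tell $\bob$ anything except, implicitly, that she is done. Since the success probability $\alpha$ is the same for all $x$, and the message $\cP'$ is one round, the subtlety is that $\bob$ must learn that this round of entanglement is the "live" one; this is handled by having $\alice$ send a constant number of classical bits per trial (a "fail"/"succeed" flag), so the first message becomes a classical string recording the trial index — expected length $O(\alpha^{-1}\cdot\text{const})$... no: to keep the message length $O(k/\delta^3)$ rather than $2^{O(k/\delta^3)}$, I instead have $\alice$ and $\bob$ use a shared public coin to agree on a single candidate trial, so $\alice$'s message is the $O(\log(1/\alpha)) = O(k/\delta^3)$-bit description of which pre-agreed slot worked (more precisely, $\alice$ sends the $\lceil\log(1/\alpha)\rceil = O(k/\delta^3)$-bit outcome of her measurement selecting the successful branch, using the standard "one-way simulation via shared randomness" trick). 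After receiving this classical message, $\bob$ knows the post-correction joint state is $\approx\ket{\phi_x}$, and he continues running $\cP$ from round $2$ onward exactly as before — rounds $l_2,\ldots,l_t$ are untouched.

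For correctness: the joint state after $\alice$'s (new) first message differs from the corresponding state in $\cP$ by at most $\delta$ in expected trace distance; since trace distance does not increase under the remaining (identical) operations and measurements of $\cP$, the output distribution of $\cP'$ is within $\delta$ of that of $\cP$ in expectation, so the average error rises by at most $\delta$, giving error $<\epsilon+\delta$. The message length is $\beta = O(k/\delta^3)$ as claimed, and it is classical. The main obstacle — and the step deserving the most care — is making $\alice$'s selection-of-a-successful-branch produce a genuinely short classical message rather than an exponentially long one: naively the corrector succeeds with tiny probability $\alpha$, so amplifying to constant success while communicating only $O(\log(1/\alpha))$ bits requires the standard device of fixing the random branch with shared randomness / prior entanglement and having $\alice$ communicate only the index of the winning branch, together with an argument (via Yao's principle, Lemma~\ref{lem:yao}, or directly) that a good fixed choice exists. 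A secondary point to get right is the bookkeeping when $\mu$ is non-product: conditioning on $y$, the relevant information quantity is $\E_{y}[I(X:N_1\mid Y=y)] \le I(X:N_1) \le k$, so Markov's inequality discards only a $\delta$-fraction of $y$-mass, absorbed into the error. Everything else is routine: Lemma~\ref{lem:corrector} does the heavy lifting, and the round structure $[t;\beta,l_2,\ldots,l_t]^A$ is preserved by construction.
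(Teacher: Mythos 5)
Your high-level plan is right and matches the paper: use Lemma~\ref{lem:corrector} to get a $(\delta,\alpha)$-corrector, take $O(\alpha^{-1}\log(1/\delta))$ copies of an input-independent state $\ket\phi$ as prior entanglement, apply $\cM_x$ to each, and have $\alice$ send the $O(\log(1/\alpha)) = O(k/\delta^3)$-bit index of the first success. But there is a genuine flaw in the way you set up the ensemble, and the two fixes you propose for it are both wrong.

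The flaw: you include $\bob$'s input and work registers in $\ket{\phi_x}$ and then condition on $y$, defining $\ket{\phi} = \sum_x \sqrt{\mu(x\mid y)}\,\ket{\phi_x}$. This makes $\ket{\phi}$ depend on $y$, both through $\bob$'s register $\ket{y}$ and, more fundamentally, through the conditional amplitudes $\sqrt{\mu(x\mid y)}$ when $\mu$ is non-product. A $y$-dependent $\ket{\phi}$ cannot serve as prior entanglement: by definition prior entanglement is input-independent. Your first fix (``prior entanglement may depend on $\bob$'s input'') is simply not allowed in the model. Your second fix (``let $\bob$ prepare it and send his halves'') turns the protocol into one where $\bob$ speaks first and sends a message whose length is the size of $2^\beta$ copies of $\ket{\phi}$ (huge, not $\beta$), destroying the required $[t;\beta,l_2,\ldots,l_t]^A$ structure. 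The remark ``we only care about the one-round complexity'' does not rescue this: the theorem applies to arbitrary $t$, and Corollary~\ref{cor:oneround} and Theorem~\ref{thm:roundelim} rely on the round structure being preserved exactly.

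The clean way out, which the paper uses, is to define $\ket{\phi_x}$ to consist \emph{only} of $\alice$'s registers (including her input) together with the message qubits $N_1$, and nothing of $\bob$'s. Since $\cP$ has no prior entanglement and $\alice$ speaks first, this state is a deterministic function of $x$ alone --- it does not depend on $y$ at all. Then one sets $\ket{\phi} = \sum_x \sqrt{p_x}\,\ket{\phi_x}$ with $p_x$ the \emph{marginal} of $X$ under $\mu$, so $\ket{\phi}$ is input-independent and is legitimate shared entanglement (with $\bob$ holding the $N_1$ qubits of $\ket\phi$). The relevant information quantity is then exactly $I(X:N_1)\le k$ as given; no conditioning on $Y$, no Markov argument over $y$'s, and no Yao's principle is needed. $\bob$'s own input register and work qubits are untouched by the corrector and he simply has them, correctly initialized, from the start of $\cP'$. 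The error bound then follows from $\E_{\mu_X}\bigl[\trnorm{\sigma'_x - \ketbra{\phi_x}}\bigr]\le\delta/2$ together with the $\delta/2$ failure probability of all $2^\beta$ trials, precisely as you sketch. A secondary point worth fixing in your write-up is the first-message bookkeeping: there is no ``shared public coin to agree on a single candidate trial''; $\alice$ simply applies $\cM_x$ to each of the $2^\beta$ shared copies and sends the index of the first successful one, which is a $\beta$-bit classical message.
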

\begin{proof}
Let $\ket{\phi_x}$ denote the state vector in $\cP$
of $\alice$'s qubits (including her input register) 
and her first message $N_1$
just after she sends $N_1$ to $\bob$, when she is given input 
$x \in \cX$. 
Let $\ket{\phi}$ denote the corresponding state vector
in $\cP$ when the protocol starts with $\alice$'s input registers
in the state $\sum_x \sqrt{p_x} \ket{\phi_x}$, where
$p_x \defeq \Pr_\mu [X = x]$.
Since $I(X : N_1) \leq k$, 
Lem.~\ref{lem:corrector} implies that 
there is a $(\delta/2, \alpha)$-corrector $\{\cM_x\}_{x \in \cX}$
for the
ensemble $\{\{\ket{\phi_x}\}_{x \in \cX}; \ket{\phi}\}$ where 
$\alpha=2^{-O(k/\delta^3)}$. 
That is, with $r_x \defeq \Tr (\cM_x \ketbra{\phi})$ and
$\sigma'_x \defeq 
 \frac{\cM_x (\ketbra{\phi})}{r_x}$, we have
$\E_\mu\left[\trnorm{\sigma'_x- \ketbra{\phi_x}}\right] 
 \leq \frac{\delta}{2}$.

We now describe the protocol $\cP'$. The protocol $\cP'$
starts with $2^\beta \defeq \alpha^{-1} \log (2/\delta)$ copies of 
$\ket{\phi}$ as prior entanglement. $\alice$
applies $\cM_x$ to each copy of $\ket{\phi}$ and sends the index of
the first copy on which she achieves success. Thus, her first message
in $\cP'$ is classical and 
$\beta = \log (1/\alpha) + \log \log (2/\delta) = O(k/\delta^3)$ 
bits long. $\alice$ and $\bob$ use 
that copy henceforth; the
rest of $\cP'$ is exactly as in $\cP$. The probability that $\alice$
achieves success with $\cM_x$ on at least one copy of $\ket{\phi}$ is
more than $1 - \frac{\delta}{2}$. Furthermore, the state of 
$\alice$'s registers
and the first message $N_1$ on this copy is exactly $\sigma'_x$.
Thus, the probability of error for the protocol $\cP'$ is at most
\[ 
\epsilon + \frac{\delta}{2} + \E_\mu
\left[\trnorm{\sigma'_x - \ketbra{\phi_x}}\right] 
\leq \epsilon + \frac{\delta}{2} + \frac{\delta}{2} \leq \epsilon +
\delta.
\]
This completes the proof of the theorem.
\qed\end{proof} 
\paragraph{Remark:}  
We can eliminate prior entanglement in quantum
protocols by assuming that
$\alice$ generates the prior entangled state herself, and then
sends $\bob$'s share of the state along with her first message.
This can make $\alice$'s first message long, but if 
the information about $X$ in
$\alice$'s first message together with $\bob$'s share of prior
entanglement qubits in the original
protocol is small, then the conclusions of the theorem still hold.
\begin{corollary}[Eliminating the first round]
\label{cor:oneround}
Under the conditions of Thm.~\ref{thm:oneround},
if $t \geq 3$ there is a 
$[t - 1; 2^{\beta} l_2, l_3 + \beta, l_4, \ldots, l_t]^B$ quantum 
protocol $\tcP$ with prior entanglement for $f$ with average 
error at most $\epsilon + \delta$ under $\mu$.  
If $t = 2$, we get a $[1; 2^{\beta} l_2]^B$
quantum protocol $\tcP$ with prior entanglement for $f$ with average
error at most $\epsilon + \delta$ under $\mu$.
\end{corollary}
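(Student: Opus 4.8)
The plan is to combine Theorem~\ref{thm:oneround} with the standard ``send-your-share-of-the-entanglement'' trick mentioned in the Remark, in order to turn the compressed first message into a message that $\bob$ can absorb as part of \emph{his} first message, thereby saving a round. First I would invoke Theorem~\ref{thm:oneround} to obtain the protocol $\cP'$, which is a $[t;\beta,l_2,\ldots,l_t]^A$ protocol with prior entanglement, average error at most $\epsilon+\delta$, and whose first message (from $\alice$ to $\bob$) is classical and only $\beta=O(k/\delta^3)$ bits long. The key observation is that in $\cP'$ the prior entanglement consists of $2^\beta$ copies of $\ket{\phi}$, and $\alice$'s first message is simply an index $i\in[2^\beta]$ telling $\bob$ which copy to use; after that the communication proceeds as in $\cP$, so the \emph{second} message of $\cP'$ is $l_2$ qubits from $\bob$ to $\alice$.

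Next I would have $\alice$ \emph{not} send anything in the new protocol $\tcP$; instead $\bob$ goes first. Since $\alice$'s only action before $\bob$'s first message is to apply the local maps $\cM_x$ to her halves of the $2^\beta$ copies of $\ket{\phi}$ and pick a successful index $i$, and since in $\cP'$ all $\bob$ does upon receiving $i$ is to select his half of the $i$-th copy and run the $\cP$-computation on it, $\bob$ can instead send his half of \emph{all} $2^\beta$ copies of his half of the first message he would have sent in $\cP'$. More precisely: $\bob$ simulates, for every candidate index $i\in[2^\beta]$ in superposition (or in parallel, since it is just bookkeeping on his own registers plus the shared state), the message he would have sent in $\cP'$ on the $i$-th copy; this is $2^\beta$ blocks of $l_2$ qubits each, so $\bob$'s first message in $\tcP$ is $2^\beta l_2$ qubits. $\alice$, upon receiving this, has already determined her successful index $i$, so she picks out the corresponding $l_2$-qubit block, and the protocol continues: $\alice$ sends her reply, which in $\cP'$ was $l_3$ qubits, but she must now additionally tell $\bob$ which index $i$ was used (so that $\bob$ can discard the irrelevant blocks and be consistent with $\alice$), costing $\beta$ extra bits; so the second message of $\tcP$ has length $l_3+\beta$. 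Thereafter the messages are $l_4,\ldots,l_t$ exactly as in $\cP$ (hence as in $\cP'$). This yields the claimed $[t-1;2^\beta l_2, l_3+\beta, l_4,\ldots,l_t]^B$ protocol, and for $t=2$ there is no message after $\bob$'s, so $\bob$ simply sends the $2^\beta l_2$ qubits and measures---wait, $\alice$ is the one who knows $i$---so in the $t=2$ case $\bob$'s $2^\beta l_2$-qubit message goes to $\alice$, who selects block $i$ and declares the output; this is a $[1;2^\beta l_2]^B$ protocol. The error is unchanged from $\cP'$, namely at most $\epsilon+\delta$ under $\mu$, since the computation performed is identical to $\cP'$ conditioned on the (successful) index.

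The main obstacle I anticipate is verifying that $\bob$ can legitimately prepare, with no input from $\alice$, the bundle of $2^\beta$ ``what-I-would-send-on-copy-$i$'' messages as a single quantum message without disturbing the shared state in a way that breaks the later computation. The point is that $\bob$'s first-message operation in $\cP'$ is a fixed isometry applied to (his input registers) $\otimes$ (his half of the $i$-th copy of $\ket{\phi}$), independent of $i$ in form, so $\bob$ can apply it copy-by-copy to \emph{all} $2^\beta$ copies, producing $2^\beta$ message registers; $\alice$'s corrector $\cM_x$ acted only on $\alice$'s halves, so $\bob$'s halves of all copies are still available to him and his operations on distinct copies commute and act on disjoint registers. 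Consequently no coherence is lost, and when $\alice$ later announces $i$, both parties can (reversibly, if one cares about cleanliness) uncompute or ignore the $2^\beta-1$ unused blocks. Once this commutation/disjointness bookkeeping is laid out carefully, the round count and message-length arithmetic is immediate, and the error bound is inherited verbatim from Theorem~\ref{thm:oneround}.
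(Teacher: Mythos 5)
Your proposal is correct and essentially identical to the paper's own proof: you use $\cP'$ from Theorem~\ref{thm:oneround}, have $\bob$ send his would-be second message for all $2^\beta$ copies of $\ket{\phi}$ up front, have $\alice$ reply with the successful index plus her third-round message, and appeal to the fact that $\alice$'s correctors and $\bob$'s message-generating operations act on disjoint registers and hence commute, so the state on the selected copy after $\tcP$'s second message matches the state after $\cP'$'s third message. The message-length accounting, the $t=2$ special case, and the error bound all match the paper's reasoning.
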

\begin{proof}
Suppose $t \geq 3$.  Let $N_2$, $N_3$ denote the second and third
messages of $\cP'$.  Consider a 
$(t - 1)$-round protocol $\tcP$ where $\bob$ 
begins the communication by
sending his messages $N_2$ for all the $2^\beta$
copies of $\ket{\phi}$. This makes $\bob$'s first message in $\tcP$ to
be $2^\beta l_2$ qubits
long. $\alice$ replies by applying $\cM_x$ to each copy of $\ket{\phi}$
and sending the index of the first copy on which she achieves
success. She also sends her response $N_3$ corresponding to that copy
of $\ket{\phi}$. Thus, her first 
message in $\tcP$ is $l_3 + \beta$ qubits long. Note
that the operations of $\bob$ and the applications of $\cM_x$ by $\alice$ 
during the first two messages of
$\tcP$ are on disjoint sets of qubits, hence they commute.  Thus, the
global state vector of $\tcP$ after the second message is exactly the
same as the global state vector of $\cP'$ after the third
message. Hence the error probability remains the same. This proves the
first statement of the corollary.  The second statement of the 
corollary (case $t=2$) can be proved similarly.
\qed\end{proof}
\paragraph{Remark:} The above corollary can be thought of as 
the quantum analogue of the `message switching'
lemma of \cite{ChakrabartiR04}.

Using Corr.~\ref{cor:oneround}, we can now prove our new 
round elimination result for quantum protocols.
\begin{theorem}[Round elimination lemma]
\label{thm:roundelim}
Let $f: \cX \times \cY \rightarrow \cZ$ be a function and $k$, $t$ be
positive integers. Suppose $t \geq 3$.  Suppose $\cP$ is a $[t; l_1,
l_2, l_3, \ldots, l_t]^A$ quantum protocol with prior entanglement for
$f^{(k),A}$ (recall definition Def.~\ref{def:fk}) with worst case
error less than $\epsilon$. Let $\delta > 0$ be a sufficiently small
constant. Let $\beta \defeq O(\frac{l_1}{\delta^3 k})$. Then there is
a $[t - 1; 2^{\beta} l_2, l_3 + \beta, \ldots, l_t]^B$ quantum
protocol with prior entanglement for $f$ with worst case error at most
$\epsilon + \delta$.
\end{theorem}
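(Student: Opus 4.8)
The plan is to reduce the single‑copy function $f$ to $f^{(k),A}$ by embedding a random instance, and then feed the resulting protocol into the first‑message compression machinery (Theorem~\ref{thm:oneround} and Corollary~\ref{cor:oneround}). Since that machinery controls only the \emph{average} error, I would first invoke Yao's principle (the analogue of Lemma~\ref{lem:yao} for this model): it suffices to fix an arbitrary distribution $\nu$ on $\cX\times\cY$ and construct a $[t-1;\,2^\beta l_2,\,l_3+\beta,\,l_4,\ldots,l_t]^B$ protocol with prior entanglement for $f$ whose average error under $\nu$ is at most $\epsilon+\delta$; the communication of the construction does not depend on $\nu$, so the max over $\nu$ (mixing the resulting protocols with public coins, if one wants a single one) gives the claimed worst‑case protocol.

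Fix $\nu$, with $\cX$‑marginal $\nu_\cX$. To reduce $f$ to $f^{(k),A}$: using shared randomness the players draw $J\in[k]$ uniformly and $X_1,\dots,X_{J-1}$ i.i.d.\ from $\nu_\cX$; $\alice$ privately draws $X_{J+1},\dots,X_k$ i.i.d.\ from $\nu_\cX$, sets $X_J\defeq x$ (her $f$‑input), and then $\alice$ (holding $X_1,\dots,X_k$) and $\bob$ (holding $J,X_1,\dots,X_{J-1},y$) run $\cP$ and output its answer. As $\cP$ has worst‑case error $<\epsilon$ and its answer equals $f(X_J,y)=f(x,y)$, this protocol $\cP'$ for $f$ has error $<\epsilon$ on \emph{every} $(x,y)$; it is a $[t;l_1,l_2,\dots,l_t]^A$ protocol whose prior entanglement is the shared coins together with that of $\cP$, and (after coherently purifying its sampling) it meets the cleanliness hypothesis of Theorem~\ref{thm:oneround}.

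The heart of the argument is an information bound on $\alice$'s first message \emph{after she absorbs the entanglement into it}, as in the Remark following Theorem~\ref{thm:oneround}: $\alice$ generates the shared coins $R=(J,X_{<J})$ and $\cP$'s prior entanglement herself and ships $\bob$'s shares (call them $B_{\mathrm{ent}}$) alongside $\cP$'s first message $N_1$, turning $\cP'$ into an entanglement‑free protocol whose (long) first message is $(R,B_{\mathrm{ent}},N_1)$. Writing $X\defeq X_J$ for $\alice$'s $f$‑input, we have $I(X:R)=0$ and $I(X:B_{\mathrm{ent}}\mid R)=0$ because $R$ and $B_{\mathrm{ent}}$ are input‑independent, so $I(X:R,B_{\mathrm{ent}},N_1)=I(X:N_1\mid R,B_{\mathrm{ent}})$. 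Since $\cP$ does not look at $J$, conditioning on $J{=}i$ is vacuous here, whence this quantity equals $\frac1k\sum_{i=1}^k I(X_i:N_1\mid X_{<i},B_{\mathrm{ent}})=\frac1k\,I(X_1\cdots X_k:N_1\mid B_{\mathrm{ent}})$ by the chain rule; and since $I(X_1\cdots X_k:B_{\mathrm{ent}})=0$, Lemma~\ref{lem:lowinfent} bounds this last quantity by $2S(N_1)\le 2l_1$. Hence the effective first message carries at most $k'\defeq 2l_1/k$ bits of information about $X$ under $\nu$.

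It remains to apply the compression machinery. Running Corollary~\ref{cor:oneround} (via Theorem~\ref{thm:oneround}) on this entanglement‑free, bloated‑first‑message $[t;\,\cdot\,,l_2,\dots,l_t]^A$ protocol for $f$ with information parameter $k'$, and using $t\ge3$, produces a $[t-1;\,2^\beta l_2,\,l_3+\beta,\,l_4,\dots,l_t]^B$ protocol with prior entanglement for $f$ of average error at most $\epsilon+\delta$ under $\nu$, where $\beta=O(k'/\delta^3)=O(l_1/(\delta^3 k))$; the later message lengths $l_2,\dots,l_t$ are untouched because $\cP'$ runs $\cP$ verbatim after the first message. Taking the max over $\nu$ and invoking Yao's principle finishes the proof. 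I expect the two bookkeeping points isolated above to be the crux: (i) that absorbing $\bob$'s prior‑entanglement share and the shared coins into $\alice$'s first message does not inflate its information about $X$, which hinges on that material being input‑independent so that Lemma~\ref{lem:lowinfent} applies; and (ii) that averaging over the uniformly random index $J$ is exactly what converts the chain‑rule total $2l_1$ into the per‑index $2l_1/k$, making $\beta$ as small as claimed.
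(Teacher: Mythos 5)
Your proof is correct and is precisely what the paper's one-line sketch points to: the random-embedding reduction from $f$ to $f^{(k),A}$, the averaging-over-$J$ chain-rule bound $I(X_J : R, B_{\mathrm{ent}}, N_1) \le 2l_1/k$ (with Lemma~\ref{lem:lowinfent} handling the absorbed prior entanglement, as the Remark after Theorem~\ref{thm:oneround} requires), and the application of Corollary~\ref{cor:oneround} constitute exactly the combination of Lem.~4 of~\cite{sen:pred} with Corollary~\ref{cor:oneround} that the paper cites. The Yao/minimax step at the end is necessary, since the correctors in the compression depend on the distribution $\nu$, and you handle it appropriately.
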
 
\begin{proof}({\bf Sketch})
The proof follows in a standard fashion by combining the proof
technique of Lem.~4 of \cite{sen:pred} with
Corr.~\ref{cor:oneround}. We skip making a complete proof for brevity.
\qed\end{proof} 
\paragraph{Remark:} The above round elimination
lemma is quantum analogue of a classical round elimination
result of Chakrabarti and Regev~\cite{ChakrabartiR04}. It allows us to
extend their optimal randomized cell probe lower bound for Approximate
Nearest Neighbor Searching in the Hamming cube $\{0, 1\}^n$ to the
quantum address-only cell probe model defined by Sen and
Venkatesh~\cite{sen:icalp01}. It also allows us to extend the sharper
lower bounds for Predecessor Searching of Patrascu and
Thorup~\cite{patrascu:pred} to the quantum case. We skip making
explicit statements and their proofs for brevity.

\subsection{One-Way Optimal Direct Sum} 
We get the following implication of Thm.~\ref{thm:oneround} to the
Direct Sum problem for one-round quantum communication
protocols. 
Recall that $f^{\oplus m}$ is the $m$-fold 
Direct Sum problem corresponding to the relation $f$.
\begin{theorem}[Direct Sum]
Let $f \subseteq \cX \times \cY \times \cZ$ be a relation. Let
$\epsilon, \delta \in (0,1/2)$ with $\epsilon + \delta < 1/2$. 
For one-round quantum protocols with prior entanglement, we get
$$\sQ^{1, A \rightarrow B, \pub}_\epsilon (f^{\oplus m}) 
\quad \geq \quad \Omega\left( \delta^3m \cdot \sQ^{1, A \rightarrow B,
\pub}_{\epsilon + \delta}(f)\right).$$
Similar result also holds by switching the roles of $\alice$ and $\bob$. 
For simultaneous message protocols, we get
$$\sQ^{\|, \pub}_\epsilon (f^{\oplus m}) \quad \geq \quad 
 \Omega\left(\delta^3m \cdot \sQ^{\|,\pub}_\epsilon (f)\right).$$
\end{theorem}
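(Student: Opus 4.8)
The plan is to derive the direct sum lower bound from the first-message compression result Thm.~\ref{thm:oneround} via a standard information-theoretic averaging argument. Consider an optimal one-round protocol $\cP$ with prior entanglement computing $f^{\oplus m}$ (with $\alice$ communicating) with average error less than $\epsilon$ under the worst-case product distribution $\mu^{\otimes m}$, where $\mu$ is the distribution achieving $\sQ^{1,A\to B,\pub}_{\epsilon+\delta}(f)$; by Yao's principle (Lem.~\ref{lem:yao}) it suffices to prove the bound against this distribution. First I would absorb the prior entanglement into $\alice$'s message, as noted in the remark following Thm.~\ref{thm:oneround}, so that we may think of $\cP$ as a prior-entanglement-free protocol whose single message $N$ carries at most $c \defeq \sQ^{1,A\to B,\pub}_\epsilon(f^{\oplus m})$ qubits of information about $\alice$'s combined input $X = (X_1,\dots,X_m)$. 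Concretely $I(X:N) \le S(N) \le c$.

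The heart of the argument is the chain rule together with the product structure of the inputs. Writing $I(X:N) = \sum_{i=1}^m I(X_i : N \mid X_1 \dots X_{i-1})$ and using that the $X_i$ are mutually independent, one gets $I(X:N) \ge \sum_{i=1}^m I(X_i : N)$ (the conditioning on the independent $X_{<i}$ can only help, by the analogue of the inequality $I(A:B_1\dots B_k)\ge\sum I(A:B_i)$ recorded in the Information Theory preliminaries, applied after noting $I(X_i:N\mid X_{<i}) \ge I(X_i:N)$ when $X_i \perp X_{<i}$ — this is the one routine step I would verify carefully). Hence there exists an index $i^\ast \in [m]$ with $I(X_{i^\ast}:N) \le c/m$. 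Now I would construct a single-instance one-round protocol for $f$: $\alice$ and $\bob$, on inputs $(x,y)$, privately sample $X_j \sim \mu$ for all $j \ne i^\ast$ using shared/private randomness, plant $x$ in coordinate $i^\ast$ and $y$ in the corresponding $\bob$-coordinate, run $\cP$, and output the $i^\ast$-th coordinate of its output. Averaging the error of $\cP$ over the planted coordinate shows this protocol has average error less than $\epsilon + \delta$ under $\mu$ (for a suitable choice — in fact the $\delta$ slack is spent in the compression step below, not here, so one should be careful about exactly where the extra error enters).

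Applying Thm.~\ref{thm:oneround} to this single-instance protocol with the bound $I(X_{i^\ast}:N)\le c/m$ yields a one-round protocol $\cP'$ with prior entanglement for $f$, with classical message of length $\beta = O\!\left(\frac{c/m}{\delta^3}\right)$ and average error under $\mu$ less than $\epsilon+\delta$. Therefore $\sQ^{1,A\to B,\pub}_{\epsilon+\delta}(f) \le \sQ^{1,A\to B,\mu}_{\epsilon+\delta}(f) \le O\!\left(\frac{c}{\delta^3 m}\right)$, which rearranges to $c = \sQ^{1,A\to B,\pub}_\epsilon(f^{\oplus m}) = \Omega\!\left(\delta^3 m \cdot \sQ^{1,A\to B,\pub}_{\epsilon+\delta}(f)\right)$, as desired. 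The $\smp$ case is handled identically: here the single message $N$ to the referee still satisfies $I(X:N)\le S(N)\le c$, the chain-rule averaging is unchanged, and the referee can simulate the planted coordinates, so the compression result applied to $\alice$'s message (and symmetrically $\bob$'s) gives an $\smp$ protocol for $f$ of the required cost. The main obstacle I expect is bookkeeping the error budget and the direction of the inequality in $I(X_i:N\mid X_{<i}) \ge I(X_i:N)$ for classical-conditioned quantum mutual information — one must check that conditioning on an independent classical register does not decrease mutual information in the quantum setting, which holds because $I(X_i : N\mid X_{<i}) - I(X_i:N) = I(X_i : X_{<i} \mid N) - I(X_i:X_{<i}) = I(X_i:X_{<i}\mid N) \ge 0$ using independence $I(X_i:X_{<i})=0$; everything else is routine.
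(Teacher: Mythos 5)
Your overall strategy is the same as the paper's: reduce to distributional complexity via Yao's principle, average the chain rule over coordinates to find a low-information coordinate $i^\ast$, embed a single instance of $f$ in that coordinate, and then invoke the one-round compression theorem (Thm.~\ref{thm:oneround}). The chain-rule step you flagged for caution is handled correctly; $I(X_i : N \mid X_{<i}) \geq I(X_i:N)$ follows exactly as you wrote because the $X_i$ are mutually independent. The paper plants $0$ in $\bob$'s auxiliary coordinates rather than sampling them, but since the single message goes from $\alice$ to $\bob$, $\bob$'s auxiliary inputs never affect the message, so this is an inessential difference.

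There is, however, a genuine gap in the information bound. After absorbing the prior entanglement into $\alice$'s message, the new message $N$ consists of the original $c$ qubits of communication \emph{together with $\bob$'s half of the entangled state}, and the latter can be supported on arbitrarily many qubits. Hence the step ``$I(X:N) \leq S(N) \leq c$'' is false as written: $S(N)$ is not bounded by $c$. The correct argument exploits that $\alice$'s input register $X$ is independent of the entanglement register $E_B$ (so $I(X:E_B)=0$), and then applies the Araki--Lieb inequality --- this is exactly the paper's Lem.~\ref{lem:lowinfent}, which gives $I(X:E_B\,N_{\mathrm{orig}}) \leq 2\,S(N_{\mathrm{orig}}) \leq 2c$. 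Without this observation the averaging step has no starting bound, and the rest of the argument (finding $i^\ast$ with $I(X_{i^\ast}:N)\leq O(c/m)$) does not go through. This is precisely the issue the remark after Thm.~\ref{thm:oneround} is warning about, which you cite but do not actually address. Fixing this replaces your $c$ by $2c$ throughout, which of course does not affect the final $\Omega(\delta^3 m \cdot \sQ^{1,A\to B,\pub}_{\epsilon+\delta}(f))$ bound.
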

\begin{proof}
We present the proof for one-round protocols and the proof for $\smp$
protocols follows very similarly. Below we assume that in the one-way
protocols we consider the single message is from $\alice$ to $\bob$,
and hence we do not explicitly mention it in the superscripts. Let
$\epsilon, \delta$ be as in the statement of the theorem and let $c
\defeq \sQ^{1,\pub}_\epsilon (f^{\oplus m})$. 
For showing our result we will show that for all
distributions $\lambda$ on $\cX \times \cY$, 
\begin{equation} \label{eq:directavg}
\sQ^{1, \pub, \lambda}_{\epsilon + \delta}(f) \quad = \quad O(\frac{c}{\delta^3m}).
\end{equation}
Using Yao's principle and Eq.~(\ref{eq:directavg}), we immediately get
the desired result as follows:
\begin{equation*}
\sQ^{1, \pub}_{\epsilon + \delta} (f) \quad = \quad \max_{\lambda \text{
on }\cX \times \cY} \sQ^{1, \pub, \lambda}_{\epsilon + \delta} (f) \quad
= \quad O(\frac{c}{\delta^3m}) \quad = \quad O(\frac{1}{\delta^3m} \cdot
\sQ^{1, \pub}_\epsilon (f^{\oplus m})). 
\end{equation*}
Let us now turn to showing Eq.~(\ref{eq:directavg}). Since $ \sQ^{1,
\pub}_\epsilon (f^{\oplus m}) = c$, let $\cP$ be a
protocol (possibly using entanglement) for $f^{\oplus m}$ with
communication $c$ and error on every input being at most
$\epsilon$. Let us consider a distribution $\mu$ (possibly
non-product) on $\cX \times \cY$.
Our intention is to exhibit a protocol $\tcP$ for $f$ with
communication  $O(\frac{c}{\delta^3m})$ and distributional error at most
$\epsilon + \delta$ under $\mu$ and this would imply from definition that
$\sQ^{1,\pub, \mu}_{\epsilon + \delta}(f) = O( \frac{c}{\delta^3m})$,
and we would be done. 

From $\cP$ let us get a protocol $\cP'$ without prior entanglement in
which $\alice$ generates both parts of the shared state herself and then
sends $\bob$'s part as part of her first message. $\alice$ and $\bob$ then
behave identically as in $\cP$. Now let us provide inputs to $\cP'$ as
follows.  Let $\mu_X$ be the marginal of $\mu$ on $\cX$.  Recall that
$\alice$ has $m$ parts of the inputs in $\cP'$. Let the input of $\alice$ be
distributed according to $\mu_X$ in each part independently, and let
$\bob$ get input $0$ in every part.  Let $X$ be the random variable
representing the combined input of $\alice$.  Let $X_i, i \in [m]$ be the
random variable representing the input of $\alice$ on the $i$-th
co-ordinate. Note that $X_i, i \in [m]$ are all independent. Let $M$
represent the message of $\alice$. Now using Lem.~\ref{lem:lowinfent}
(irrespective of the number of qubits of prior entanglement in $\cP$)
we have $2c \geq I(X:M)$. Now from Chain Rule of mutual information we
get: $$ 2c \quad \geq \quad I(X: M) \quad = \quad
\sum_{i=1}^m I(X_i:M).$$ Therefore there exists a co-ordinate $i_0 \in
[m]$ such that $I(X_{i_0}:M) \leq
\frac{2c}{m}$. Now let us define a protocol $\cP''$ for $f$, in which 
on getting input $x \in \cX, y \in \cY$ respectively (sampled jointly
according to $\mu$), $\alice$ and $\bob$ simulate $\cP'$ by assuming $x$ and
$y$ to be inputs for the $i_0$-th co-ordinate. For the rest of the
co-ordinates $\alice$ generates the inputs independently according to the
distribution $\mu_X$. $\bob$ simply inserts $0$ as inputs in the rest of
the co-ordinates. $\alice$ then acts identically as in $\cP$ and sends
her message $M''$ to be $\bob$, who then outputs his decision as in $\cP$.
Note that in this case too $I(X''_{i_0}:M'') \leq \frac{2c}{m}$, where
$X''_{i_0}$ represents the input of $\alice$ in the $i_0$-th
co-ordinate. Also note that since the error of $\cP$ on every input
was at most $\epsilon$, we have that the distributional error under
$\mu$ in $\cP''$ is also at most $\epsilon$. 

We are now ready to define our intended protocol $\tcP$. Protocol
$\tcP$ is obtained by compressing the message of $\alice$ in protocol
$\cP''$ as according to Thm.~\ref{thm:oneround} (by assuming $t=1$).
Hence the message of $\alice$ in $\tcP$ has length
$O(\frac{c}{m\delta^3})$ and the distributional error of $\tcP$ under
$\mu$ is at most $\epsilon + \delta$. 
\qed\end{proof}

\section{Multi-Round Message Compression and Weak Direct Sum}
\label{sec:multiround}
\subsection{Quantum Protocols} 
In this section, we state and formally prove our results for
compressing messages in multi-round quantum communication protocols
for computing a relation $f \subseteq \cX \times \cY \times \cZ$. In our
discussion below, $A, X, B, Y$ denote $\alice$'s work qubits, $\alice$'s
input qubits, $\bob$'s work qubits and $\bob$'s input qubits respectively,
at a particular point in time.
\begin{definition}[Privacy loss]
\label{def:privacyquant}
Let $\mu \defeq \mu_\cX \times \mu_\cY$ be a product probability 
distribution on $\cX \times \cY$. 
Suppose $\cP$ is a quantum protocol for a relation
$f \subseteq \cX \times \cY \times \cZ$.
Consider runs of $\cP$ when $\alice$'s input register $X$
starts in the mixed state $\sum_{x \in \cX} \mu_\cX(x) \ketbra{x}$ 
and $\bob$'s input
register $Y$ starts in the pure state 
$\sum_{y \in \cY} \sqrt{\mu_\cY(y)} \ket{y}$. Let $B$ denote
the qubits in the possession of $\bob$ including $Y$, at some point
during the execution of $\cP$. Let $I(X : B)$ denote the
mutual information of $\alice$'s input register $X$ with $\bob$'s qubits
$B$. The {\em privacy loss}
of $\cP$ for relation $f$ on the distribution $\mu$ from $\alice$
to $\bob$ at that point in time
is $L^{\cP}(f, \mu, A, B) \defeq I(X : B)$.  The privacy loss
from $\bob$ to $\alice$, $L^{\cP}(f, \mu, B, A)$, is defined similarly. 
The privacy loss of $\cP$ from $\alice$ to $\bob$ for $f$, 
$L^{\cP}(f, A, B)$,
is the 
maximum over all product distributions $\mu$ of 
$L^{\cP}(f, \mu, A, B)$.
The privacy loss of $\cP$ from $\bob$ to $\alice$ for $f$, 
$L^{\cP}(f, B, A)$, is defined similarly.
The privacy loss from $\alice$ to $\bob$ for $f$, $L(f, A, B)$,
is the infimum over all protocols $\cP$ of $L^{\cP}(f, A, B)$ at
the end of $\cP$. The quantity
$L(f, B, A)$ is defined similarly. 
\end{definition}

\begin{theorem}[Compressing many rounds]
\label{thm:multiround}
Suppose $\cP$ is a $[t;l_1,l_2,\ldots,l_t]^A$ quantum protocol without
prior entanglement for a relation $f \subseteq \cX \times \cY \times \cZ$.
Let $\mu \defeq \mu_\cX \times \mu_\cY$ be a product probability
distribution on $\cX \times \cY$.  Suppose the average error of $\cP$
when the inputs are chosen according to $\mu$ is at most $\epsilon$.
Let $k_a$, $k_b$ denote the privacy losses of $\alice$ and $\bob$
respectively after $t'$ rounds of communication. Suppose $t'$ is odd
(similar statements hold for even $t$, as well as for interchanging
the roles of $\alice$ and $\bob$).  Then, for all sufficiently small
constants $\delta > 0$, there exists a $[t-t'+1; \lambda_1, \lambda_2,
l_{t'+2}, \ldots, l_{t}]^A$ protocol $\cP'$ in the presence of prior
entanglement such that:
\begin{enumerate}
\item the average error of $\cP'$ with respect to $\mu$
      is at most $\epsilon+\delta$;
\item $\lambda_1 \leq k_a \cdot 2^{O(k_b/\delta^{6})}$ 
      and $\lambda_2 \leq l_{t'+1} + O(k_b/\delta^6)$.
\end{enumerate}
\end{theorem}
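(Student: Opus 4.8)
The plan is to compress the first $t'$ rounds of $\cP$ down to two short messages by applying the single-round compression idea (Theorem~\ref{thm:oneround}, Corollary~\ref{cor:oneround}) iteratively, but with a twist: instead of eliminating one round at a time and paying a multiplicative $2^{\beta}$ blow-up at each step (which would be catastrophic over $t'$ rounds), I want to argue that the privacy loss $k_a$ of $\alice$ and $k_b$ of $\bob$ after $t'$ rounds already control the information content of \emph{all} of $\bob$'s messages taken together, and similarly for $\alice$. The key observation is that at the end of round $t'$, $\bob$'s qubits $B$ carry at most $k_a$ bits of information about $X$ (by definition of privacy loss, choosing $\mu$ appropriately as in Def.~\ref{def:privacyquant}), and $\alice$'s qubits $A$ carry at most $k_b$ bits about $Y$; here I need to exploit that $\mu$ is a product distribution so that the two registers start uncorrelated, and that the undetectable ``superposition cheating'' strategies available to each party let us identify the mutual information with all qubits in their possession (not just the transcript).

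First I would set up the purified picture: run $\cP$ with $X$ in the mixed state $\sum_x \mu_\cX(x)\ketbra{x}$ (with a safe copy $\tilde X$ held in a reference) and $Y$ in the superposition $\sum_y\sqrt{\mu_\cY(y)}\ket y$; this is exactly the setting of the privacy-loss definition. Let $\ket{\phi_{x}}$ be the global pure state (as a function of $\alice$'s classical input $x$) just after round $t'$, with $\alice$ holding registers $XA$ and $\bob$ holding $B\supseteq Y$. Since $I(X:B)\le k_a$ at this point, Lemma~\ref{lem:corrector} gives a $(\delta', \alpha)$-corrector $\{\cM_x\}$ acting only on $\alice$'s side with $\alpha = 2^{-O(k_a/\delta'^3)}$; symmetrically, because $I(Y:A)\le k_b$ (with the roles of the two parties' inputs swapped), there is a $(\delta',\alpha')$-corrector $\{\cN_y\}$ on $\bob$'s side with $\alpha'=2^{-O(k_b/\delta'^3)}$. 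The new protocol $\cP'$ begins from the input-independent state $\ket{\phi}\defeq\sum_{x}\sqrt{\mu_\cX(x)}\ket{\phi_x}$ (equivalently $\sum_{x,y}\sqrt{\mu_\cX(x)\mu_\cY(y)}\ket{\phi_{x,y}}$) supplied as prior entanglement, in enough copies --- roughly $\alpha^{-1}\alpha'^{-1}\mathrm{polylog}(1/\delta)$ of them. $\alice$ applies $\cM_x$ to each copy; $\bob$ applies $\cN_y$ to each copy; these act on disjoint registers so they commute, and on a copy where both succeed the global state is within trace distance $O(\delta')$ of $\ket{\phi_{x,y}}$. Then $\alice$ sends $\bob$ (classically) the index of the first copy where her $\cM_x$ succeeded together with whatever she needs to let $\bob$ locate a jointly-successful copy, and $\bob$ replies with his message $l_{t'+1}$ on that copy; from there $\cP'$ runs exactly like $\cP$ from round $t'+1$ onward. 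The first message of $\cP'$ has length $\lambda_1 = \log(1/\alpha) + \log(1/\alpha') + O(\log\log(1/\delta))$, and by choosing $\delta'$ a suitable constant multiple of $\delta$ this is $O(k_a/\delta^3 + k_b/\delta^3)$ --- but to get the claimed asymmetric bound $\lambda_1\le k_a\cdot 2^{O(k_b/\delta^6)}$ I instead have $\alice$ correct first (paying $2^{O(k_a/\delta^3)}$ copies) and fold $\bob$'s correction into a second exchange, so that the $\bob$-side factor appears multiplicatively as $2^{O(k_b/\delta^6)}$ rather than additively; the second message absorbs $\bob$'s original message plus the $O(k_b/\delta^6)$ correction-index overhead, giving $\lambda_2\le l_{t'+1}+O(k_b/\delta^6)$. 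The error accounting is routine: $\epsilon$ from $\cP$, plus the probability that no copy is jointly successful (made $<\delta/3$ by taking enough copies), plus $\E_\mu\trnorm{\sigma'_{x,y}-\ketbra{\phi_{x,y}}} \le 2\delta'\le \delta/3$ from the two correctors via Proposition~\ref{prop:trdistp} and the triangle inequality.

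The main obstacle --- and the step I would spend the most care on --- is making the \emph{two simultaneous correctors} interact correctly, because the corrector guarantees of Lemma~\ref{lem:corrector} are one-sided: $\cM_x$ only promises closeness to $\ket{\phi_x}$ when applied to the marginal relevant to $X$, and it is not immediate that applying $\cM_x$ and $\cN_y$ on the \emph{same} copy of $\ket{\phi}$ yields something close to $\ket{\phi_{x,y}}$. Here I would use the product structure of $\mu$ crucially: because $X$ and $Y$ are independent at the start, $\ket\phi$ has a Schmidt-like structure across the $\alice$/$\bob$ cut that decouples the two corrections, so that success of $\cM_x$ restores $\alice$'s reduced state to (near) that of $\ket{\phi_x}$ \emph{uniformly over $y$}, and then $\cN_y$ applied on $\bob$'s side finishes the job --- this is exactly why the theorem is stated only for product distributions, and why the earlier excerpt remarks that Res.~\ref{res:quantmultiround} does not subsume Res.~\ref{res:quantoneround}. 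A secondary subtlety is verifying that the undetectability constraint in the privacy-loss definition really does force $I(X:B)\le k_a$ at the point we need it (rather than only at the very end of $\cP$), which is handled by noting $B$'s information about $X$ is monotone non-decreasing in time, so the bound after round $t'$ follows from the bound guaranteed by any honest-but-curious $\bob$ who runs $\cP$ to completion. Finally I would remark, as the statement does, that the even-$t'$ case and the $\alice\leftrightarrow\bob$-swapped case are identical after relabelling, and that the one-way privacy trade-off (Res.~\ref{res:privacyquant}) is obtained by instantiating $t' = t$ and invoking Yao's principle.
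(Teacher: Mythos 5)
Your high-level plan matches the paper's: build two correctors (one for each party) from Lemma~\ref{lem:corrector}, start from many copies of the input-independent state $\ket{\phi}=\sum_{x,y}\sqrt{\mu_\cX(x)\mu_\cY(y)}\ket{\phi_{x,y}}$ as prior entanglement, have $\alice$ and $\bob$ correct in parallel (their operations commute since they act on disjoint registers), and then resume $\cP$ from round $t'+1$ on a jointly-successful copy. You also correctly single out the two-corrector interaction as the crux. But the way you propose to resolve that crux is wrong, and a second issue in the protocol design is underspecified.

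The ``Schmidt-like decoupling'' argument does not fix the problem. After $\cM_x$ succeeds on a copy, the joint state is within trace distance $\delta'$ of $\ket{\phi_x}$; but when you then apply $\cN_y$ (which succeeds with probability only $\approx\beta=2^{-O(k_b/\delta'^3)}$) and condition on success, Proposition~\ref{prop:trdistp} lets the trace distance blow up by a factor $1/\beta$. So the quantity you claim equals $\le 2\delta'$ is actually $\approx\delta'+\delta'/\beta$, which is enormous if $\delta'$ is ``a suitable constant multiple of $\delta$.'' The paper's fix is not the product structure of $\mu$ (that is needed, but only to guarantee $\ket{\phi}$ is a legal input-independent state and to make both correctors available simultaneously); the fix is an asymmetric choice of precisions: set $\delta_b=(\delta/10)^2$ and then $\delta_a \defeq \delta_b\beta/2$, so $\alice$'s corrector is tuned to a precision $\beta$ times finer than $\bob$'s, precisely cancelling the $1/\beta$ amplification (this is Claim~\ref{cl:zerocomm}(a),(b) in the paper). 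This coupling is also what makes the exponent in $\alpha$ grow and ultimately produces the $k_a\cdot 2^{O(k_b/\delta^6)}$ form of $\lambda_1$. Relatedly, your first message is underspecified in a way that matters: sending only ``the index of the first copy where $\cM_x$ succeeded'' gives $\bob$ a single copy to try $\cN_y$ on, so he succeeds only with probability $\beta$. $\alice$ must transmit a \emph{set} of $\Theta(1/\beta)\cdot\mathrm{polylog}(1/\delta)$ indices where her corrector succeeded (encoded in $\log\binom{K}{2\alpha K}$ bits), so $\bob$ has enough independent tries to succeed with probability $1-O(\delta)$; $\bob$ then reports which index in the set worked, which is where the additive $O(k_b/\delta^6)$ term in $\lambda_2$ comes from. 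Your phrase ``fold $\bob$'s correction into a second exchange'' gestures at this but doesn't commit, and without the set-sending mechanism and the $\delta_a\propto\beta$ coupling, the error and communication bounds do not follow.
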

\begin{proof}
Consider the situation after $t'$ rounds of
$\cP$. Let the joint state of $\alice$ and $\bob$ be denoted by
\begin{description}
\item 
{${\sigma_{xy}}$:}  when $\alice$ starts $\cP$ with $x$ in her input
register and $\bob$ starts with $y$ in his input register;
\item
{${\sigma_x}$:} when $\alice$ starts
with $x$ in her input register and $\bob$ starts with the superposition
$\sum_{y \in \cY} \sqrt{\mu_\cY(y)} \ket{y}$ in his input register; 
\item
{${\sigma_y}$:} when $\bob$ starts
with $y$ in his input register and $\alice$ starts with the 
superposition $\sum_{x \in \cX} \sqrt{\mu_\cX(x)} \ket{x}$ in her 
input register;
\item
{${\sigma}$:} when $\alice$ and $\bob$ start with the superposition 
$\sum_{(x,y) \in \cX \times \cY} \sqrt{\mu(xy)} \ket{x}\ket{y}$ in 
their input registers.
\end{description}
Note that $\sigma_{xy}$, $\sigma_x$, $\sigma_y$ and $\sigma$ are
pure states.

We overload the symbols $\cX, \cY$ to also denote
the superoperators corresponding to measuring in the computational
basis the input
registers $X, Y$ of $\alice$ and $\bob$ respectively.
Whether $\cX, \cY$ denote sets or superoperators will be clear
from the context. When several
superoperators are applied to a state in succession we omit the
parenthesis; for example, we write $\cX\cY(\rho)$ instead of 
$\cX(\cY(\rho))$ which corresponds to measuring the input registers of
$\alice$ and $\bob$ (in this case, their order does not matter).

We will choose $\delta_a, \delta_b > 0$ later.
Since the privacy loss
of $\alice$ is at most $k_a$, Lem.~\ref{lem:corrector} implies that
there is a $(\delta_a,\alpha)$-corrector $\{\cM_x\}_{x \in \cX}$ for
$\{\{\sigma_x\}_{x \in \cX}; \sigma\}$ with 
$\alpha = 2^{-O(k_a/\delta_a^3)}$. 
Similarly, since the privacy loss of $\bob$
is at most $k_b$, there is a $(\delta_b,\beta)$-corrector 
$\{\cM_y\}_{y \in \cY}$ for $\{\{\sigma_y\}_{y \in \cY}; \sigma\}$ 
with $\beta=2^{-O(k_b/\delta_b^3)}$. In
particular, with $\cM_X \defeq \E_{\mu_\cX} [\cM_x]$ and 
$\cM_Y \defeq \E_{\mu_\cY} [\cM_y]$, we have
\begin{equation}
\begin{array}{l}
\trnorm{\frac{\cM_X(\sigma)}{\alpha} - \cX(\sigma)} \leq \delta_a,   \\ \\
\trnorm{\frac{\cM_Y(\sigma)}{\beta} - \cY(\sigma)} \leq \delta_b. 
\end{array}
\label{eq:corrector}
\end{equation}
In our proof, we will take
\begin{equation}
\delta_b \defeq \left(\frac{\delta}{10}\right)^2, ~~~
\delta_a \defeq \frac{\delta_b \beta}{2}.
\label{eq:deltas}
\end{equation}

The proof has two steps. In the first step, we analyze the protocol
$\cP'$ given in Figure~\ref{fig:zeroround}.  In $\cP'$, $\alice$
and $\bob$ try to recreate the effect of the first $t'$ rounds of the
original protocol, but without sending any messages. For this, they
start from the state $\sigma$ (their prior entanglement) and on
receiving $x$ and $y$, apply suitable correcting transformations.  In
the second step, we shall consider a protocol $\cP''$ that starts 
with several parallel executions of $\cP'$.

\begin{figure}[ht] 
\begin{center}
\mybox{
\smallskip
{\em $\alice$ and $\bob$ start with the joint state 
$\sigma$ as prior entanglement.}
\begin{description}
\item[Input:] $\alice$ is given $x\in X$; $\bob$ is given $y\in Y$. 
\item[$\alice$:] Applies superoperator $\cM_x$ to her registers. 
\item[$\bob$:]   Applies superoperator $\cM_y$ to his registers.
\end{description}
\smallskip
}
\caption{The intermediate protocol $\cP'$}
\label{fig:zeroround}
\end{center}
\end{figure}

Let $r_{xy}\defeq\Tr \cM_y\cM_x(\sigma)$ and let 
$r \defeq \E_\mu[r_{xy}]$. 
Then, $r_{xy}$ is the probability that both $\alice$ and $\bob$ succeed
on input $(x,y)$, and $r$ is the probability that they
succeed when their input is chosen according to the distribution
$\mu$. Let $\rho$ denote the state after $t'$ rounds
of $\cP$ when the inputs are chosen according
to $\mu$ i.e. $\rho \defeq \E_\mu[\sigma_{xy}]$. Observe
that $\rho = \cY \cX (\sigma)$.
Let $\rho'$ be the state at the end of $\cP'$, when the
inputs are chosen according to $\mu$ and we condition on both parties
succeeding i.e. $\rho'  = \frac{\cM_Y \cM_X (\sigma)}{r}$.
\begin{claim} 
\label{cl:zerocomm}
\begin{enumerate}
\item[(a)] $1-\frac{\delta_b}{2} \leq \frac{r}{\alpha\beta} \leq
1+\frac{\delta_b}{2}$.
\item[(b)] $\trnorm{\rho - \rho'} \leq 2\delta_b$.
\item[(c)] $\Pr_\mu\left[ \left|\frac{r_{xy}}{r} - 1\right| \geq 
            2\delta_b^{1/2}\right] \leq \delta_b^{1/2} $.
\end{enumerate}
\end{claim}
\begin{proof}
\begin{enumerate}
\item[(a)] 
\begin{eqnarray*} 
\frac{r}{\alpha\beta} 
&=& \frac{\Tr \cM_Y\cM_X (\sigma)}{\alpha\beta} \\
&=& \frac{1}{\beta} \Tr \left(\cM_Y
\left(\frac{\cM_X(\sigma)}{\alpha}\right)\right)\\
&=& \frac{1}{\beta}\Tr \cM_Y \cX(\sigma) + 
 \frac{1}{\beta} \Tr \cM_Y
    \left(\frac{\cM_x(\sigma)}{\alpha} - \cX(\sigma)\right).
\end{eqnarray*}
The first term on the right is $1$ since $\cM_\cY$ and $\cX$ commute
as they act on disjoint sets of qubits. For the second term, we have
using (\ref{eq:corrector}), (\ref{eq:deltas}) and the fact that an
unnormalized superoperator cannot increase the trace norm, that
\[ 
\left| \frac{1}{\beta}\Tr
\cM_Y\left(\frac{\cM_X(\sigma)}{\alpha} - \cX(\sigma) \right)\right| 
\leq
\frac{\delta_a}{\beta} 
= \frac{\delta_b}{2}.
\]

\item[(b)] 
Using (\ref{eq:corrector}), (\ref{eq:deltas}), the
fact that a measurement or an
unnormalized superoperator cannot increase the trace norm, and
that $\cM_\cY$ and $\cX$ commute 
as they act on disjoint sets of qubits, we get
\begin{eqnarray*}
\trnorm{\rho-\rho'} 
\trnorm{\cX\cY(\sigma) - \rho'} 
&\leq & \trnorm{\cX\frac{\cM_\cY(\sigma)}{\beta} -\rho'} +  
	 \trnorm{\cX\left(\cY(\sigma) - \frac{\cM_\cY(\sigma)}{\beta}
                    \right)} \\
&\leq & \trnorm{\cM_\cY\frac{\cX(\sigma)}{\beta} -\rho'} + \delta_b\\
&\leq & \trnorm{\frac{1}{\beta} \cM_Y \frac{\cM_X(\sigma)}{\alpha} -
               \rho'} + \delta_b +  \frac{1}{\beta}
       \trnorm{\cM_Y\left(\cX(\sigma) - \frac{\cM_X(\sigma)}{\alpha}
               \right)} \\
&\leq & \trnorm{ \frac{1}{\beta} \cM_Y \frac{\cM_X(\sigma)}{\alpha} -
  \rho'} + \delta_b + \frac{\delta_a}{\beta} \\
&\leq & \trnorm{\frac{r}{\alpha\beta} \frac{\cM_Y\cM_X(\sigma)}{r}
               -\rho'} + \frac{3 \delta_b}{2} \\
&=& \trnorm{\left(\frac{r}{\alpha\beta}-1\right) \rho'} + 
    \frac{3 \delta_b}{2}\\
&\leq & 2\delta_b. 
\end{eqnarray*}

\item[(c)] Let $\tau$ describe the joint state of the input 
registers when
the combined state of $\alice$ and $\bob$ is $\rho$; similarly, let $\tau'$
be the state of their input registers when the combined state is
$\rho'$; thus,
\[
\tau = \sum_{xy} p_{xy} \ketbra{x} \otimes
\ketbra{y} 
\]
and
\[
\tau' = \sum_{xy} p_{xy} \frac{r_{xy}}{r} \ketbra{x}\otimes\ketbra{y}.
\]
Using part (b), we have 
\[ \sum_{xy} p_{xy} \left| 1 - \frac{r_{xy}}{r}\right| =
\trnorm{\tau-\tau'} \leq \trnorm{\rho - \rho'} \leq 2\delta_b.\]
Thus, 
$\E_\mu\left[\left|\frac{r_{xy}}{r} - 1\right|\right] \leq 2\delta_b$,
and by Markov's inequality, $\Pr_\mu\left[ \left|\frac{r_{xy}}{r} -
1\right| \geq 2\delta_b^{1/2}\right] \leq \delta_b^{1/2}$.
\end{enumerate}
\unskip
\qed\end{proof}

We can now move to the second step of our proof of 
Thm.~\ref{thm:multiround}.
Figure~\ref{fig:final} presents a protocol $\cP''$
with $t-t'+1$ rounds of
communication where the initial actions of $\alice$ and $\bob$ are derived
from the protocol $\cP'$ analyzed above.
\begin{figure}[ht]
\begin{center}
\mybox{
\smallskip
{\em $\alice$ and $\bob$ start with 
     $K \defeq \frac{10}{r} (\log \frac{1}{\delta})$ copies of 
     $\sigma$ as prior entanglement. We refer to these copies as 
     $\sigma^1,\ldots, \sigma^K$.}
\begin{description}
\item[Input:] $\alice$ gets $x \in X$ and $\bob$ gets $y \in Y$.
\item[$\alice$:] Applies $\cM_x$ to each $\sigma^i$. Let
              $\hat{S}\defeq\{i: \mbox{$\cM_x$ succeeded on 
                                 $\sigma^i$}\}$.
              If $\hat{S}$ has less than $2\alpha K$ elements, $\alice$ 
              aborts the protocol; otherwise, she sends 
              $S \subseteq \hat{S}$ to $\bob$, $|S| = 2 \alpha K$.
\item[$\bob$:]   Applies $\cM_y$ to each $\sigma_i$ for $i\in S$ 
              and sends $\alice$ the index $i^*$ where he (and hence 
              both) succeeded. If there is is no such $i^*$ he 
              aborts the protocol.
\end{description}
$\alice$ and $\bob$ now revert to protocol $\cP$ after 
round $t'$,
and operate on the registers corresponding to $\sigma^{i^*}$.
\smallskip
}
\end{center}
\caption{The final protocol $\cP''$} 
\label{fig:final}
\end{figure}

\begin{claim}
\begin{enumerate}

\item[(a)] The number of bits sent by $\alice$ in the first round
is at most $k_a 2^{O(k_b/\delta^{6})}$; the number of bits
sent by $\bob$ is at most $O(k_b/\delta^{6})$.

\item[(b)] If the inputs are chosen according to the distribution
$\mu$, the protocol $\cP''$ computes $f$ correctly with probability of
error at most $\epsilon+ \delta$.
\end{enumerate}
\end{claim}
\begin{proof}
Recall that $\delta_b = (\delta/10)^2$, 
$\beta= 2^{-O(k_b/\delta_b^3)}$
and $\delta_a= \delta_b \beta/2$ and $\alpha=2^{-O(k_a/\delta_a^3)}$.
By part~(a) of Claim~\ref{cl:zerocomm}
it follows that 
$r\geq \alpha \beta/2$. 
The number of bits needed by $\alice$ to encode her set $S$ is at most
\[ 
\log {K \choose {2\alpha K}} \leq  
2 \alpha K \log\left(\frac{e}{2 \alpha}\right) = 
k_a 2^{O(k_b/\delta^6)}.
\]
The number of bits sent by $\bob$ is at most 
$\log 2\alpha K = O\left(\frac{k_b}{\delta^6}\right)$.
This justifies part~(a) of our claim.

For part (b), we will use Claim~\ref{cl:zerocomm} to 
bound the probability
of error $\cP''$.  Call a pair $(x,y) \in \cX \times \cY$ 
{\em good} if
$|\frac{r_{xy}}{r} -1| \leq 2\delta_b^{1/2}$; let $\chi$
denote the indicator random variable for the event ``$(x,y)$ is
good.'' Let $\chi'$ be the indicator random variable for the event
``$\alice$ and $\bob$ do not abort protocol $\cP''$.'' Note that if 
$\alice$ and
$\bob$ do not abort protocol $\cP''$, they enter round $t'+1$ 
of protocol $\cP$ with their
registers in the state
$\sigma'_{xy} \defeq \frac{\cM_x\cM_y(\sigma)}{r_{xy}}$.  Thus under
distribution $\mu$, the
average probability of error of $\cP''$ 
differs from the average
probability of error $\epsilon$ of the original
protocol $\cP$ by at most
\begin{equation}
\begin{array}{l}
\E_\mu\left[\chi \chi' \trnorm{\sigma'_{xy} - \sigma_{xy}}\right] + 
\Pr[\chi=0] + \Pr[\chi=1 \mbox{ and }\chi'=0].
\end{array}
\label{eq:bound:infty}
\end{equation}
The first term in the above sum can be bounded as follows:
\begin{eqnarray*}
\E_\mu\left[\chi \chi' \trnorm{\sigma'_{xy} - \sigma_{xy}}
               \right] &=& 
\E_\mu\left[\chi \chi' \trnorm{\frac{1}{r_{xy}}\cM_x\cM_y(\sigma) - 
\sigma_{xy}}\right]\\
&\leq& \E_\mu\left[\chi \chi' \trnorm{\frac{1}{r}\cM_x\cM_y(\sigma) - 
       \sigma_{xy}}\right] + \\
& &
\E_\mu\left[\chi \chi' \left|1-\frac{r_{xy}}{r}\right| 
       \frac{1}{r_{xy}} 
\trnorm{\cM_x\cM_y(\sigma)}\right]\\
& \leq & 
\trnorm{\frac{1}{r} \cM_Y \cM_X (\sigma) - \cX\cY (\sigma)} + \\
& &
\E_\mu\left[\chi \chi' \left|1-\frac{r_{xy}}{r}\right| 
       \frac{1}{r_{xy}} 
\trnorm{\cM_x\cM_y(\sigma)}\right]\\
&\leq& 2\delta_b + 2\delta_b^{1/2}.
\end{eqnarray*}
For the second last inequality, we used the fact that in the states
$\sigma'_{xy}$ and $\sigma_{xy}$, the input registers of $\alice$ and $\bob$
contain $x$ and $y$. For the last inequality, we used part~(b) of 
Claim~\ref{cl:zerocomm} and the definition of good $(x, y)$. 
The second term of
(\ref{eq:bound:infty}) is at most $\delta_b^{1/2}$ by part~(c) of 
Claim~\ref{cl:zerocomm}. It remains to bound the last term of
(\ref{eq:bound:infty}), which corresponds to the probability that
$\alice$ or $\bob$ abort the protocol for some good $(x,y)$.
\begin{description}
\item[$\alice$ aborts:] The probability of success of $\cM_x$
for any one copy of
$\sigma$ is exactly $\alpha$. Thus, the expected number of successes
is $\alpha K$, and by Chernoff's bound 
(see e.g.~\cite[Appendix A]{AlSp}), 
the probability that there are less than $2\alpha K$ 
successes is at most 
$\left(\frac{e}{4}\right)^{\alpha K} \leq \delta^{10}$.

\item[$\bob$ aborts:] $\bob$ aborts when the two parties do not 
simultaneously succeed in any of the $K$ attempts, even though their
probability of success was at least $r_{xy} \geq (1-2\delta_b^{1/2})
r\geq r/2$ (recall that we are now considering a good pair $(x,y)$).
The probability of this is at most
$\left( 1- \frac{r}{2}\right)^K \leq \exp\left(-\frac{rK}{2}\right) 
  \leq \delta^5$.
\end{description}
Thus overall, the average probability of error of $\cP''$ 
is at most 
\[ 
\epsilon + 2\delta_b + 2 \delta_b^{1/2} + \delta_b^{1/2} +
\delta^{10}+ \delta^5 
\leq \epsilon + \delta.
\]
\qed \end{proof}

This completes the proof of Thm.~\ref{thm:multiround}.
\qed \end{proof}

The following corollaries result from the above theorem.
\begin{corollary}[Privacy tradeoff]
For any relation $f \subseteq \cX \times \cY \times \cZ$, 
$L(f, A, B) 2^{O(L (f, B, A))} \geq 
 \sQ^{1, A \rightarrow B, {\pub}, [\;]} (f)$. Similarly,
$L (f, B, A) 2^{O(L (f, A, B))} \geq 
 \sQ^{1, B \rightarrow A, {\pub}, [\;]} (f)$.
\end{corollary}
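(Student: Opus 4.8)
The plan is to obtain the tradeoff as the ``$t'=t$'' special case of the multi-round compression theorem, Thm.~\ref{thm:multiround}: compressing the \emph{entire} interaction collapses any protocol into a single message from $\alice$ to $\bob$. So first I would fix an arbitrary quantum protocol $\cP$ without prior entanglement that computes $f$ with bounded worst-case error $\epsilon$, so that its average error under every distribution is also at most $\epsilon$. By appending one empty message from $\alice$ if needed, assume $\cP$ has an odd number $t$ of rounds with $\alice$ starting; this appended message touches no one's qubits at any time and hence changes no privacy loss. Write $k_a \defeq L^{\cP}(f,A,B)$ and $k_b \defeq L^{\cP}(f,B,A)$ for the privacy losses at the end of $\cP$, and recall these are the maxima over product distributions of $L^{\cP}(f,\mu,A,B)$ and $L^{\cP}(f,\mu,B,A)$ respectively.

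Next, for an arbitrary product distribution $\mu$ on $\cX\times\cY$ and a small constant $\delta>0$ with $\epsilon+\delta<1/2$, apply Thm.~\ref{thm:multiround} with $t'=t$. Since $t-t'+1=1$, the theorem yields a one-message ($\alice\rightarrow\bob$) protocol $\cP'_\mu$ \emph{with} prior entanglement, whose message is classical, whose average error under $\mu$ is at most $\epsilon+\delta$, and whose length is governed by the theorem's $\lambda_1$-bound, which involves only the privacy losses after round $t'=t$ (i.e.\ at the end, under $\mu$) and these are at most $k_a$ and $k_b$. Hence, using the monotonicity of $x\mapsto 2^{O(x)}$ to pass from the per-$\mu$ losses to $k_a,k_b$ and absorbing the fixed constant $\delta$ into the $O(\cdot)$ in the exponent, $\cP'_\mu$ witnesses
\[
\sQ^{1,\,A\rightarrow B,\,\pub,\,\mu}_{\epsilon+\delta}(f)\;\le\; k_a\, 2^{O(k_b)} .
\]
Since this bound does not depend on $\mu$, taking the maximum over all product distributions $\mu$ — which by definition of the $[\;]$ superscript is $\sQ^{1, A\rightarrow B, \pub, [\;]}_{\epsilon+\delta}(f)$ — gives $\sQ^{1, A\rightarrow B, \pub, [\;]}_{\epsilon+\delta}(f)\le k_a\, 2^{O(k_b)} = L(f,A,B)\, 2^{O(L(f,B,A))}$, which (at the conventional error level, up to the constant-factor slack inherent in error reduction) is the claimed bound; the symmetric statement follows by exchanging the roles of $\alice$ and $\bob$, which produces a single message from $\bob$ to $\alice$ and hence bounds $\sQ^{1, B\rightarrow A, \pub, [\;]}(f)$.

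The real work is entirely inside Thm.~\ref{thm:multiround}; what remains here is bookkeeping: (i) reducing to an $\alice$-started protocol of odd length so that the compressed protocol is genuinely one-way from $\alice$ to $\bob$; (ii) verifying that at $t'=t$ the conclusion of Thm.~\ref{thm:multiround} collapses to a single $\alice\rightarrow\bob$ message whose length depends only on $k_a$, $k_b$ and the fixed $\delta$; and (iii) the error convention — we obtain the bound at error $\epsilon+\delta<1/2$, and one reads ``$\sQ^{1, A\rightarrow B, \pub, [\;]}(f)$'' at the conventional error $1/3$ up to the constant-factor overhead of error reduction, exactly as bounded error is treated elsewhere in the paper. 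I expect (iii) to be the only mildly delicate point; (i) and (ii) are immediate.
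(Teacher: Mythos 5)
Your overall strategy --- apply Thm.~\ref{thm:multiround} with $t'=t$, observe that the compressed protocol degenerates into a single classical message from $\alice$ to $\bob$ of length $\lambda_1\leq k_a\,2^{O(k_b/\delta^6)}$, absorb the constant $\delta$ into the $O(\cdot)$, and take a maximum over product $\mu$ --- is exactly the route the paper intends for this corollary. However, your reduction step (i) has a genuine gap. You dispose of parity issues by ``appending one empty message from $\alice$ if needed,'' and you claim this costs nothing because the empty message touches no qubits. But the model stipulates that the \emph{final recipient} of a message is the one who measures to produce the output. If the original protocol $\cP$ has $\alice$ outputting (i.e., $\alice$ received the last message, which happens precisely when $t$ is even with $\alice$ starting), appending a contentless $\alice\rightarrow\bob$ round makes $\bob$ the final recipient. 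Since $\bob$'s registers in this new protocol are identical to his registers in $\cP$ after round $t$, and $\cP$'s correctness relied on $\alice$'s state (including her input $x$) to produce the output, $\bob$ generically \emph{cannot} compute a correct answer from his side alone. The appended protocol therefore may fail to compute $f$ with bounded error, and the premise of Thm.~\ref{thm:multiround} is not satisfied. A second, more minor point: the statement of Thm.~\ref{thm:multiround} produces a $[t-t'+1;\lambda_1,\lambda_2,l_{t'+2},\ldots,l_t]^A$ protocol, and the list of message lengths has two entries when $t'=t$ even though $t-t'+1=1$; one has to go into the proof of that theorem to see that $\bob$'s reply $i^*$ is superfluous exactly when $\bob$ is the outputting party, collapsing the type to $[1;\lambda_1]^A$.

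The upshot is that a single protocol $\cP$ does not obviously witness \emph{both} inequalities of the corollary simultaneously. From the proof of Thm.~\ref{thm:multiround}, when $\bob$ outputs in $\cP$ you get $\sQ^{1,A\rightarrow B,\pub,[\;]}(f)\leq k_a\,2^{O(k_b)}$; when $\alice$ outputs you get (by interchanging roles) $\sQ^{1,B\rightarrow A,\pub,[\;]}(f)\leq k_b\,2^{O(k_a)}$ --- in each case only one of the two bounds. Getting both from the same $\cP$, as the corollary's use of the infima $L(f,A,B)$ and $L(f,B,A)$ seems to require, needs an additional observation you do not supply (for instance, noting that $\alice$ can append the list of outputs she would give for each index in $S$ at an additive cost of $|S|\log|\cZ|$, or restricting the infimum to protocols where $\bob$ outputs). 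The paper itself states this corollary without proof, so your write-up is in the intended spirit; but step (i) as written would not compile into a correct argument.
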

\paragraph{Remark:} It was shown by Kremer~\cite{kremer:quantcc} 
that $Q(f) \geq
\Omega(\log D^{1} (f))$, where $D^{1} (f)$ is the one-round
deterministic communication complexity of $f$. The above corollary can
be viewed as the privacy analogue of that result. It is optimal as
evidenced by the Index function problem and the Pointer Chasing
problem, both of which have communication complexity 
$O(\log n)$~\cite{jain:substate}.
\begin{corollary}[Weak Direct Sum]
\label{corr:weakdirectquant} For any relation $f \subseteq \cX \times \cY \times \cZ$, 
$$\sQ^{{\pub}, [\;]} (f^{\oplus m}) \quad \geq \quad m \cdot
\Omega(\log \sQ^{1, {\pub}, [\;]} (f)).$$
\end{corollary}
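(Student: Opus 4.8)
The plan is to derive this from the multi-round compression theorem, Thm.~\ref{thm:multiround}, by the standard ``one hard coordinate'' direct-sum argument, carried out in the language of privacy loss. Fix a product distribution $\mu^* \defeq \mu^*_\cX \times \mu^*_\cY$ on $\cX \times \cY$ for which $\sQ^{1, A\rightarrow B, \pub, \mu^*}_{1/3}(f)$ equals $\sQ^{1,\pub,[\;]}(f)$, and write $C \defeq \sQ^{1,\pub,[\;]}(f)$. Since $(\mu^*)^{\otimes m}$ is a product distribution on $\cX^m \times \cY^m$, it is enough to prove $\sQ^{\pub, (\mu^*)^{\otimes m}}_{1/3}(f^{\oplus m}) \geq m \cdot \Omega(\log C)$. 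So let $\cP$ be an arbitrary entanglement-assisted two-way protocol for $f^{\oplus m}$ with total communication $c$ and average error at most $1/3$ under $(\mu^*)^{\otimes m}$; the goal is to show $c = m\cdot\Omega(\log C)$.

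First I would run $\cP$ in the ``privacy-loss configuration'': $\alice$'s $m$ input registers start in the mixed state $\bigotimes_i (\sum_x \mu^*_\cX(x)\,\ketbra{x})$ and $\bob$'s $m$ input registers in the pure state $\bigotimes_i\ket{\mu^*_\cY}$, where $\ket{\pi}\defeq\sum_z\sqrt{\pi(z)}\ket{z}$. Let $A$, $B$ be $\alice$'s and $\bob$'s qubits at the end of $\cP$, and $X_i$, $Y_i$ the $i$-th input registers. Because the prior entanglement and $\bob$'s initial state are independent of $\alice$'s inputs, a round-by-round accounting — using that an $\ell$-qubit message can increase the mutual information of $\alice$'s inputs with $\bob$'s qubits by at most $2\ell$, and that messages in the other direction only decrease it, cf.\ Lem.~\ref{lem:lowinfent} and the chain rule — gives $I(X_1\cdots X_m : B)\leq 2c$, and symmetrically $I(Y_1\cdots Y_m : A)\leq 2c$. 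Since the $X_i$ are independent and the $Y_i$ stay in a product state (the input registers are never modified after $\bob$'s initial safe copy), the superadditivity of mutual information (Sec.~\ref{sec:prelim}) gives $\sum_i I(X_i:B)\leq 2c$ and $\sum_i I(Y_i:A)\leq 2c$; averaging, some coordinate $i_0$ satisfies $I(X_{i_0}:B) + I(Y_{i_0}:A)\leq 4c/m$, so both are at most $4c/m$.

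Next I would embed a single copy of $f$ into coordinate $i_0$: on input $(x,y)$, the players run $\cP$ with $(x,y)$ plugged in as the $i_0$-th instance, $\alice$ sampling the remaining $\alice$-inputs from $\mu^*_\cX$ herself and $\bob$ preparing $\ket{\mu^*_\cY}$ in each of his remaining input registers, and they output the $i_0$-th answer of $\cP$; call this protocol $\mathcal{Q}$. An error of $\mathcal{Q}$ forces an error of $\cP$ on instance $i_0$, so $\mathcal{Q}$ has average error at most $1/3$ under $\mu^*$; and running $\mathcal{Q}$ in the privacy-loss configuration reproduces exactly the run of $\cP$ above, so the privacy losses of $\mathcal{Q}$ under $\mu^*$ are $k_a = I(X_{i_0}:B)\leq 4c/m$ and $k_b = I(Y_{i_0}:A)\leq 4c/m$. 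Having $\alice$ send $\bob$'s half of the prior entanglement in her first message turns $\mathcal{Q}$ into an entanglement-free protocol with the same error and the same privacy losses, so I may assume $\mathcal{Q}$ has no prior entanglement and apply Thm.~\ref{thm:multiround} with $t'$ equal to the total number of rounds and $\delta$ a small constant. This yields a one-round $\alice\rightarrow\bob$ entanglement-assisted protocol for $f$ with a single (classical) message of length $k_a\cdot 2^{O(k_b/\delta^6)} = O(c/m)\cdot 2^{O(c/m)}$ and average error at most $1/3+\delta$ under $\mu^*$. Driving the error back down to $1/3$ by $O(1)$ parallel repetitions and a majority vote — which keeps the protocol one-round and only multiplies the message length by a constant — shows $C = \sQ^{1,A\rightarrow B,\pub,\mu^*}_{1/3}(f)\leq 2^{O(c/m)}$, i.e.\ $c\geq m\cdot\Omega(\log C)$. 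As $\cP$ was arbitrary, $\sQ^{\pub,(\mu^*)^{\otimes m}}_{1/3}(f^{\oplus m})\geq m\cdot\Omega(\log C)$, and taking the maximum over product distributions gives the claim.

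I expect the main obstacle to be the bookkeeping in the first two steps: one must check that $I(X_1\cdots X_m : B)\leq 2c$ (and its $\bob$-analogue) survives the presence of arbitrary prior entanglement — the entanglement contributes nothing because it is input-independent, but the per-round accounting must be done with care — and, more importantly, that the $i_0$-th ``slice'' of $\cP$ is a genuine single-copy protocol whose privacy losses are exactly $I(X_{i_0}:B)$ and $I(Y_{i_0}:A)$ rather than conditional quantities. It is precisely to avoid such conditioning that the other coordinates are fed fresh $\mu^*$-distributed inputs (generated locally) instead of fixed values, and this is also the reason the argument is restricted to product distributions.
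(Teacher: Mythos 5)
Your proposal matches what the paper intends: the paper itself gives no explicit proof of this corollary (it is stated as ``resulting from the above theorem''), and the implicit argument is exactly the one you describe --- bound the end-of-protocol privacy losses of a $c$-bit protocol for $f^{\oplus m}$ by $O(c)$, use independence of the coordinates and superadditivity of mutual information to locate a coordinate $i_0$ with both privacy losses $O(c/m)$, embed $f$ there, and feed the resulting single-copy protocol into Thm.~\ref{thm:multiround} (equivalently, into the privacy-tradeoff corollary just above). Your careful remarks about why the other coordinates can be prepared locally without changing the reduced state on $X_{i_0}B$ and $Y_{i_0}A$ are exactly the bookkeeping that needs to be checked, and the reasoning is sound: since the input registers are preserved and never sent, tracing out the other $X_j$ and $Y_j$ gives the same reduced state whether those registers start mixed or in coherent superposition, so the per-coordinate mutual informations in the embedded protocol $\mathcal{Q}$ really do equal $I(X_{i_0}:B)$ and $I(Y_{i_0}:A)$ as computed in $\cP$.

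One small wrinkle: the last step, ``driving the error back from $1/3+\delta$ to $1/3$ by a constant number of parallel repetitions and a majority vote,'' does not apply to general relations, since there is no canonical way to aggregate several candidate witnesses $z$. This is not a real obstacle --- the cleaner fix is simply to start from a protocol for $f^{\oplus m}$ with error, say, $1/4$, so that the compressed one-round protocol has error $1/4 + \delta < 1/3$, and absorb the constant-factor change of error threshold into the $\Omega(\cdot)$; the paper itself is loose on this point (note that Res.~\ref{res:privacyquant} says ``bounded average error'' rather than pinning it to $1/3$). With that one adjustment your argument is complete and is, as far as one can tell, the paper's own.
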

\paragraph{Remark:} 
Jain, Radhakrishnan, and Sen~\cite{jain:icalp03,HarshaJMR07} proved  
Direct Sum results for classical multi-round protocols. Their results were stronger
because it avoided the logarithm. However, if we want a Direct
Sum result independent of the number of rounds, the above is the
best possible as evidenced by the Index function problem and the
Pointer Chasing problem~\cite{jain:substate}. 


\subsection{Classical Protocols}

Let $\cP$ be a classical private-coins two-way protocol for a relation
$f \subseteq \cX \times \cY
\times \cZ$.  Let $\mu_{{X}}, \mu_{{Y}}$ be probability
distributions on $ \cX, \cY$, and let $\mu \defeq \mu_{{X}} \times
\mu_{{Y}}$ denote a product distribution on
$\cX \times \cY$. Consider a run of $\cP$, in which the inputs of
$\alice$ and $\bob$, are drawn according to distribution
$\mu$. Let $X$ and $Y$ denote the random variables corresponding to
the input of $\alice$ and $\bob$ respectively. Let $M$ denote the complete
transcript of the messages sent by $\alice$ and $\bob$ during the protocol.
Let $I(X :M)$ denote the mutual information between random variables
$X$ and $M$ at the end of this run of ${\cal P}$.

\begin{definition}[Privacy loss]
\label{def:privacyclass}
The {\em privacy loss} of ${\cal P}$ for relation $f$ on the 
product distribution $\mu$ from $\alice$
to $\bob$ is defined as $L^{{\cal P}}(f, \mu, A, B) \defeq I(X : M)$.
The privacy loss from $\bob$ to $\alice$, 
is defined similarly as $L^{{\cal P}}(f, \mu, B, A) \defeq I(Y : M)$.
\end{definition}

\begin{theorem}
\label{thm:cltworound}
Let $f \subseteq \cX \times \cY \times \cZ$ be a relation and let
$\epsilon \in (0,1/2)$. Let $\mu$ be a product distribution on $\cX
\times \cY$. Let $\cP$ be a private-coins protocol for $f$ with
distributional error at most $\epsilon$ under $\mu$. Let us assume
without loss of generality that $\alice$ sends the first message and $\bob$
computes the final answer. Let $L^{{\cal P}}(f, \mu, A, B) \leq k_a$
and $L^{{\cal P}}(f, \mu, B, A) \leq k_b$.  Let $\tilde{\delta} >0$ be
such that $\epsilon + \tilde{\delta} \in (0,1/2)$. Then there exists a
one-round public-coin protocol (and hence also a deterministic
protocol) $\tilde{\cP}$ with single communication from $\alice$, such
that,
\begin{enumerate}
\item
Communication from $\alice$ in $\tilde{\cP}$ is
$O\left(\frac{\log \frac{1}{\tilde{\delta}}}{\tilde{\delta}^3}\cdot (k_a + 1) \cdot
2^{O((k_b+1)/\tilde{\delta}^2)}\right)$. 
\item  The distributional error of $\tcP$ under $\mu$ is at most
$\epsilon + \tilde{\delta}$.
\end{enumerate}
\end{theorem}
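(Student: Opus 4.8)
The plan is to mirror the architecture of the proof of Thm.~\ref{thm:multiround}, replacing the quantum correctors built from the Substate Theorem by \emph{classical} correctors built from its classical analogue, Lem.~\ref{lem:markovsubstate}. I would start from the rectangle structure of classical protocols: writing $M$ for the transcript of $\cP$ and $M_{xy}$, $M_x$, $M_y$, $M$ for its distribution when the inputs are, respectively, $(x,y)$, $(x,\sim\mu_{\cY})$, $(\sim\mu_{\cX},y)$, $(\sim\mu_{\cX},\sim\mu_{\cY})$, one has, for every transcript $m$, the identity $M_{xy}(m)\cdot M(m)=M_x(m)\cdot M_y(m)$, and, since $\mu$ is a product distribution, $I(M:Y|X)=I(M:Y)\le k_b$ (the standard fact that under product distributions conditioning on the other player's input does not increase a player's information cost) as well as $I(M:X)\le k_a$. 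The goal is to let $\alice$ and $\bob$ \emph{jointly sample} a transcript $m$ distributed approximately as $M_{xy}$ using a single message from $\alice$; since we may assume $\bob$ computes the final answer of $\cP$, that answer is a function of $m$, so $\bob$ can then simply output it.

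Next I would build the classical correctors by applying Lem.~\ref{lem:markovsubstate} twice. From $I(M:X)=\E_x S(M_x\|M)\le k_a$ and Markov's inequality, for a set of $x$ of $\mu_{\cX}$-measure $1-O(\tilde{\delta})$ one has $S(M_x\|M)\le O(k_a/\tilde{\delta})$, and for such $x$ the lemma produces a set $\Good_x$ of transcripts with $M_x(\Good_x)\ge 1-O(\tilde{\delta})$ and $M_x(m)/M(m)\le 2^{a}$ on $\Good_x$, where $a=O(k_a/\tilde{\delta}^{2})$; symmetrically I would obtain for good $y$ a cap $b=O(k_b/\tilde{\delta}^{2})$ with $M_y(m)/M(m)\le 2^{b}$ on a set of $M_y$-measure $1-O(\tilde{\delta})$. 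Finally, from $I(M:Y|X)=\E_{x,y}S(M_{xy}\|M_x)\le k_b$ together with the identity $M_{xy}(m)/M_x(m)=M_y(m)/M(m)$, the same lemma gives that for most $(x,y)$, $\Pr_{m\sim M_{xy}}[\,M_y(m)/M(m)>2^{b}\,]\le O(\tilde{\delta})$; this last estimate controls the bias introduced by clipping $\bob$'s acceptance ratio at $2^{b}$.

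The protocol $\tcP$ would then run as follows. As public randomness, $\alice$ and $\bob$ share i.i.d.\ triples $(m^i,\theta^i_A,\theta^i_B)_{i=1}^{N}$ with $m^i\sim M$ and $\theta^i_A,\theta^i_B$ uniform on $[0,1]$, for $N=2^{a+b}\cdot\mathrm{poly}(1/\tilde{\delta})$. On input $x$, $\alice$ scans $i=1,2,\dots$, accepts $i$ exactly when $\theta^i_A\le M_x(m^i)/(2^{a}M(m^i))$, collects the first $K\defeq 2^{b}\cdot O(\log(1/\tilde{\delta}))$ accepted indices into a set $S$ (and aborts if fewer than $K$ appear before the pool is exhausted), and sends $S$ to $\bob$; on $\Good_x$ the $m^i$, $i\in S$, are essentially i.i.d.\ $M_x$-samples. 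On input $y$ and message $S$, $\bob$ takes the smallest $i^{\ast}\in S$ with $\theta^{i^{\ast}}_B\le M_y(m^{i^{\ast}})/(2^{b}M(m^{i^{\ast}}))$ (aborting if there is none) and outputs the answer determined by $m^{i^{\ast}}$: conditioned on $\bob$'s acceptance, $m^{i^{\ast}}$ is distributed proportionally to $M_x(m)\min\{1,M_y(m)/(2^{b}M(m))\}$, which coincides with $M_{xy}$ up to the clipping mass bounded above. The message length is $|S|\cdot\lceil\log N\rceil=2^{b}\cdot O(\log(1/\tilde{\delta}))\cdot O\big((k_a+k_b)/\tilde{\delta}^{2}+\log(1/\tilde{\delta})\big)$; after routine manipulations — absorbing polynomial factors in $k_b$ and $1/\tilde{\delta}$ into $2^{O((k_b+1)/\tilde{\delta}^{2})}$ and a spare $1/\tilde{\delta}$, and using the ``$+1$''s to cover $k_a=0$ or $k_b=0$ — this is $O\big(\tfrac{\log(1/\tilde{\delta})}{\tilde{\delta}^{3}}(k_a+1)\,2^{O((k_b+1)/\tilde{\delta}^{2})}\big)$, as required. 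Fixing the public coins to their error-minimising value then yields the promised deterministic one-way protocol of the same cost.

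For the error analysis I would bound the excess over $\epsilon$ by the sum of four contributions, each $O(\tilde{\delta})$: the probability that $x$ or $y$ is not good (Markov); the total-variation error between the distribution of the transcript $\bob$ keeps and $M_{xy}$, which stems only from the two clipping steps and is controlled by the two estimates from Lem.~\ref{lem:markovsubstate}; the probability that $\bob$ finds no $i^{\ast}$, at most $e^{-\Omega(2^{-b}K)}$ and hence $O(\tilde{\delta})$ by the choice of $K$; and the probability that $\alice$'s pool runs dry before collecting $K$ accepted indices, negligible by a Chernoff bound given the choice of $N$. The main obstacle — the one genuinely new point relative to Thm.~\ref{thm:multiround} — is that $\bob$ sends nothing back, so he cannot do ordinary rejection sampling against $M_{xy}$; the remedy is to have $\alice$ ship a whole batch of about $2^{k_b}$ candidate $M_x$-samples so that at least one survives $\bob$'s single-shot rejection with high probability. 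Making this go through needs (i) the identity $M_{xy}=M_xM_y/M$, so that $\bob$'s acceptance rule uses only the ratio $M_y/M$ he can compute, and (ii) the equality $I(M:Y|X)=I(M:Y)$, which is exactly what keeps the clipping error in $\bob$'s step small; it is precisely this different shape of the classical privacy-loss definition that forces a separate argument rather than a direct appeal to the quantum proof.
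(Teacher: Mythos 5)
Your proposal is correct and follows essentially the same approach as the paper's proof: both derive the rectangle identity $M_{xy}\cdot M = M_x\cdot M_y$, both use Lemma~\ref{lem:markovsubstate} twice (once over inputs, once over transcripts) to extract the $\Good$-sets with bounded likelihood ratios $2^{a}$, $2^{b}$, and both have $\alice$ ship a batch of about $2^{b}\log(1/\tilde{\delta})$ rejection-sampled candidates from a shared pool of $M$-samples so that $\bob$ survives a single round of rejection. The differences — a 1D pool with a set message in place of the paper's 2D pool with per-row prefix-free encoding and a Markov truncation, soft clipping $\min\{1,\cdot\}$ rather than hard rejection outside $\Good_x$, and the explicit appeal to $I(M:Y|X)=I(M:Y)$ to control $\bob$'s clipping error where the paper relies on implicit averaging — are packaging choices and minor rigor improvements, not a different route.
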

\begin{proof}
Let the marginals of $\mu$ on $\cX, \cY$ be $\mu_X,\mu_Y$
respectively. Therefore $\mu = \mu_X \otimes \mu_Y$. Let the
distribution of $M$ (the combined message transcript in $\cP$), when
$X=x$ and $Y=y$, be ${P_{x,y}}$.  Let $P_x
\defeq \av_{y \leftarrow \mu_Y}[P_{x,y}]$, $P_y \defeq \av_{x
\leftarrow \mu_X}[P_{x,y}]$  and $P \defeq \av_{(x,y) \leftarrow \mu}[P_{x,y}]$. 
Let there be $k$ messages in protocol $\cP$. Let $M_1, M_2, \ldots
M_k$ denote the random variables corresponding to the first, second
and so on till the $k$-th message of the protocol $\cP$. Let $S$ be
the set of all message strings $s$. For $s \in S$,
let $s_1, s_2, \ldots, s_k$ denote the parts corresponding to $M_1,
M_2, \ldots M_k$ respectively. For $i
\in [k]$, let $p^{x,y}(s, i)$ denote the probability with which
$s_i$ appears in $P_{x,y}$ conditioned on the first $i-1$ messages as
being $s_1, s_2, \ldots s_{i-1}$. Similarly we define $p^{x}(s, i)$,
$p^{y}(s, i)$ and $p(s, i)$ corresponding to distributions $P_x$,
${P_y}$ and $P$. Let $p^{x,y}(s)$ denote the probability with which
message $s$ appears in $P_{x,y}$. Similarly let us define $p^{x}(s)$,
$p^{y}(s)$ and $p(s)$ corresponding to distributions $P_x$, ${P_y}$
and $P$. Now we have the following claim.
\begin{claim}
\label{claim:nice}
For all $x \in \cX, y \in \cY, s \in S$,
$$p^x(s) \cdot p^y(s)  \quad = \quad p(s) \cdot p^{x,y}(s).$$ 
\end{claim}
\begin{proof}
Note that since $\cP$ is a private coins protocol and $\bob$ sends even
numbered messages, we have for all even $i, \forall x \in \cX, \forall
s \in S, p^x(s,i) = p(s,i)$. Therefore $\forall x \in \cX, \forall s
\in S$, 
\begin{equation} \label{eq:1}
\frac{p^x(s)}{p(s)} =
\frac{\prod_{i =1}^k p^x(s,i)}{\prod_{i =1}^k p(s, i)} =
\frac{\prod_{i :\mathsf{odd}}p^x(s,i)}{\prod_{i: \mathsf{odd}}p(s,
i)}.
\end{equation}
 Similarly we
have for all odd $i, \forall y \in \cY, \forall
s \in S, p^y(s,i) = p(s,i)$ and hence, 
\begin{equation} \label{eq:2}
 \frac{p^y(s)}{p(s)}
= \frac{ \prod_{i : \mathsf{even}}p^y(s,i)}{\prod_{i:
\mathsf{even}}p(s, i)}.
\end{equation}
We can note further that for $\forall x \in \cX, \forall
y \in \cY, \forall s \in S$; for all odd $i, p^{x,y}(s,i)= p^x(s,i)$ and for
all even $i, p^{x,y}(s,i)= p^y(s,i)$. Therefore, 
\begin{equation} \label{eq:3}
p^{x,y}(s) =
\prod_{i=1}^k p^{x,y}(s,i) = \prod_{i : \mathsf{odd}}p^x(s,i) \cdot
\prod_{i : \mathsf{even}}p^y(s,i).
\end{equation}
 Our claim now follows by combining Eq.~(\ref{eq:1}), Eq.~(\ref{eq:2})
and Eq.~(\ref{eq:3}).
\qed\end{proof}
Let $\delta = \frac{\tilde{\delta}}{5}$. Since $ k_a \geq I(M:X) = \E_{x
\leftarrow \mu_X}[S(P_x|| P)]$, using Markov's inequality we get a set
$\Good_X \subseteq \cX$ such that
\begin{equation}\label{eq:goodX}
 \Pr_{\mu_X}[ x\in \Good_X] \geq 1 - \delta \quad \text{ and } \quad \forall x \in
\Good_X, S(P_{x} || P) \leq \frac{k_a}{\delta}.    
\end{equation}
Let $x \in \Good_X$.  Since $ \frac{k_a}{\delta} \geq S(P_x || P) =
\E_{s \leftarrow P_x} \left[\log \frac{p^x(s)}{p(s)}\right]$, using
Lem.~\ref{lem:markovsubstate}, we get a set $\Good^x \subseteq S$ such
that
\begin{equation}\label{eq:goodx}
\Pr_{P_x}[ s\in \Good^x] \geq  1 - \delta \quad \text{ and }  \quad  \forall
s \in \Good^x, \frac{p^x(s)}{p(s)} \leq 2^{\frac{k_a+1}{\delta^2}}.
\end{equation}
Similarly there exists a set $\Good_Y \subseteq \cY$ such that
\begin{equation}\label{eq:goodY}
 \Pr_{\mu_Y}[ y\in \Good_Y] \geq 1 - \delta \quad \text{ and } \quad 
  \forall y \in \Good_Y,    S(P_{y} || P) \leq \frac{k_b}{\delta}.
\end{equation}
Similarly for $y \in \Good_Y$, there exists a set $\Good^y \subseteq S$
such that
\begin{equation}\label{eq:goody}
\Pr_{P_y}[ s\in \Good^y] \geq  1 - \delta \quad \text{ and }
\quad  \forall s \in \Good^y,  \frac{p^y(s)}{p(s)} \leq 2^{\frac{k_b+1}{\delta^2}}.
\end{equation}

Let us now present an intermediate protocol $\cP'$
in Fig.~\ref{fig:intclass} from which we will finally obtain our
desired protocol $\tcP$. 

\begin{figure}[ht]
\begin{center}
\mybox{
\smallskip
{\em $\alice$ and $\bob$, using shared prior randomness, generate an array
of strings (each string belonging to the set $S$) with infinite columns
and $K \defeq \left(\frac{1}{1-\delta} \cdot 
\ln \frac{1}{\delta}\right) \cdot 2^{(k_b+1)/\delta^2}$
rows. Each string in the array is sampled independently according to
the distribution $P$. Let the random variables representing various
strings be $S^{i,j}, i \in [K], j \in \mathbb{N}$ ($\mathbb{N}$ is the
set of natural numbers). }
     
\begin{description}
\item[Input:] $\alice$ gets $x \in X$ and $\bob$ gets $y \in Y$. 

\item[$\alice$:] She sets $i=1, j=1$.
		\begin{enumerate} \item In case $x \notin \Good_X$,
		she aborts the protocol and sends a special abort
		message to $\bob$ (using constant number of bits). Otherwise
		she moves to step 2.  \item She considers string
		$S^{i,j}$. In case $S^{i,j} \in \Good_x$, she accepts
		$S^{i,j}$ with probability $\frac{1}{2^{(k_a
		+1)/\delta^2}} \cdot
		\frac{p^x(S^{i,j})}{p(S^{i,j})}$. In case $S^{i,j}
		\notin \Good_x$, she accepts $S^{i,j}$ with
		probability $0$. \item In case she accepts $S^{i,j}$,
		she communicates $j$ to $\bob$ using a prefix free binary
		encoding. If $i = K$, she stops, otherwise she sets $i
		= i +1, j=1$ and goes to step 2. In case she rejects
		$S^{i,j}$, she sets $j=j+1$ and moves to step 2.
		\end{enumerate} Let the various index communicated to
		$\bob$ be denoted $J_i, i \in [K]$.
\item[$\bob$:] 	He sets $l=1$. If he gets abort message from $\alice$, he
		aborts the protocol, otherwise he goes to step 1.
		\begin{enumerate} \item If $y \notin \Good_y$, he
		aborts the protocol. Otherwise he goes to step 2.
		\item He considers the string $S^{l,J_l}$, where $J_l$
		is as obtained from $\alice$. If $S^{l,J_l} \in \Good_y$,
		he accepts $S^{l,J_l}$ with probability
		$\frac{1}{2^{(k_b+1)/\delta^2}} \cdot
		\frac{p^y(S^{l,J_l})}{p(S^{l,J_l})}$. If $S^{l,J_l} \notin \Good_y$,
		he accepts $S^{l,J_l}$ with probability $0$. \item In
		case he accepts $S^{l,J_l}$, he considers it to be the
		the final message transcript $M$ of protocol $\cP$ and simulates
		$\cP$ from now on to output $z \in \cZ$. In case he
		rejects $S^{l,J_l}$, if $l=K$ he aborts the protocol,
		otherwise he sets $l = l+1$ and goes to step 2.
		\end{enumerate}
\end{description}
\smallskip
}
\end{center}
\caption{The intermediate protocol $\cP'$} 
\label{fig:intclass}
\end{figure}

Protocol $\cP'$ is clearly one-way protocol. Now let us now analyze
the expected communication from $\alice$ to $\bob$ in $\cP'$ and expected
error of $\cP'$. 

\smallskip
\noindent{\bf Expected communication of $\cP'$:} 
 When $x \notin \Good_X$, there is constant communication. Let $x \in
 \Good_X$, and fix $i \in [K]$. Then the probability that $J_i=j$ 
 given that the previous samples were rejected in the row $i$, is:
\begin{eqnarray*} 
\lefteqn{\sum_{s \in S} \Pr(S^{i,j} =s) \cdot  \Pr(s \text{ is accepted})} \\ 
& = & \sum_{s \in \Good_x} \Pr(S^{i,j} =s) \cdot \Pr(s
\text{ is accepted}) + \sum_{s \notin \Good_x} \Pr(S^{i,j} =s) \cdot
\Pr(s \text{ is accepted}) \\
& = & \sum_{s \in \Good_x} \Pr(S^{i,j} =s) \cdot \Pr(s
\text{ is accepted}) + 0 \\
& =& \sum_{s \in \Good_x} p(s) \cdot\frac{1}{2^{(k_a +1)/\delta^2}}
\cdot \frac{p^x(s)}{p(s)} = \frac{1}{2^{(k_a +1)/\delta^2}} \cdot
\Pr_{P_x}(s \in \Good_x) \geq \frac{1 - \delta}{2^{(k_a +1)/\delta^2}}.
\end{eqnarray*}
The last inequality follows from Eq.~(\ref{eq:goodx}). 
Therefore expected value of $J_i$ is $\frac{2^{(k_a +1)/\delta^2}}{1
- \delta}.$ Therefore, from concavity of the $\log$ function it
follows that the  expected communication from $\alice$ to communicate
$J_i$ to $\bob$ (using a prefix free binary encoding) is $O(\log
\frac{2^{(k_a +1)/\delta^2}}{1 - \delta})$.
This is true for every $i \in [K]$. Therefore for $x \in \Good_x$, expected
communication from $\alice$ is $O\left(\frac{\log \frac{1}{\delta}}{\delta^2}
\cdot (k_a + 1) \cdot 2^{O((k_b+1)/\delta^2)}\right)$. Therefore
overall expected communication from $\alice$ is $O\left(\frac{\log
\frac{1}{\delta}}{\delta^2}\cdot (k_a + 1) \cdot
2^{O((k_b+1)/\delta^2)}\right)$.

\smallskip
\noindent {\bf Expected error of $\cP'$:}
$\alice$ aborts the protocol when $x \notin \Good_X$, which happens with
probability at most $\delta$. Assume that $\alice$ does not abort. $\bob$
aborts the protocol when $y \notin
\Good_Y$, which happens with probability at most $\delta$. When 
$y \in \Good_Y$, using a similar calculation as above we can conclude
that $\bob$ accepts the $l$-th sample (for any $l \in [K]$), given that he
has rejected the samples before is at least $\frac{1-\delta}{2^{(k_b
+1)/\delta^2}}$. Therefore, $$ \Pr(\text{$\bob$ rejects all $K$ samples})
\leq \left(1 - \frac{1-\delta}{2^{(k_b +1)/\delta^2}}\right)^K \leq
\exp (-K \cdot \frac{1-\delta}{2^{(k_b +1)/\delta^2}}) = \delta. $$
Therefore, when $(x,y) \in \Good_X \times \Good_Y$, 
$$\Pr(\text{$\bob$ aborts given input of $\cP'$ is $(x,y)$}) \leq \delta.$$
We have the following claim.
\begin{claim} \label{claim:lowell}
Let $(x,y) \in \Good_X \times \Good_Y$ and $\bob$ does not abort. Then,
\begin{enumerate}
\item If $s \in \Good_x
\cap \Good_y$ then  $\Pr(\text{$\bob$ sets $M=s$}) =
\frac{p^{x,y}(s)}{\Pr(s \in \Good_x \cap \Good_y)}$. 
\item If $s \notin \Good_x
\cap \Good_y$ then $\Pr(\text{$\bob$ sets $M=s$}) = 0$.
\end{enumerate}
\end{claim}
We defer the proof of this claim to later. Let us now analyze the
expected error of the protocol $\cP'$ assuming
Claim~\ref{claim:lowell} to be true.  Let $\epsilon'_{x,y}$ stand for
error of $\cP'$ when input is $(x,y)$.  From above claim, if $(x,y)
\in \Good_X \times \Good_Y$ and $\bob$ does not abort, then the $\ell_1$
distance between the distribution of $M$ in $\cP'$ and $P_{x,y}$ is
$2(1-\Pr(s \in \Good_x \cap \Good_y)) \leq 2\delta$. Therefore if
$(x,y) \in \Good_X \times \Good_Y$ and $\bob$ does not abort, then
$\epsilon_{x,y}' \leq \epsilon_{x,y} +
\delta$, where $\epsilon_{x,y}$ is the error of $\cP$ on input
$(x,y)$. Therefore, for $(x,y)  \in \Good_X \times \Good_Y$,
$$\Pr(\cP' \text{ errs on input $(x,y)$  given $\bob$ does not abort})
\leq \epsilon_{x,y} + \delta.$$
This implies:
\begin{eqnarray*}
\lefteqn{\E_{(x,y) \in \Good_X \times \Good_Y}[\Pr(\cP' \text{ errs on input
$(x,y)$ given $\bob$ does not abort})]} \\
& \leq &  \E_{(x,y) \in \Good_X \times \Good_Y}[\epsilon_{x,y}] + \delta \\
& \leq & \frac{1}{\Pr((x,y) \in \Good_X \times \Good_Y)}\E_{(x,y) \in
\cX \times \cY}[\epsilon_{x,y}] + \delta \\
& \leq  & \frac{\epsilon}{1 - 2\delta} + \delta   
\end{eqnarray*}

\begin{eqnarray*}
\lefteqn{\text{Expected error of $\cP'$}} \\ 
& \leq & \Pr(x \notin \Good_X) + \Pr(y \notin \Good_Y) 
+ \Pr((x,y) \in \Good_X \times \Good_Y)\cdot
\E_{x \in \Good_X, y \in \Good_Y}[\epsilon_{x,y}']\\  
& \leq & \delta + \delta \\ 
&+& \Pr((x,y) \in \Good_X \times \Good_Y) \cdot
(\E_{x \in \Good_X, y \in \Good_Y}
[\Pr(\text{$\bob$ aborts given input of $\cP'$ is $(x,y)$})]  \\ 
&+& \E_{x \in \Good_X, y \in \Good_Y}[\Pr(\cP' \text{ errs given $x
\in \Good_X, y \in \Good_Y$ and $\bob$ does not abort})] ) \\
& \leq & 2\delta + \delta  + (\Pr((x,y) \in \Good_X \times \Good_Y)
\cdot \left(\frac{\epsilon}{\Pr((x,y) \in \Good_X \times \Good_Y)} +
\delta\right) \\
& \leq & 4 \delta + \epsilon. 
\end{eqnarray*}
\qed \end{proof}

We are now finally ready to describe the protocol $\tcP$. 

\begin{figure}[ht]
\begin{center}
\mybox{
\smallskip
{ \em Let $c$ be the expected communication from $\alice$ to $\bob$ in
protocol $\cP'$.}

\begin{description}

\item[Input:] $\alice$ gets $x \in X$ and $\bob$ gets $y \in Y$. 

\item[$\alice$:] She simulates protocol $\cP'$. If for some choice of
the public coins the bits needed to communicate all $J_i, i \in
[K]$ exceeds $c/\delta$, she aborts the protocol and sends a special
abort message to $\bob$ in constant bits.
\item[$\bob$:] In case he does not get abort message from $\alice$, he
proceeds as in protocol $\cP'$.
\end{description}
\smallskip
}
\end{center}
\caption{The final protocol $\tcP$} 
\label{fig:finalclass}
\end{figure}

Now it is clear that the communication of $\tcP$ is as claimed.  Also
it is easily noted that the expected error of $\tcP$ is at most
expected error of $\cP'$ plus $\delta$ which is $\epsilon + 5\delta =
\epsilon + \tilde{\delta}$ as claimed (since
$\delta = \frac{\tilde{\delta}}{5}$).

\begin{proofof}{Claim~\ref{claim:lowell}}
Let $l\in [K]$. Then conditioned on $\bob$ rejecting first $l-1$ samples,
for $s \in \Good_x \cap \Good_y$,
\begin{eqnarray*}
\lefteqn{\Pr(\text{$\bob$'s outputs $S^{l,J_l}$ and  $S^{l,j_l}=  s$})} \\
& = & \Pr(S^{l,j_l}= s) \cdot \Pr(\text{$\alice$ accepts $S^{l,j_l}$})
\cdot \Pr(\text{$\bob$ accepts $S^{l,j_l}$}) \\
& = & p(s) \cdot \frac{p^x(s)}{2^{(k_a+1)/\delta^2}p(s)} \cdot 
\frac{p^y(s)}{2^{(k_b+1)/\delta^2}p(s)} \\
& = &  \frac{p^{x,y}(s)}{2^{(k_a+k_b+2)/\delta^2}}.
\end{eqnarray*}
Therefore conditioned on $\bob$ rejecting first $l-1$ samples,
\begin{eqnarray*}
 \Pr(\text{$\bob$'s outputs $S^{l,j_l}$}) & = & 
\sum_{s \in S}\Pr(\text{$\bob$'s outputs $S^{l,j_l}$ and  $S^{l,j_l}=
s$}) \\
& = & \sum_{s \in \Good_x \cap \Good_y}\Pr(\text{$\bob$'s outputs $S^{l,j_l}$
and  $S^{l,j_l}= s$})  \\
& = & \frac{1}{2^{(k_a+k_b+2)/\delta^2}} \cdot \Pr(s \in \Good_x \cap \Good_y).
\end{eqnarray*}
Therefore, conditioned on $\bob$ rejecting first $l-1$ samples, for $s \in \Good_x
\cap \Good_y$,
\begin{eqnarray*}
 \Pr(\text{$\bob$'s outputs $s$ given $\bob$ outputs $S^{l,j_l}$}) & = & 
\frac{\Pr(\text{$\bob$'s outputs $S^{l,j_l}$ and  $S^{l,j_l}=  s$})
}{\Pr(\text{$\bob$'s outputs $S^{l,j_l}$})} \\
& = & \frac{p^{x,y}(s)}{\Pr(s \in \Good_x \cap \Good_y)}
\end{eqnarray*}
Clearly for $s \notin \Good_x \cap \Good_y, \Pr(\text{$\bob$'s outputs
$s$ given $\bob$ outputs $S^{l,j_l}$}) = 0$. Our claim now immediately follows.
\end{proofof}

As before we get the following corollaries from the above theorem.  
\begin{corollary}[Privacy tradeoff]
For any relation $f: \cX \times \cY \rightarrow \cZ$, 
$L(f, A, B) 2^{O(L (f, B, A))} \geq 
 \sR^{1, A \rightarrow B, [\;]} (f)$. Similarly,
$L (f, B, A) 2^{O(L (f, A, B))} \geq 
 \sR^{1, B \rightarrow A, [\;]} (f)$.
\end{corollary}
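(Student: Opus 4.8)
Exactly as for the analogous quantum privacy‑tradeoff corollary obtained from Theorem~\ref{thm:multiround}, the plan is to read this off directly from the multi‑round compression theorem, here Theorem~\ref{thm:cltworound}. Fix any classical private‑coins protocol $\cP$ that computes $f$ with some fixed constant worst‑case error $\epsilon$, and let $k_a := L^{\cP}(f, A, B)$ and $k_b := L^{\cP}(f, B, A)$ be its two privacy losses, i.e.\ the maxima over product distributions of the per‑distribution quantities of Def.~\ref{def:privacyclass}; if in $\cP$ it is $\bob$ who speaks first, prepend an empty first message for $\alice$, which changes nothing. Now take an \emph{arbitrary} product distribution $\mu = \mu_\cX \times \mu_\cY$ on $\cX \times \cY$. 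Then $\cP$ has average error at most $\epsilon$ under $\mu$, and by the definition of $k_a, k_b$ we have $L^{\cP}(f, \mu, A, B) \le k_a$ and $L^{\cP}(f, \mu, B, A) \le k_b$, so the hypotheses of Theorem~\ref{thm:cltworound} hold with these parameters.

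Apply Theorem~\ref{thm:cltworound} to $\cP$ and $\mu$ with $\tilde{\delta}$ chosen to be a small absolute constant satisfying $\epsilon + \tilde{\delta} < 1/2$. It produces a one‑round, $\alice$‑to‑$\bob$ deterministic protocol $\tcP$ of distributional error at most $\epsilon + \tilde{\delta}$ under $\mu$ and communication $O\!\left(\frac{\log(1/\tilde{\delta})}{\tilde{\delta}^{3}}\,(k_a+1)\,2^{O((k_b+1)/\tilde{\delta}^{2})}\right)$, which, $\tilde{\delta}$ being a constant, is $(k_a+1)\,2^{O(k_b)}$. Hence $\sD^{1, A \rightarrow B, \mu}_{\epsilon + \tilde{\delta}}(f) \le (k_a+1)\,2^{O(k_b)} = k_a\,2^{O(k_b)}$, the degenerate case $k_a = 0$ simply forcing $\sD^{1, A \rightarrow B, \mu}(f) = 0$. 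Since $\mu$ ranged over all product distributions and $\sR^{1, A \rightarrow B, [\;]}(f)$ is by definition the maximum over such $\mu$ of the bounded‑error one‑round $\alice$‑to‑$\bob$ distributional complexity of $f$, taking the maximum over $\mu$ yields $\sR^{1, A \rightarrow B, [\;]}(f) \le k_a\,2^{O(k_b)}$, which is the first inequality. The ``similarly'' statement follows by running the identical argument against the variant of Theorem~\ref{thm:cltworound} with the roles of $\alice$ and $\bob$ exchanged (so $\bob$ sends the single message and the linear and exponential dependences on $k_a$ and $k_b$ trade places), giving $\sR^{1, B \rightarrow A, [\;]}(f) \le k_b\,2^{O(k_a)}$.

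There is no genuinely difficult step: all the substance is in Theorem~\ref{thm:cltworound}, and what remains is bookkeeping. The three points that need a moment's care are: (i) the quantifier order --- $\sR^{1, A \rightarrow B, [\;]}(f)$ is a \emph{maximum} over product distributions, so Theorem~\ref{thm:cltworound} must be invoked once for each product $\mu$, but that is exactly the form in which the theorem is stated, so nothing is lost; (ii) keeping $\epsilon + \tilde{\delta}$ under whatever constant error threshold is built into the definition of $\sR^{1, A \rightarrow B, [\;]}$, which is secured by taking $\tilde{\delta}$ small and, if one insists on a particular threshold, by a one‑round‑preserving $O(1)$‑fold parallel amplification; and (iii) interpreting $L(f, A, B)$ and $L(f, B, A)$ in the statement as the privacy losses of the protocol under discussion, i.e.\ reading the corollary as the assertion that $L^{\cP}(f, A, B)\,2^{O(L^{\cP}(f, B, A))} \ge \sR^{1, A \rightarrow B, [\;]}(f)$ holds for every such $\cP$.
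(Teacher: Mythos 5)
Your argument matches the paper's implicit one (the paper only says ``as before we get the following corollaries from the above theorem'' and gives no separate proof): fix a protocol $\cP$ and a product distribution $\mu$, invoke Theorem~\ref{thm:cltworound} with $\tilde{\delta}$ a small constant, observe that the resulting one-round deterministic protocol has communication $(k_a+1)2^{O(k_b)}$, and then take the maximum over product $\mu$. Your bookkeeping remarks --- quantifier order, keeping $\epsilon+\tilde{\delta}$ below the error threshold in the definition of $\sR^{1,A\rightarrow B,[\;]}$, and reading the corollary as a per-protocol assertion since the paper never actually extends Definition~\ref{def:privacyclass} to $L^{\cP}(f,A,B)$ or $L(f,A,B)$ in the classical case --- are exactly the imprecisions the paper glosses over, and you handle them reasonably; the only nit is that the $O(1)$-fold amplification should be applied to the original $\cP$ (at constant-factor cost to $k_a,k_b$) rather than to the deterministic output protocol $\tcP$, which cannot be amplified by repetition.
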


\begin{corollary}[Weak Direct Sum]
\label{corr:weakdirectclass} For any relation $f: \cX \times \cY \rightarrow \cZ$, 
$$\sR^{[\;]} (f^{\oplus m}) \quad \geq \quad m \cdot
\Omega(\log \sR^{1, [\;]} (f)).$$
\end{corollary}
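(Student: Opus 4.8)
\medskip
\noindent\textbf{Proof plan for Corollary~\ref{corr:weakdirectclass}.}
The plan is to run a standard information-cost direct-sum reduction from an optimal two-way protocol for $f^{\oplus m}$ down to a single-copy protocol whose transcript carries little information about either player's input, and then feed that single-copy protocol into Theorem~\ref{thm:cltworound} to collapse it to a short one-round protocol for $f$; chaining the two bounds produces the $\log$ in the statement.

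First I would fix $\epsilon\defeq 1/3$ and a small constant $\tilde\delta>0$ with $\epsilon+\tilde\delta<1/2$, pick a product distribution $\mu^*$ on $\cX\times\cY$ that (nearly) attains $\sR^{1,[\;]}_{\epsilon+\tilde\delta}(f)=\max_{\mu\text{ product}}\sD^{1,\mu}_{\epsilon+\tilde\delta}(f)$, and set $\nu\defeq(\mu^*)^{\otimes m}$, which is a product distribution on $\cX^m\times\cY^m$; thus $\sR^{[\;]}_\epsilon(f^{\oplus m})\geq\sD^\nu_\epsilon(f^{\oplus m})$. Let $\cP$ be a deterministic two-way protocol attaining $\sD^\nu_\epsilon(f^{\oplus m})$, with transcript $M$ and communication $C\defeq|M|$, and let $X=(X_1,\dots,X_m)$, $Y=(Y_1,\dots,Y_m)$ be the inputs drawn from $\nu$. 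Since all $2m$ coordinates are mutually independent, the superadditivity of mutual information for independent sources (stated in Section~\ref{sec:prelim}) gives $\sum_i I(X_i:M)\leq I(X:M)\leq H(M)\leq C$ and likewise $\sum_i I(Y_i:M)\leq C$, so by averaging there is a coordinate $i_0$ with $I(X_{i_0}:M)+I(Y_{i_0}:M)\leq 2C/m$, hence both are at most $2C/m$. I would then form a single-copy protocol $\cP^{(i_0)}$ for $f$: on input $(x,y)\sim\mu^*$, $\alice$ puts $x$ in coordinate $i_0$ and fills her other coordinates with fresh independent samples from $\mu^*_\cX$ via her private coins, $\bob$ does the same with $\mu^*_\cY$, and they run $\cP$ and output its $i_0$-th answer (assuming WLOG $\alice$ sends first and $\bob$ answers, as in Theorem~\ref{thm:cltworound}). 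Because the reconstituted input has law $\nu$, the error of $\cP^{(i_0)}$ on coordinate $i_0$ is at most the overall error of $\cP$, i.e.\ $\leq\epsilon$ under $\mu^*$; and the joint law of ($\alice$'s input, transcript of $\cP^{(i_0)}$) is exactly that of $(X_{i_0},M)$, so $L^{\cP^{(i_0)}}(f,\mu^*,A,B)\leq 2C/m$ and, symmetrically, $L^{\cP^{(i_0)}}(f,\mu^*,B,A)\leq 2C/m$.

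To finish, I would apply Theorem~\ref{thm:cltworound} to $\cP^{(i_0)}$ with the fixed constant $\tilde\delta$ and $k_a=k_b=2C/m$, obtaining a one-round deterministic protocol for $f$ with distributional error $\leq\epsilon+\tilde\delta$ under $\mu^*$ and communication $O\!\big(\tilde\delta^{-3}\log(1/\tilde\delta)(k_a+1)2^{O((k_b+1)/\tilde\delta^2)}\big)=2^{O(C/m)}$; hence $\sR^{1,[\;]}_{\epsilon+\tilde\delta}(f)\leq\sD^{1,\mu^*}_{\epsilon+\tilde\delta}(f)=2^{O(C/m)}$, so $\log\sR^{1,[\;]}_{\epsilon+\tilde\delta}(f)=O(C/m)$, i.e.\ $C=\Omega\!\big(m\log\sR^{1,[\;]}_{\epsilon+\tilde\delta}(f)\big)$, and since $C\leq\sR^{[\;]}_\epsilon(f^{\oplus m})$ this yields the corollary (absorbing the constant slack in $\tilde\delta$, plus a routine constant-factor error-reduction remark if one insists on the exact default-error form). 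The part to handle with care is the choice of input distribution: $\nu$ must be the $m$-fold tensor of the \emph{single} worst-case product distribution $\mu^*$ for the one-round problem, so that the coordinates remain independent for the chain rule \emph{and} $\cP^{(i_0)}$ has bounded privacy loss under a genuine product distribution --- precisely the hypothesis Theorem~\ref{thm:cltworound} needs; the rest is bookkeeping. The quantum Corollary~\ref{corr:weakdirectquant} would follow by the same argument, using Res.~\ref{res:quantmultiround} (equivalently Theorem~\ref{thm:multiround}) in place of Theorem~\ref{thm:cltworound} and Lemma~\ref{lem:lowinfent} to bound $I(X_i:M)$ by the communication.
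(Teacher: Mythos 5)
Your proposal is essentially the intended proof: the paper states this corollary without a separate argument, saying only ``as before we get the following corollaries from the above theorem,'' and the natural derivation is exactly the one you give --- tensorize the worst product distribution $\mu^*$ for the one-round problem, run an optimal deterministic protocol for $f^{\oplus m}$ under $(\mu^*)^{\otimes m}$, use the chain rule to locate a coordinate $i_0$ with both $I(X_{i_0}:M)$ and $I(Y_{i_0}:M)$ at most $2C/m$, embed to a single-copy private-coin protocol with those quantities as privacy losses, and invoke Theorem~\ref{thm:cltworound} to collapse it to a one-round protocol of length $2^{O(C/m)}$. The key observations you highlight --- that $\nu$ must be a product distribution so that the coordinates stay independent for the chain rule \emph{and} so that Theorem~\ref{thm:cltworound}'s product-distribution hypothesis is met, and that $\mu^*$ must be chosen as the (near-)maximizer so that $\sD^{1,\mu^*}$ upper-bounds $\sR^{1,[\;]}$ --- are precisely the load-bearing ones.

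The one place where your write-up is glibber than the underlying mathematics is the final ``routine constant-factor error-reduction.'' What the argument literally produces is $\sR^{[\;]}_{\epsilon}(f^{\oplus m}) \geq \Omega_{\tilde\delta}\bigl(m\log\sR^{1,[\;]}_{\epsilon+\tilde\delta}(f)\bigr)$, with a genuine drift from $\epsilon$ to $\epsilon+\tilde\delta$ on the right. Because both $\sR^{[\;]}$ and $\sR^{1,[\;]}$ are distributional quantities attained by deterministic protocols, the usual parallel-repetition-plus-majority error reduction is \emph{not} available to erase this gap (a deterministic protocol repeated gives the same transcript). So the corollary as stated, with $1/3$ implicitly on both sides, is informal in the same way the paper's own one-round Direct Sum theorem is stated with $\epsilon$ on the left and $\epsilon+\delta$ on the right; the clean statement one actually proves carries the $\tilde\delta$. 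This is a presentational caveat, not a gap in the reasoning, but it is worth being explicit that the ``routine'' reduction you gesture at does not exist here and one should simply keep the two error parameters. Everything else --- the independence of the coordinates, the fact that error on coordinate $i_0$ is dominated by the overall error of $\cP$, the identification of $I(X_{i_0}:M)$ in $\cP^{(i_0)}$ with the corresponding quantity in $\cP$, and the absorption of $(k+1)2^{O(k)}$ into $2^{O(k)}$ --- checks out.
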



\section{Entanglement Reduction}
\label{sec:nobboxred}
We will need the following geometric result. It is
similar to a result proved earlier
in \cite{jain:icalp03}.
\begin{lemma}
\label{lem:incompress}
Suppose $M$, $N$ are positive integers with 
$M = \Theta(N^{2/3} \log N)$.
Let the underlying Hilbert space be $\C^M$. 
There exist $16 N$ subspaces $V_{ij} \leq \C^M$,
$1 \leq i \leq N$, $1 \leq j \leq 16$, each of dimension 
$\frac{M}{16}$, such that if we define
$\Pi_{ij}$ to be the orthogonal projection onto $V_{ij}$ and
$\rho_{ij} \defeq \frac{16}{M} \cdot \Pi_{ij}$, 
then
\begin{enumerate}
\item $\forall i,j \, \Tr (\Pi_{ij}\rho_{ij}) = 1$.
\item $\forall i, j, i', j', \, i \neq i', \, 
       \Tr (\Pi_{ij}\rho_{i'j'}) < 1/4$.
\item $\forall i, j, j', \, j \neq j', \, 
       \Tr (\Pi_{ij}\rho_{ij'}) = 0$.
\item $\forall i, \, I_M = \sum_{j=1}^{16} \Pi_{ij}$, where $I_M$
      is the identity operator on $\C^M$.
\item For all subspaces $W$ of dimension at most $N^{1/6}$, 
      for all families of density matrices
      $\{\sigma_{ij}\}_{i \in [N], 1 \leq j \leq 16}$,
      $\sigma_{ij}$ supported  in $W$,
      \[
      |\{i: \exists j, \; 1 \leq j \leq 16,  \;
            \Tr (\Pi_{ij} \sigma_{ij}) > 9/16\}| \leq N/4. 
      \]
\end{enumerate}
\end{lemma}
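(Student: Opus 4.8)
\medskip
\noindent\textbf{Proof idea.} The plan is to construct the subspaces at random and verify the five conclusions, the first four by design and the fifth by measure concentration on the unitary group. Assume $16\mid M$ and set $m\defeq M/16$. Fix a partition of an orthonormal basis of $\C^M$ into $16$ blocks of size $m$, with associated orthogonal projections $\Pi^0_1,\dots,\Pi^0_{16}$, so that $\Pi^0_j\Pi^0_{j'}=0$ for $j\neq j'$ and $\sum_j\Pi^0_j=I_M$. For each $i\in[N]$ draw a Haar-random unitary $U_i\in U(M)$, independently over $i$, and put $\Pi_{ij}\defeq U_i\Pi^0_jU_i^\dagger$, $V_{ij}\defeq\mathrm{range}(\Pi_{ij})$ and $\rho_{ij}\defeq\frac{16}{M}\Pi_{ij}$. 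Conclusions (1), (3), (4) then hold for \emph{every} realization: $\Tr(\Pi_{ij}\rho_{ij})=\frac{16}{M}\Tr\Pi_{ij}=1$; for $j\neq j'$, $\Pi_{ij}\Pi_{ij'}=U_i\Pi^0_j\Pi^0_{j'}U_i^\dagger=0$; and $\sum_j\Pi_{ij}=U_iI_MU_i^\dagger=I_M$. So it remains to show that (2) and (5) hold with probability $1-o(1)$.

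For (2), fix $i\neq i'$ and $j,j'$ and set $V\defeq U_i^\dagger U_{i'}$, which is Haar-distributed. Then $\Tr(\Pi_{ij}\Pi_{i'j'})=\norm{\Pi^0_jV\Pi^0_{j'}}_2^2$, with mean $\Tr(\Pi^0_j)\Tr(\Pi^0_{j'})/M=m^2/M=M/256$, hence $\E[\Tr(\Pi_{ij}\rho_{i'j'})]=1/16$. Since $V\mapsto\norm{\Pi^0_jV\Pi^0_{j'}}_2$ is $1$-Lipschitz for the Hilbert--Schmidt metric and bounded by $\sqrt m$, its square is $O(\sqrt M)$-Lipschitz, so by concentration of the Haar measure on $U(M)$ it deviates from its mean by more than $M/256$ only with probability $\exp(-\Omega(M^2))$. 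Thus $\Tr(\Pi_{ij}\rho_{i'j'})<1/8<1/4$ outside an event of probability $\exp(-\Omega(M^2))$, and a union bound over the $O(N^2)$ index tuples leaves a failure probability $o(1)$, using only $M=\omega(\log N)$.

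For (5), first discard the $\sigma$'s: if $\sigma_{ij}$ is supported in $W$ then $\Tr(\Pi_{ij}\sigma_{ij})=\Tr(P_W\Pi_{ij}P_W\sigma_{ij})\le\lambda_{\max}(P_W\Pi_{ij}P_W)$, where $P_W$ projects onto $W$; moreover $\lambda_{\max}(P_W\Pi_{ij}P_W)=\max_{v\in W,\,\norm{v}=1}\langle v,\Pi_{ij}v\rangle$ is nondecreasing as $W$ grows, so we may assume $\dim W=d\defeq\lfloor N^{1/6}\rfloor$. It suffices to prove that, with probability $1-o(1)$, \emph{every} $d$-dimensional $W$ has $|\{i:\exists j,\ \lambda_{\max}(P_W\Pi_{ij}P_W)>9/16\}|\le N/4$. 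Fix such a $W$. By unitary invariance $\E[P_W\Pi_{ij}P_W]=\frac1{16}P_W$, and a second-moment calculation (Weingarten calculus, or a direct estimate using that conjugating $P_W$ by a Haar unitary yields a uniformly random rank-$d$ projector) gives $\E\norm{P_W\Pi_{ij}P_W-\frac1{16}P_W}_2^2=O(d^2/M)$, which is $o(1)$ since $d^2/M=O\!\big(N^{1/3}/(N^{2/3}\log N)\big)\to0$. Hence $\E[\lambda_{\max}(P_W\Pi_{ij}P_W)]\le\frac1{16}+\sqrt{O(d^2/M)}<\frac18$ for large $N$. Since $U\mapsto\lambda_{\max}(P_WU\Pi^0_jU^\dagger P_W)$ is $2$-Lipschitz for the Hilbert--Schmidt metric, Haar concentration gives $\Pr[\lambda_{\max}(P_W\Pi_{ij}P_W)>1/2]\le e^{-c'M}$ for an absolute $c'>0$, and a union over $j$ gives $\Pr[\exists j:\lambda_{\max}(P_W\Pi_{ij}P_W)>1/2]\le 16e^{-c'M}$; as these events are independent over $i$ with this tiny probability, a Chernoff bound shows at most $N/8$ of them hold except with probability $e^{-\Omega(NM)}$.

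It remains to pass from a fixed $W$ to all of them via a net. Take a $\gamma$-net $\mathcal N$ of the set of $d$-planes in $\C^M$ in the metric $\norm{P_W-P_{W'}}_\infty$ with $\gamma=1/64$; such a net has $\log|\mathcal N|=O(dM)=O(N^{5/6}\log N)$. The union of the previous bound over $\mathcal N$ fails with probability at most $|\mathcal N|\,e^{-\Omega(NM)}=\exp\!\big(O(N^{5/6}\log N)-\Omega(N^{5/3}\log N)\big)=o(1)$. On the complement, given an arbitrary $d$-dimensional $W$ pick $W'\in\mathcal N$ with $\norm{P_W-P_{W'}}_\infty\le\gamma$; then $\lambda_{\max}(P_W\Pi_{ij}P_W)\le\lambda_{\max}(P_{W'}\Pi_{ij}P_{W'})+2\gamma$ for all $i,j$, so $\lambda_{\max}(P_W\Pi_{ij}P_W)>9/16$ forces $\lambda_{\max}(P_{W'}\Pi_{ij}P_{W'})>9/16-1/32>1/2$, whence $\{i:\exists j,\ \lambda_{\max}(P_W\Pi_{ij}P_W)>9/16\}$ has at most $N/8\le N/4$ elements; combined with the $o(1)$-exceptional event of (2) and the deterministic (1), (3), (4), a good choice of $U_1,\dots,U_N$ exists. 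I expect the main obstacle to be the two-sided squeeze in (5): one must keep the mean $\E[\lambda_{\max}(P_W\Pi_{ij}P_W)]$ well below $9/16$ — which is precisely where $M\gg d^2=N^{1/3}$ is used — \emph{and} obtain a \emph{constant} separation threshold ($1/2$ versus $9/16$), so that the coarse Grassmannian net of size $e^{O(dM)}$ is still beaten by the Chernoff failure probability $e^{-\Omega(NM)}$.
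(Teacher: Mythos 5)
Your proposal is correct, and since the paper itself gives no argument here — it only remarks that Lemma~\ref{lem:incompress} ``follows by combining the proofs of Thm.~5 and Lem.~7 of~\cite{jain:icalp03}'' and skips the details — there is no in-paper proof to compare against line by line. The route you take, fixing a coordinate decomposition $I_M = \sum_{j=1}^{16}\Pi^0_j$ and setting $\Pi_{ij} = U_i\Pi^0_jU_i^\dagger$ for independent Haar-random $U_i$, is the natural probabilistic construction and is surely the spirit of the cited [JRS03a] argument; you have in effect supplied the full proof that the paper elides. Items~(1), (3), (4) hold by fiat. For item~(2), your mean computation $\E\,\Tr(\Pi_{ij}\rho_{i'j'}) = 1/16$ and the Lipschitz estimate ($\norm{\Pi^0_jV\Pi^0_{j'}}_2$ is $1$-Lipschitz in $V$, hence its square is $O(\sqrt M)$-Lipschitz since the function is bounded by $\sqrt{M}/4$) combined with Haar concentration yield failure probability $e^{-\Omega(M^2)}$ per tuple, which dominates the $O(N^2)$ union bound because $M = \Theta(N^{2/3}\log N)$. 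For item~(5), the reduction to $\lambda_{\max}(P_W\Pi_{ij}P_W)$ and the monotonicity allowing $\dim W = d = \lfloor N^{1/6}\rfloor$ are correct; the second-moment bound $\E\norm{P_W\Pi_{ij}P_W - \tfrac{1}{16}P_W}_2^2 = O(d^2/M)$ checks out (a direct Haar-column computation gives the per-entry variance $O(1/M)$ over $d^2$ entries), and $d^2/M \to 0$ keeps the mean of $\lambda_{\max}$ safely below the $1/2$ threshold. The per-$W$ failure $e^{-\Omega(NM)}$ from independence over $i$ plus Chernoff beats the Grassmannian net of size $e^{O(dM)}$ since $dM = O(N^{5/6}\log N)$ while $NM = \Theta(N^{5/3}\log N)$, and the $2\gamma$ slack in the transfer $\lambda_{\max}(P_W\Pi P_W) \leq \lambda_{\max}(P_{W'}\Pi P_{W'}) + 2\norm{P_W - P_{W'}}_\infty$ with $\gamma = 1/64$ respects the margin $9/16 - 1/2 = 1/16 > 2\gamma$. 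The argument is complete.
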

\begin{proof}({\bf Sketch})
The proof follows by combining the proofs of Thm.~5 and
Lem.~7 of \cite{jain:icalp03}. We skip a full proof for brevity.
\qed \end{proof}

We shall also need the following easy proposition.
\begin{proposition}
\label{prop:schmidt}
Let $\ket{\phi}_{AB}$ be a bipartite pure quantum state.
Define $e \defeq E(\ket{\phi})$. Then there is a bipartite
pure quantum state $\ket{\phi'}_{AB}$ having Schmidt rank at
most $2^{100 e}$ such that 
$\trnorm{\ketbra{\phi} - \ketbra{\phi'}} \leq 1/20$.
\end{proposition}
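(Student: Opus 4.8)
The plan is to take $\ket{\phi'}$ to be the truncation of $\ket\phi$ to its largest Schmidt weights, renormalized. Write the Schmidt decomposition $\ket\phi=\sum_{i\ge 1}\sqrt{\lambda_i}\,\ket{a_i}\ket{b_i}$ with $\lambda_1\ge\lambda_2\ge\cdots$ and $\sum_i\lambda_i=1$, so that $e=E(\ket\phi)=\sum_i\lambda_i\log(1/\lambda_i)$; the statement is trivial when $e=0$ since then $\ket\phi$ is a product state. Fix a cutoff index $k$, to be chosen of size $2^{\Theta(e)}$, put $p\defeq\sum_{i\le k}\lambda_i$, and set
\[
\ket{\phi'}\ \defeq\ \frac{1}{\sqrt p}\sum_{i=1}^{k}\sqrt{\lambda_i}\,\ket{a_i}\ket{b_i}.
\]
By construction $\ket{\phi'}$ has Schmidt rank at most $k$, giving the rank bound, so everything reduces to controlling the trace distance.

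Two elementary facts drive the estimate. First, $\ket\phi$ and $\ket{\phi'}$ are pure with $\braket{\phi}{\phi'}=\sqrt p$, so the standard formula for the trace distance between two pure states gives $\trnorm{\ketbra{\phi}-\ketbra{\phi'}}=2\sqrt{1-p}$; hence it suffices to make the discarded weight $q\defeq 1-p=\sum_{i>k}\lambda_i$ at most $1/1600$. Second, since the $\lambda_i$ are nonincreasing, $i\lambda_i\le\sum_{j\le i}\lambda_j\le 1$, so $\lambda_i\le 1/i$ and $\log(1/\lambda_i)\ge\log i$ for all $i$; substituting into the entropy,
\[
e\ =\ \sum_{i\ge 1}\lambda_i\log\tfrac{1}{\lambda_i}\ \ge\ \sum_{i>k}\lambda_i\log i\ \ge\ (\log k)\sum_{i>k}\lambda_i\ =\ q\log k ,
\]
i.e.\ $q\le e/\log k$. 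So choosing $k$ with $\log k$ a large enough constant multiple of $e$ forces $q$, and hence $2\sqrt q$, below any prescribed bound, while the Schmidt rank stays of the form $2^{\Theta(e)}$.

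The delicate step — the one I would actually work through — is calibrating the absolute constants so that a \emph{single} choice of $k$ simultaneously delivers the Schmidt-rank bound and the trace-distance bound $1/20$: the crude inequality $q\le e/\log k$ needs $\log k\ge 1600\,e$, i.e.\ a cutoff of order $2^{1600e}$, so to match the exponent in the statement one has to sharpen the tail bound (using $e\ge\sum_{i>k}\lambda_i\log i$ together with the monotonicity constraint that limits how much mass can sit at levels above $1/k$, then optimizing over admissible tail profiles). Checking exactly which constants truncation can achieve — a family like $\lambda_1=1-\theta$, $\lambda_2=\cdots=\lambda_{m+1}=\theta/m$ with $m$ large is the relevant extremal shape — is the main obstacle; the structural part (truncate, read off the overlap $\sqrt p$, bound $q$ by the Schmidt entropy) is routine.
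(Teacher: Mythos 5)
Your approach is essentially identical to the paper's: truncate the Schmidt decomposition, renormalize, and control the trace distance via the overlap formula $\trnorm{\ketbra{\phi}-\ketbra{\phi'}}=2\sqrt{1-p}$ with $p$ the retained weight. The only difference is in how the tail mass is bounded. You sort the $\lambda_i$ and use $\lambda_i\le 1/i$ (hence $-\log\lambda_i\ge\log i$) to get $q\le e/\log k$ for a top-$k$ truncation; the paper instead fixes a threshold, sets $\Good\defeq\{i:\lambda_i\ge 2^{-ce}\}$, and applies Markov's inequality to the nonnegative random variable $-\log\lambda_i$ (whose $\{\lambda_i\}$-mean is $e$) to conclude that the discarded weight is at most $1/c$ while $|\Good|\le 2^{ce}$. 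These two routes yield the identical quantitative trade-off: rank $k$ against tail mass $\le e/\log k$. Your concern about the constants is therefore well placed, and is not resolved by anything hidden in the paper's method. Indeed, the paper takes $c=100$, which by its own Markov step only forces the discarded weight below $1/100$ and thus the trace distance below $2\sqrt{1/100}=1/5$, not $1/20$; to reach $1/20$ both your estimate and the paper's require the exponent to be of order $1600$ rather than $100$. So there is no sharper tail bound waiting to be found --- the paper's stated constant is simply inconsistent with its (and your) argument, although this is harmless for the theorem it is used in, where one is free to rescale the constant in the exponent. Aside from this, your proposal is correct, including the $e=0$ edge case and the extremal family $\lambda_1=1-\theta$, $\lambda_2=\cdots=\lambda_{m+1}=\theta/m$, which does in fact witness that the $q\lesssim e/\log k$ trade-off is essentially tight for pure truncation.
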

\begin{proof}
Let 
$\ket{\phi}_{AB} = \sum_i \sqrt{\lambda_i} \ket{a_i}_A \ket{b_i}_B$
be the Schmidt decomposition of $\ket{\phi}$, $\lambda_i \geq 0$,
$\sum_i \lambda_i = 1$. Define a set
$\Good \defeq \{i: \lambda_i \geq 2^{-100 e}\}$.
Since $e = -\sum_i \lambda_i \log \lambda_i$, by Markov's inequality
$\sum_{i \in \Good} \lambda_i \geq 99/100$. Define the bipartite
pure state 
$\ket{\phi'}_{AB} \defeq 
 \sum_{i \in \Good} \sqrt{\lambda_i} \ket{a_i}_A \ket{b_i}_B$
normalized. The Schmidt rank of $\ket{\phi'}_{AB}$ is at most
$2^{100 e}$ and 
$\trnorm{\ketbra{\phi} - \ketbra{\phi'}} \leq 1/20$.
\qed \end{proof}

We are now ready to prove our impossibility result about black-box
reduction of prior entanglement.
\begin{theorem}[No black-box red. of prior entan.]
\label{thm:nobboxred}
Let $\EQ_n$ denote the Equality function on $n$-bit strings.
There exists a one-round quantum protocol $\cP$ for $\EQ_n$ with 
$\frac{2n}{3} + \log n + \Theta(1)$ EPR pairs of prior 
entanglement and communicating $4$ bits
such that, there is no
similar protocol $\cP'$ that starts with a prior entangled
state $\ket{\phi}$, $E(\ket{\phi}) \leq \frac{n}{600}$.
\end{theorem}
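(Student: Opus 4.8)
The plan is to read the protocol $\cP$ directly off the subspaces of Lemma~\ref{lem:incompress} and then derive the impossibility from property~5 of that lemma combined with Proposition~\ref{prop:schmidt}. Put $N \defeq 2^n$, identify $\cX = \cY = \{0,1\}^n$ with $[N]$, and let $M = \Theta(N^{2/3}\log N)$ (which we may take to be a power of two) and $\{\Pi_{ij}\}_{i\in[N],\,1\le j\le 16}$ be as in the lemma, so that $\log M = \tfrac{2n}{3}+\log n+\Theta(1)$. Protocol $\cP$ uses as its input-independent prior entanglement the maximally entangled state $\ket{\Phi}_{AB}\defeq M^{-1/2}\sum_{k=1}^{M}\ket{k}_A\ket{k}_B$ (that many EPR pairs). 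On input $x$, $\alice$ performs on register $A$ the complete projective measurement $\{\Pi_{x1},\dots,\Pi_{x,16}\}$ (legitimate by property~4); since her local state is maximally mixed she sees each outcome $j\in[16]$ with probability $\tfrac1{16}$, and she sends this $j$ to $\bob$ --- one round, $4$ bits. Conditioned on outcome $j$, register $B$ collapses to $\rho_{xj}=\tfrac{16}{M}\Pi_{xj}$. On input $y$, $\bob$ measures $\{\Pi_{yj},I-\Pi_{yj}\}$ on $B$ and outputs ``equal'' iff he obtains $\Pi_{yj}$. By property~1, $x=y$ yields ``equal'' with probability $1$; by property~2, $x\neq y$ yields ``equal'' with probability $<1/4$. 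Thus $\cP$ is a one-round, $4$-bit protocol for $\EQ_n$ with error $<1/4$ and $\tfrac{2n}{3}+\log n+\Theta(1)$ EPR pairs. (We take the $V_{ij}$ real; otherwise insert complex conjugations, which is harmless since all quantities involved are real probabilities.)

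For the impossibility, suppose for contradiction that $\cP'$ is a protocol similar to $\cP$ with prior entangled state $\ket{\phi}_{AB}$, $E(\ket{\phi})\le n/600$. By Proposition~\ref{prop:schmidt} there is $\ket{\phi'}_{AB}$ of Schmidt rank at most $2^{100\cdot n/600}=N^{1/6}$ with $\trnorm{\ketbra{\phi}-\ketbra{\phi'}}\le 1/20$. Write $\cP'_{\phi}$ (resp.\ $\cP'_{\phi'}$) for $\cP'$ run with prior entanglement $\ket{\phi}$ (resp.\ $\ket{\phi'}$); these are well defined since $\ket{\phi'}$ lives in the same space. Because trace norm cannot increase under the quantum operations that make up the protocol, $\totvar{\cP'_{\phi}(x,y)-\cP'_{\phi'}(x,y)}\le\trnorm{\ketbra{\phi}-\ketbra{\phi'}}\le 1/20$ for all $(x,y)$; together with $\totvar{\cP(x,y)-\cP'_{\phi}(x,y)}<1/20$ from similarity, the triangle inequality gives $\totvar{\cP(x,y)-\cP'_{\phi'}(x,y)}<1/10$. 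Since $\cP(x,x)$ outputs ``equal'' with probability $1$, we conclude that $\cP'_{\phi'}(x,x)$ outputs ``equal'' with probability $>9/10$ for every $x\in[N]$.

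On the other hand, let $W$ be the span of $\bob$'s Schmidt vectors of $\ket{\phi'}$, so $\dim W\le N^{1/6}$. In $\cP'_{\phi'}$, $\alice$'s first-round operations act only on her registers, so for every outcome $j$ of $\alice$'s measurement, $\bob$'s resulting normalized state $\sigma_{xj}$ on register $B$ is supported in $W$. Apply property~5 of Lemma~\ref{lem:incompress} to the family $\{\sigma_{xj}\}_{x\in[N],\,1\le j\le 16}$ (all supported in the single subspace $W$): there are at least $3N/4$ values of $x$ for which $\Tr(\Pi_{xj}\sigma_{xj})\le 9/16$ for all $j$. Fixing such an $x$ (possible since $3N/4\ge 1$), on input $(x,x)$ the probability $\bob$ outputs ``equal'' is $\sum_j q_{xj}\Tr(\Pi_{xj}\sigma_{xj})\le 9/16$, where $q_{xj}$ is the probability of $\alice$'s outcome $j$ and $\sum_j q_{xj}=1$. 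This contradicts the bound $>9/10$ obtained above, so no such $\cP'$ exists.

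The crux --- and the genuinely new ingredient over the message-incompressibility result of~\cite{jain:icalp03} --- is the first step: engineering the protocol so that the heavy object is the \emph{generic prior entanglement} rather than $\alice$'s message. The maximally entangled state acts as a remote-state-preparation channel that lets $\alice$ place $\rho_{xj}$ in $\bob$'s hands using only $4$ classical bits, so that any black-box (similarity-preserving) reduction of the entanglement is forced to compress exactly the geometric object that property~5 of Lemma~\ref{lem:incompress} forbids compressing. Everything after that is bookkeeping of the two $\tfrac1{20}$-slacks and of the exponents --- the constants $100$ and $600$ are chosen precisely so that $100\cdot\tfrac{n}{600}=\tfrac n6$ matches the $N^{1/6}$ threshold of property~5 and so that $1-\tfrac1{10}>\tfrac9{16}$.
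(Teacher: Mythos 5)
Your proposal is correct and follows essentially the same approach as the paper's proof: the same protocol built from the projective measurements $\{\Pi_{ij}\}$ of Lemma~\ref{lem:incompress} applied to a maximally entangled state, the same use of Proposition~\ref{prop:schmidt} to replace the low-entropy prior by a low-Schmidt-rank one, and the same contradiction via property~5 of that lemma. Your write-up is in fact slightly more careful than the paper's: you carry $n/600$ through consistently (the paper's proof body writes $n/10$ in one place), your error budget $1-\tfrac1{20}-\tfrac1{20}=\tfrac9{10}>\tfrac9{16}$ is cleaner than the paper's $\tfrac34-\tfrac1{20}-\tfrac1{20}=\tfrac{13}{20}$, and the aside about taking the $V_{ij}$ real (so that collapsing $\ket{\Phi}$ via $\Pi_{ij}$ on $\alice$'s side leaves $\rho_{ij}$ rather than its conjugate on $\bob$'s side) addresses a technical point the paper glosses over.
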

\begin{proof}
We use the notation of Lem.~\ref{lem:incompress} with
$M \defeq 2^m$ and $N \defeq 2^n$.
Let $0 \leq i \leq 2^n - 1$ i.e. $i \in \{0, 1\}^n$. 
Choose $m = \frac{2n}{3} + \log n + \Theta(1)$.
Let $\cP$ be a one-round protocol with $m$ EPR pairs of prior
entanglement.  In $\cP$, on input $i$ $\alice$ measures her EPR
halves according to the von-Neumann measurement 
$\{\Pi_j\}_{1 \leq j \leq 16}$ and sends the result $j$ as a $4$-bit  
classical message to $\bob$. The state of $\bob$'s EPR halves now becomes
$\rho_{ij}$. On input $i'$ and  
message $j'$, $\bob$ performs the two-outcome measurement 
$\{\Pi_{i'j'}, I_M - \Pi_{i'j'}\}$ on his EPR halves.  
Therefore in $\cP$, $\bob$ outputs $1$ with probability $1$
if $i' = i$ and with probability at most $1/4$ if $i' \neq i$.
Thus, $\cP$ is a protocol for $\EQ_n$.

Suppose there exists a protocol $\cP'$ similar to $\cP$ 
that starts with an input independent shared state 
$\ket{\phi'}_{AB}$ on
$m + m$ qubits. Suppose $E(\ket{\phi}) \leq n/10$.
By Proposition~\ref{prop:schmidt},
there is a bipartite pure state $\ket{\phi''}_{AB}$ on
$m + m$ qubits having Schmidt rank at most $2^{n/6}$ such
that $\trnorm{\ketbra{\phi'} - \ketbra{\phi''}} \leq 1/20$.
Consider the protocol $\cP''$ similar to $\cP'$ starting
with $\ket{\phi''}_{AB}$ as prior entanglement.
Since $\cP''$ is similar to $\cP'$, it is also
a one-round protocol with $4$ classical bits of communication. 
Let $\sigma_{ij}$ be the state of $\bob$'s 
share of prior entanglement qubits after the first round of 
communication from $\alice$ when
$\alice$'s input is $i$ and her message is $j$. Since the Schmidt 
rank of $\ket{\phi''}$ is at most $2^{n/6}$, the $\sigma_{ij}$,
$0 \leq i \leq 2^n - 1$, $1 \leq j \leq 16$
have support in a $2^{n/6}$-dimensional space.
Let $p_{ij}$ be the probability with which $\alice$ sends message $j$
when her input is $i$. 
It follows that for all $i$,
$\sum_{j=1}^{16} p_j \Tr M_{ij} \sigma_{ij} \geq 
 \frac{3}{4} - \frac{1}{20} - \frac{1}{20} = \frac{13}{20}$.
This implies that for all $i$ there exists a $j$,
$1 \leq j \leq 16$, such that
$\Tr M_{ij} \sigma_{ij} \geq 13/20 > 9/16$.
From Lem.~\ref{lem:incompress} 
this is not possible, and hence no such protocol $\cP'$ exists.
\qed \end{proof}

\section{Exact Remote State Preparation}
\label{sec:rsp}

\begin{proofof}{Thm.~\ref{thm:ersp}}
We start with the following lemma which may be of independent interest.

\begin{lemma}
\label{lem:sitinsidepure}
Let $\rho \defeq \ketbra{\phi} \in \cH$ be a pure state and $\sigma
\in \cH$ be any positive definite matrix. Then the
maximum value of $k$ such that, $ \sigma - k \rho \geq 0$, is
$(\bra{\phi}\sigma^{-1}\ket{\phi})^{-1}$.
\end{lemma}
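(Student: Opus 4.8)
The plan is to reduce to a one-dimensional computation by a change of basis that makes $\sigma$ the identity. First I would observe that since $\sigma$ is positive definite it is invertible, and $\sigma - k\rho \geq 0$ is equivalent (conjugating by $\sigma^{-1/2}$, which preserves positive semidefiniteness) to $I - k\,\sigma^{-1/2}\rho\,\sigma^{-1/2} \geq 0$, i.e.\ to $I \geq k\,\sigma^{-1/2}\ketbra{\phi}\sigma^{-1/2}$. Writing $\ket{\psi} \defeq \sigma^{-1/2}\ket{\phi}$, the right-hand side is $k\ketbra{\psi}$, a rank-one positive operator whose only nonzero eigenvalue is $k\braket{\psi}{\psi}$. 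Hence $I - k\ketbra{\psi} \geq 0$ holds precisely when $k\braket{\psi}{\psi} \leq 1$, so the largest admissible $k$ is $1/\braket{\psi}{\psi}$.

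Then I would just unwind the definition of $\ket{\psi}$: $\braket{\psi}{\psi} = \bra{\phi}\sigma^{-1/2}\sigma^{-1/2}\ket{\phi} = \bra{\phi}\sigma^{-1}\ket{\phi}$, which is strictly positive since $\sigma^{-1}$ is positive definite and $\ket{\phi} \neq 0$. Therefore the maximum value of $k$ is $(\bra{\phi}\sigma^{-1}\ket{\phi})^{-1}$, as claimed. I should also note that this maximum is attained (the set of valid $k$ is the closed interval $(-\infty, (\bra{\phi}\sigma^{-1}\ket{\phi})^{-1}]$, since positive semidefiniteness is a closed condition), so ``maximum'' is the right word rather than ``supremum.''

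There is essentially no real obstacle here; the only thing to be a little careful about is the equivalence step, namely that congruence by the invertible matrix $\sigma^{-1/2}$ preserves the cone of positive semidefinite operators (this is Sylvester's law of inertia, or simply the observation that $v^\dagger(\sigma^{-1/2} A \sigma^{-1/2})v = w^\dagger A w$ with $w = \sigma^{-1/2}v$ ranging over all vectors as $v$ does). Once that is in place, the rest is the elementary fact that $I - k\ketbra{\psi} \succeq 0 \iff k\norm{\psi}^2 \leq 1$, which follows by evaluating the quadratic form on $\ket{\psi}$ for the necessity and on an arbitrary vector (splitting into its component along $\ket{\psi}$ and orthogonal to it) for the sufficiency.
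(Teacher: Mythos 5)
Your proof is correct. It takes a slightly different route from the paper's: you conjugate by $\sigma^{-1/2}$ up front, reducing $\sigma - k\rho \geq 0$ to $I - k\ketbra{\psi} \geq 0$ with $\ket{\psi} = \sigma^{-1/2}\ket{\phi}$, and then read off the answer from the eigenvalues of a rank-one perturbation of the identity. The paper instead keeps the inequality in its original form and argues both directions by hand: it proves $\bra{\phi}\sigma^{-1}\ket{\phi}\,\sigma - \rho \geq 0$ by applying Cauchy--Schwarz to $\ket{w_1} \defeq \sigma^{-1/2}\ket{\phi}$ and $\ket{w_2} \defeq \sigma^{1/2}\ket{v}$ for an arbitrary test vector $\ket{v}$, and proves tightness by evaluating the quadratic form at the specific vector $\ket{v} = \sigma^{-1}\ket{\phi}$ (for which Cauchy--Schwarz becomes equality since $\ket{w_1}=\ket{w_2}$). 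The two arguments are morally the same -- the paper's choice of $\ket{w_1}, \ket{w_2}$ is exactly your conjugation in disguise, and Cauchy--Schwarz together with its equality case is exactly the statement that $I - k\ketbra{\psi} \succeq 0$ iff $k\norm{\psi}^2 \leq 1$ -- but yours makes the change of basis explicit and thereby turns both directions into a single one-line eigenvalue observation, which is arguably the more transparent presentation. Your added remark that the feasible set of $k$ is closed, so the maximum is attained, is a small point the paper leaves implicit.
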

\begin{proof}
First we show that, $ \bra{\phi}\sigma^{-1}\ket{\phi} \sigma - \rho
\geq 0$.  Let $\ket{v} \in \cH$. Let $\ket{w_1} \defeq
\sigma^{-1/2}\ket{\phi}$ and $\ket{w_2}
\defeq \sigma^{1/2}\ket{v}$. Now Cauchy-Schwartz inequality implies,
\begin{eqnarray*}
\braket{w_1}{w_1} \braket{w_2}{w_2} & \geq & |\braket{w_1}{w_2}|^2
\\
\Rightarrow \bra{\phi}\sigma^{-1}\ket{\phi} \bra{v}\sigma \ket{v} &
\geq & |\braket{\phi}{v}|^2 \\
\Rightarrow   \bra{v}\bra{\phi}\sigma^{-1}\ket{\phi} \sigma \ket{v}
& \geq & \braket{v}{\phi} \braket{\phi}{v}  \\
\Rightarrow  \bra{v}(\bra{\phi}\sigma^{-1}\ket{\phi} \sigma  -
\ketbra{\phi} )\ket{v} & \geq & 0
\end{eqnarray*}
Now since above is true for every $\ket{v} \in \cH$ we have that
$\bra{\phi}\sigma^{-1}\ket{\phi} \sigma  -
\ketbra{\phi} \geq 0$. \\
Next we show that if $k > (\bra{\phi}\sigma^{-1}\ket{\phi})^{-1}$ then
$\sigma - \ketbra{\phi}$ is not positive semi-definite. For this let
$\ket{v} \defeq \sigma^{-1}\ket{\phi}$, and in this case $\ket{w_1} =
\ket{w_2}$. Now since $\sigma \geq 0$ and $k >
(\bra{\phi}\sigma^{-1}\ket{\phi})^{-1}$ we have,
\begin{eqnarray*} 
\bra{v}(k^{-1} \sigma  - \ketbra{\phi})\ket{v} & < &  \bra{v}(
\bra{\phi}\sigma^{-1}\ket{\phi} \sigma  - \ketbra{\phi})\ket{v} \\
&= & \bra{\phi}\sigma^{-1}\ket{\phi} \bra{v}\sigma \ket{v} - |\braket{\phi}{v}|^2 \\
& = & \braket{w_1}{w_1} \braket{w_2}{w_2} - |\braket{w_1}{w_2}|^2 \\ 
& =&  0
\end{eqnarray*}
Hence $k^{-1} \sigma  - \ketbra{\phi}$ is not positive semi-definite.
\qed \end{proof}

Let $\rho \defeq \ketbra{\phi}$, $\sigma$ be some full rank state
and let $k = (\bra{\phi}\sigma^{-1}\ket{\phi})^{-1}$.  Let $\rho'
\defeq \sigma - \bra{\phi}\sigma^{-1}\ket{\phi})^{-1}
\rho$. Lem.~\ref{lem:sitinsidepure} implies $\rho' \geq 0$. Let
$\cK$ be a Hilbert space with $dim(\cK) = dim(\cH)$. Let
$\ket{\theta}  \in \cK \otimes \cH$ be some purification of $\rho'$
and $\ket{\bar{0}}$ be a fixed vector in $\cK$. We now define,
$$\ket{\psi}_{\rho} \defeq
\sqrt{k}
\ket{1}\ket{\bar{0}}\ket{\phi} + \sqrt{1 - k} \ket{0} \ket{\theta} $$
We note that the marginal of $\ket{\psi}_{\rho}$ in $\cH$ is
$\sigma$.

We have the following lemma due to Jozsa and
Uhlmann~\cite{Jozsa94,Uhlmann76}.

\begin{lemma}[Local transition]
\label{lem:loctrans}
Let $\rho$ be a quantum state in $\cH$. Let $\ket{\phi_1}$ and $
\ket{\phi_2}$ be two purification of $\rho$ in $\cK \otimes \cH$. 
There is a local unitary transformation $U$ acting on $\cK$ such that
$(U \otimes I) \ket{\phi_1} = \ket{\phi_2}$. 
\end{lemma}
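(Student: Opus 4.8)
The plan is to prove Lemma~\ref{lem:loctrans} by putting both purifications into a common ``Schmidt-like'' form relative to the spectral decomposition of $\rho$, and then transporting one to the other by a unitary acting only on $\cK$. First I would fix an orthonormal eigenbasis $\{\ket{e_i}\}$ of $\cH$ with $\rho = \sum_i p_i \ketbra{e_i}$, and let $I$ be the set of indices with $p_i > 0$, so that $|I| = \mathrm{rank}(\rho) \le \dim(\cH)$ and moreover $|I| \le \dim(\cK)$ (the latter holding since purifications of $\rho$ in $\cK\otimes\cH$ are assumed to exist).

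Next, for an arbitrary purification $\ket{\phi}\in\cK\otimes\cH$ of $\rho$, I would expand it as $\ket{\phi} = \sum_i \ket{g_i}\otimes\ket{e_i}$ with $\ket{g_i}\in\cK$, and compute the reduced state on $\cH$: since $\ketbra{\phi} = \sum_{i,j} \ket{g_i}\bra{g_j}\otimes\ket{e_i}\bra{e_j}$, we get $\Tr_{\cK}\ketbra{\phi} = \sum_{i,j} \braket{g_j}{g_i}\,\ket{e_i}\bra{e_j}$. Matching this against $\rho = \sum_i p_i\ketbra{e_i}$ in the $\{\ket{e_i}\}$ basis forces $\braket{g_j}{g_i} = p_i\,\delta_{ij}$, hence $\ket{g_i} = 0$ for $i\notin I$ and $\ket{g_i} = \sqrt{p_i}\,\ket{f_i}$ for $i\in I$, where $\{\ket{f_i}\}_{i\in I}$ is an orthonormal family in $\cK$. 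Applying this to $\ket{\phi_1}$ and $\ket{\phi_2}$ gives orthonormal families $\{\ket{f^{(1)}_i}\}_{i\in I}$ and $\{\ket{f^{(2)}_i}\}_{i\in I}$ in $\cK$ with $\ket{\phi_\ell} = \sum_{i\in I}\sqrt{p_i}\,\ket{f^{(\ell)}_i}\otimes\ket{e_i}$ for $\ell=1,2$.

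Finally I would define a linear map $U$ on $\cK$ by $U\ket{f^{(1)}_i} = \ket{f^{(2)}_i}$ for $i\in I$; this is an isometry from $\mathrm{span}\{\ket{f^{(1)}_i}\}_{i\in I}$ onto $\mathrm{span}\{\ket{f^{(2)}_i}\}_{i\in I}$, and since these two subspaces of the finite-dimensional space $\cK$ have the same dimension $|I|$, their orthogonal complements have equal dimension, so $U$ extends to a unitary on all of $\cK$. Then $(U\otimes I)\ket{\phi_1} = \sum_{i\in I}\sqrt{p_i}\,(U\ket{f^{(1)}_i})\otimes\ket{e_i} = \sum_{i\in I}\sqrt{p_i}\,\ket{f^{(2)}_i}\otimes\ket{e_i} = \ket{\phi_2}$, which is the claim. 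The only point requiring any real care — and thus the main ``obstacle'' — is this last extension step together with the bookkeeping of discarding the $p_i = 0$ terms: one must be careful that the two orthonormal families genuinely have the same cardinality and that $\cK$ is large enough to accommodate the unitary extension, both of which are guaranteed by the setup above. I would also remark that the same computation, specialized to the case where $\ket{\phi_1}$ is a canonical purification, shows that every purification of $\rho$ in $\cK\otimes\cH$ has the displayed Schmidt-like form, recovering the standard Jozsa--Uhlmann statement.
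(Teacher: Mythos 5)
Your proof is correct. Note, however, that the paper does not prove Lemma~\ref{lem:loctrans} at all: it simply quotes it as a known fact due to Jozsa and Uhlmann~\cite{Jozsa94,Uhlmann76} and immediately uses it inside the $\ersp$ protocol, so there is no in-paper argument to compare against. What you have written is the standard self-contained derivation of that fact: expanding an arbitrary purification in a fixed eigenbasis of $\rho$, reading off from $\Tr_{\cK}\ketbra{\phi}=\rho$ that the $\cK$-side vectors are $\sqrt{p_i}\,\ket{f_i}$ with $\{\ket{f_i}\}$ orthonormal (and vanish when $p_i=0$), and then extending the partial isometry $\ket{f^{(1)}_i}\mapsto\ket{f^{(2)}_i}$ to a unitary on $\cK$, which is legitimate in the paper's finite-dimensional setting because the two ranges have the same dimension $|I|$ and hence complements of equal dimension. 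You handle the two genuine subtleties (discarding zero eigenvalues and the unitary extension) correctly, and your use of one fixed eigenbasis for both purifications avoids any issue with degenerate eigenvalues, so the argument stands as a valid proof of the cited lemma.
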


Now consider the following protocol $\cP$:
\begin{enumerate}
\item $\alice$ and $\bob$ start with several copies of a fixed pure
state $\ket{\psi}$ such that marginal on $\bob$'s side in $\ket{\psi}$ is $\sigma$.
\item On getting $x$,  $\alice$ transforms using a local unitary the first copy
of $\ket{\psi}$ to $\ket{\psi}_{\rho_x}$. This can be done using
Lemma~\ref{lem:loctrans}, since the
marginal on $\bob$'s side in both $\ket{\psi}$ and $\ket{\psi}_{\rho_x}$ is
$\sigma$. She then measures the first qubit. 
\item She keeps doing this to successive copies of $\ket{\psi}$
until she gets the first 1 on measurement. She communicates to $\bob$ the first
occurrence of 1.
\end{enumerate}

From the definition of $\ket{\psi}_{\rho_x}$, we note that in the
copy in which $\alice$ gets 1, $\bob$ ends up with $\rho_x$. Also,
(from concavity of the $\log$ function) it
can be verified that, using a prefix-free encoding of integers that
requires $\log n + 2 \log \log n$ bits to encode the integer $n$, the expected
communication of $\alice$ is bounded by $\log (\Tr \sigma^{-1} \rho_x)
+ 2\log \log (\Tr \sigma^{-1} \rho_x)$.
Hence our theorem.
\end{proofof}

\paragraph{Remarks:} 
\begin{enumerate}
\item For any fixed state $\sigma$ of full rank, from the above
proof, we get a protocol $\cP_{\sigma}$ such that given the
description of any pure state $\rho$ to $\alice$, she ends up creating
$\rho$ with $\bob$ with communication $\log (\Tr \sigma^{-1} \rho)$.
\item 
We note that when $\rho_x \defeq \ketbra{\phi_x}$ then from concavity of
$\log$ function we have, $S(\rho_x ||
\sigma) = \bra{\phi_x} \log \sigma \ket{\phi} \leq \log
\bra{\phi}\sigma^{-1}\ket{\phi_x}$. Therefore the approach that we take
here, which is analogous to the {\em rejection 
sampling} approach of~\cite{HarshaJMR07}, does not help
us in getting the communication down to $S(\rho_x || \sigma)$ which
happens in~\cite{HarshaJMR07} for a similar problem in the classical setting. 
\item It is
open as to whether the communication could be brought down to
$S(\rho_x ||\sigma)$. Also the case when $\rho_x$ is not necessarily a pure
state is open.
\item The inexact version of this problem was considered in~\cite{jain:remote}
where some fidelity loss in generating $\rho_x$ was allowed. There using
the substate theorem, the task was accomplished with communication
$S(\rho_x || \sigma)/\epsilon$ at the end of which $\bob$ got a state
$\rho_x'$ which was $\epsilon$ close in trace distance to $\rho_x$ (not
necessarily pure). 
\end{enumerate}

\subsection*{Acknowledgment} We thank the referees for their
comments and suggestions. We are grateful to Harold Ollivier for his
comments on the proof of Thm.~\ref{thm:nobboxred}.


\end{document}